\newif\iflong \longfalse
\newcommand{\DSb}{}
\newcommand{\DSe}{}
\newcommand{\Db}{}
\newcommand{\De}{}
\newcommand{\XXb}{}
\newcommand{\XXe}{}
\newcommand{\Xb}{}
\newcommand{\Xe}{}
\newcommand{\clx}[1]{}
\newcommand{\Xab}{}
\newcommand{\Xae}{}
\newcommand{\XXab}{}
\newcommand{\XXae}{}
\newcommand{\crossDS}{{\tt{X}}}
\renewcommand{\equiv}{=}
\newcommand{\para}{|\,}
\newcommand{\fosub}[2]{\{#1/#2\} }
\newcommand{\hosub}[2]{\{#1/#2\} }
\newcommand{\vect}[1]{{#1_1,#1_2,...,#1_n} }
\newcommand{\ve}[1]{\widetilde{#1}}
\newcommand{\seq}[1]{\ve{#1}}  %
\newcommand{\stm}[1]{{\,\xrightarrow{#1}} }
\newcommand{\wt}[1]{{\,\xLongrightarrow{#1}} }
\newcommand{\rc}[1]{{\color{red} #1}}
\newcommand{\bc}[1]{{\color{blue} #1}}
\newcommand{\da}{\!\!\downarrow}
\newcommand{\Da}{\!\!\Downarrow}
\newcommand{\sep}{\vspace*{0.7cm}}
\newcommand{\sepp}{\vspace*{0.3cm}}
\newcommand{\nsepv}[1]{\vspace{0mm}}
\newcommand{\rb}[1]{\raisebox{3.2ex}[0pt]{#1}}
\newcommand{\lds}{[\![}
\newcommand{\rds}{]\!]}
\newcommand{\encodingm}[3]{ \lds #1 \rds^{#2}_{#3}}
\newcommand{\backs}[1]{\!\setminus\!#1\!}
\newcommand{\DEF}{\stackrel{{\rm{def}}}{=}} 
\newcommand{\lrangle}[1]{\langle #1 \rangle} 
\newcommand{\shr}{\mathrel{\stm{}_{{\rm h}}}} 
\newcommand{\whr}{\mathrel{\wt{}_{{\rm h}}}} 
\newcommand{\BISI}{\approx  } 
\newcommand{\BSTA}{\approx_{\rm{bb}}^{\rm{lin,asy}} } 
\newcommand{\BCTA}{\approx_{\rm{bc}}^{\rm{lin,asy}} } 
\newcommand{\CTTA}{\approx_{\rm{ct}}^{\rm{lin,asy}} } 
\newcommand{\subtp}{\trianglelefteq }  
\newcommand{\TPEQ}{{\sim_{type}} } 
\newcommand{\contrdiv}{^{\Uparrow}\!\!\!\contr} 
\newcommand{\conref}[1]{{(\ref{#1})}} 
\newcommand{\defref}[1]{{\ref{#1}}} 
\newcommand{\hardcode}[1]{} 
\newcommand{\supp}[1]{\mathbf{supp}(#1) }
\newcommand{\afig}{Figure } 
\newcommand{\xxstress}[1]{{#1}}
\newcommand{\xx}[1]{}
\newenvironment{xxenv}{\!\!}{\!\!}
\newcommand{\xxx}[1]{#1}
\newcommand{\remxx}[1]{{\Large \xx{#1}}}
\newif\ifxxremark \xxremarkfalse 
\newcommand{\finish}[1]{}
\newcommand{\uptoc}{up-to-$\leq$-and-contexts}
\newcommand{\uptoE}{up-to-$\expa$-and-contexts}
\newcommand{\uptoEdiv}{up-to-$\expa^\Div$-and-contexts}
\newcommand{\expaDiv}{\mathrel{\expa^\Div}}
\newcommand{\contrDiv}{\mathrel{\mbox{}^\Div\!\!\contr}}
\newcommand{\leqNAT}{\leqslant}
\newcommand{\geqNAT}{\geqslant}
\def\RR{\mathrel{\mathcal{R}}} 
\theoremstyle{plain}
\newtheorem{theorem}{Theorem}[section]
\newtheorem{proposition}[theorem]{Proposition}
\newtheorem{lemma}[theorem]{Lemma}
\theoremstyle{definition}
\newtheorem{definition}[theorem]{Definition}
\newtheorem{example}[theorem]{Example}
\theoremstyle{remark}
\begin{document}

\title{Trees from Functions as Processes}

\author{Davide Sangiorgi}
\address[]{Universit{\`a} di Bologna (Italy) and INRIA (France) }
\author{Xian Xu}
\address{East China University of Science and Technology (China)}

\thanks{This work has been supported by project ANR 12IS02001 `PACE', NSF of China (61261130589),  and partially supported by NSF of China (61702334, 61772336, 61572318, 61472239, 61872142).} 

\keywords{...} 
\subjclass{.... } 

\titlecomment{The paper is an expanded version of work presented at the CONCUR conference, LNCS 8704:78-92, 2014.}

\begin{abstract}
\LLTN s and B\"{o}hm
Trees are the best known tree structures on the $\lambda$-calculus.
We give general conditions under which an encoding of the
$\lambda$-calculus into the $\pi$-calculus is sound and complete with
respect to
such trees.
We  apply these conditions to various encodings of the call-by-name
$\lambda$-calculus, showing how
the two kinds of  tree can be obtained by varying the behavioural
equivalence adopted in the $\pi$-calculus and/or the encoding. 
\end{abstract}

\maketitle


\section{Introduction}\label{sec-intro}

The $\pi$-calculus is a well-known model of computation with
processes. Since its introduction, its comparison with the
$\lambda$-calculus has received a lot of attention.
Indeed,
a deep comparison  between a process calculus and
 the   $\lambda$-calculus  is interesting for several reasons:
it is  a
significant
test of   expressiveness,  and    helps  in
 getting  deeper insight  into
its theory.
From the $\lambda$-calculus perspective,
it  provides the means  to study $\lambda$-terms
 in    contexts  other  than  purely  sequential ones,
 and with the  instruments available  in
the process calculus.
A more practical motivations for   describing functions as processes
is  to
provide  a semantic foundation  for  languages which  combine
concurrent and functional programming  and to develop   parallel
implementations of functional languages.

Beginning with Milner's seminal work \cite{Mil92}, a number of
 $\lambda$-calculus strategies have been encoded into the
$\pi$-calculus, including call-by-name, strong call-by-name (and
call-by-need variants),  call-by-value, parallel call-by-value (see \cite[Chapter~15]{SW01a}).
In each case, several variant
 encodings have  appeared, by
varying the target language or  details of the encoding itself,
\DSb
see \cite[Part VI]{SW01a} for
details.
\DSe
Usually, when an encoding is given,
 a few  basic results about its correctness are
 established, such as  operational correctness and validity
of reduction (i.e., the property that the encoding of a $\lambda$-term
and the encoding of a 
\xxx{reduct}
of it are behaviourally undistinguishable).
Only in a few cases the question of the equality on $\lambda$-terms
induced by the encoding
 has been tackled, e.g.,
 \cite{San95lazy,San93d,SW01a,BergerHY05,DemangeonH11,BergerHY01};
\DSb
in \cite{San95lazy,SW01a} for encodings of call-by-name and with
respect to the ordinary  bisimilarity of the $\pi$-calculus, in
 \cite{BergerHY05,DemangeonH11,BergerHY01} for various forms of
$\lambda$-calculi, including  polymorphic ones,
and with
respect to
  contextual forms of
 behavioural equivalence    enhanced with types so to
 obtain coarser relations.

In this paper, we refer to the above question
 \DSe
as the \emph{full abstraction} issue:
 for an encoding $\qenco$  of the $\lambda$-calculus into
 $\pi$-calculus,
an equality $=_\lambda$
on the $\lambda$-terms, and
 an equality   $=_\pi$ on the $\pi$-terms,
full abstraction is achieved when  for all $\lambda$-terms $M,N$ we have
$ M  =_\lambda  N $ iff
$\encoding M  =_\pi \encoding N $.
Full abstraction
 has two parts: soundness, which is the implication from
right to left, and
 completeness, which is its converse.

The equality
 $=_\lambda$ usually is not
 the ordinary Morris-style contextual equivalence on the
$\lambda$-terms:
the
 $\pi$-calculus   is richer --- and hence more
discriminating --- than the $\lambda$-calculus; the latter is purely sequential,
whereas the former can also express  parallelism and
non-determinism.  Exception to this are encodings into forms
of $\pi$-calculus equipped with rigid constraints, e.g., typing constraints,
which limit the set of legal $\pi$-calculus contexts \cite{BergerHY05,DemangeonH11,BergerHY01}.

Indeed, the interesting question here is understanding what
 $=_\lambda$ is when
$=_\pi$ is a well-known behavioural equivalence on $\pi$-terms.
This question
   essentially amounts to
 using the
encoding in order to build a $\lambda$-model, and then understanding
the $\lambda$-model itself.
While seldom tackled,
the outcomes of
this study
have been significant: for a few call-by-name encodings
 presented in \cite{SW01a}
it has been shown that, taking (weak) bisimulation on the
$\pi$-terms, then
 $=_\lambda$
 corresponds
to  a well-known tree structure in the $\lambda$-calculus theory, namely
the
 {\em \LLTN s} (LTs) \cite{SW01a}.

There is however another kind of tree structure in the
$\lambda$-calculus, even more important:
\iflong
 than
\LLTN s:
\fi
the  {\em \BTN s}
 (BTs).
BTs play a  central role in the classical theory of the
$\lambda$-calculus. The  local structure
of some of the most  influential  models of the
$\lambda$-calculus\index{lambda-calculus@$\lambda$-calculus!model of}, like Scott and Plotkin's
$P_\omega$ \cite{Sco76}, Plotkin's $T^\omega$ \cite{Plo78},
is precisely
the BT equality; and the local structure of Scott's
$D_\infty$
(historically
 the first
mathematical, i.e., non-syntactical,
  model of the untyped $\lambda$-calculus) is the equality of the
\begin{xxenv}
`infinite $ \eta$ expansions' of BTs.
\end{xxenv}
\DSb
Details on these and other models of the $\lambda$-calculus
can be found in  the comprehensive books \cite{Bar84,HiSe86}.
The full abstraction results in 
the literature for encodings of $\lambda$-calculus into $\pi$-calculus,
however, only concern LTs \cite{SW01a}.
\DSe

A major reason for the limited attention that the full abstraction
issue for encodings of $\lambda$-calculus into $\pi$-calculus has
received is that understanding what kind of
the structure the encoding produces may be difficult, and
the full abstraction proof itself is long and tedious.
The contribution of this paper is twofold:
\begin{enumerate}
\item We present general conditions for soundness and completeness
of an encoding of the $\lambda$-calculus  with respect to {both} LTs
\emph{and} BTs.
The conditions can be used both on coinductive equivalences such as
bisimilarity, and on  contextual equivalences such as may and
must equivalences \cite{San12a}.

\item We show that by properly tuning the notion of observability
  and/or the details of the encoding it is possible to recover
BTs in place of LTs.
\end{enumerate}
Some conditions only concern   the behavioural equivalence chosen for
the $\pi$-calculus, and are independent of the encoding; a few
conditions are purely syntactic (e.g., certain encoded  contexts
should be  guarded); the only behavioural conditions are equality of
$\beta$-convertible terms, equality among certain unsolvable terms, and
existence of an inverse for certain contexts resulting from the
encoding (i.e., the possibility of extracting their immediate subterms,
up-to the behavioural equivalence chosen in the $\pi$-calculus).
We use these properties to derive full abstraction results for BTs and
LTs for various encodings and various behavioural equivalence of the
$\pi$-calculus. For this we exploit a few basic properties of the
encodings,  making  a large reuse of proofs.

In the paper we  use the conditions with the
 $\pi$-calculus, but potentially they
  could also be used in other concurrency formalisms.

\noindent
\emph{Structure of the paper.} Section~\ref{sec-background}
collects background material. Section~\ref{sec-encoding-notion}
introduces the notion of encoding of the $\lambda$-calculus, and
concepts related to this. Section~\ref{ss:sounds_completes} presents
the conditions for soundness and completeness.
 Section~\ref{sec-examples-cbn} and Section~\ref{sec-examples-scbn} applies the conditions on a few encodings of
 call-by-name and  strong call-by-name from the literature, and for
 various behavioural equivalences on the $\pi$-calculus.
 Section~\ref{s:typ_asy} briefly discusses refinements of the
 $\pi$-calculus, notably with linear  types. Some conclusions are
 reported in Section~\ref{sec-conclusion}.


%
\section{Background}
\label{sec-background}

\subsection{The $\lambda$-calculus }

We use $M,N$ to range over
the set $\Lambda$ of $\lambda$-terms, and              
$x,y,z$ to range over 
variables. 
The syntax of $\lambda$-terms, and the rules for call-by-name and
strong call-by-name (where reduction may continue underneath a
$\lambda$-abstraction)
are standard \cite{Bar84}. 
The set $\Lambda$ of $\lambda$-terms is given by the grammar: 
\[M ::= x \midd \lambda x. M \midd MN 
\]
We will encode call-by-name $\lambda$-calculus, in its weak or strong
form. In both cases, we have  rules $\beta$ and $\mu$, only in the
strong case we have also $\xi$: 
\[
\begin{array}{lll}
\infer[\beta]{(\lambda x.M)M' \stm{} M\hosub{M'}{x}}{} &\qquad \infer[\xi]{\lambda x. M\stm{} \lambda x. M'}{M\stm{} M'} & \qquad \infer[\mu]{MN\stm{} M'N}{M\stm{} M'} 
\end{array}
\]

\DSb
We sometimes omit $ \lambda $ in nested abstractions, \Xab thus for example, $\lambda x_1 x_2. M$ \Xae stands
for   $ \lambda x_1. \lambda  x_2. M$. 
\DSe
We assume the standard concepts of free and bound variables and
substitutions, and
 identify $\alpha$-convertible terms.
\DSb
Thus, throughout the paper `$=$' is syntactic equality modulo
$\alpha$-conversion.
\DSe
We write $\Omega $ for the  purely divergent term 
$(\lambda  x . x x)(\lambda  x . x x )$. 
We  sometimes use $\ve{.}$ for a tuple of elements; for instance 
 $\lambda\ve{x}.M$ stands for $\lambda x_1...x_n.M$ 
 and $\ve{M}$ for $M_1M_2\cdots M_n$,  for some $n$. We write $|\ve e |$ for the
 cardinality of the  tuple $\ve e$, and ${\ve e}_i$ for 
 the $i$-th component of the tuple.

\DSe
\iflong
The best known tree structures for the $\lambda$-calculus are 
the
L\'{e}vy-Longo Trees (LTs)  and the B\"ohm trees (BTs)
 \cite{Lev75,Lon83,Bar84,Ong88,DCG01}.
The former are the lazy variant of the latter.
\fi

\DSb
In order to define L\'{e}vy-Longo trees and B\"ohm trees, 
 we need the notions of  \emph{solvability}, and
of  \emph{head reduction}, which we now introduce (see \cite{DezG01} for a thorough tutorial on such trees).
 We use $n$ to range over the set 
 of   non-negative   integers and $\omega$ to represent the first 
limit ordinal.

A $\lambda$-term is either  of the form  $\lambda \widetilde x . y \widetilde M$ 
or of the form   $\lambda \widetilde x .  (\lambda x. M_0) M_1 \ldots M_n $, \Xab $n \geqNAT 1$. \Xae 
In the latter,  the redex $(\lambda x. M_0) M_1 $  is called the {\em head
redex}. If $M$ has a head redex, then $M\shr N$   holds
  if $N$
results from $M$ by $\beta$-reducing its head redex.
{\em Head reduction},   $\whr$, is the
reflexive and transitive closure of $\shr$.
Head reduction is different from the call-by-name
  reduction ($\Longrightarrow$): a call-by-name 
redex is also a head redex, but  the converse is false
as a  head redex can  also be located  underneath an
abstraction.
The terms 
of the form 
$\lambda \widetilde x . y \widetilde M$, that is,  the terms that cannot be head-reduced, 
are the \emph{head normal forms}.
Since  head reduction is deterministic, the head normal form for a term $M$, 
that is, a head normal form $N$ such that $M \whr N$, 
\XXb if it exists, \XXe
 is unique. The terms that have head normal forms are the \emph{solvable}
terms. The remaining terms are called \emph{unsolvable}. These are the    terms in which
\Db
 head reduction never terminates. It may be however that head reductions on an
\Db unsolvable term \De
 uncover some abstractions. The number of such abstraction defines the
\emph{order of unsolvability} for that term. Formally, an unsolvable term $M$ has \emph{order of
  unsolvability $n$}, for $0\leq n < \omega$ if $n$ is the largest integer such that  
$M \whr \lambda \widetilde x .  M$, for some $M$ and $\widetilde x$ with $ |\widetilde x| = n$; the  
 unsolvable $M$ has \emph{order of
  unsolvability $\omega$} if  for all $n \geq 0 $ we have
$M \whr \lambda \widetilde x .  M$, for some $M$
 and $\widetilde x$ with $ |\widetilde x| = n$.
The unsolvable of order $\omega$ can produce unboundedly many abstractions while performing head
reductions. 

\DSe

\begin{definition}[L\'{e}vy-Longo trees and B\"ohm trees]
\label{d:LTS}\index{Levy--Longo@\LLT}
The {\em L\'{e}vy--Longo Tree} of $M \in \Lambda$ 
is the labelled tree, LT(M), defined coinductively as follows:
\begin{enumerate}\renewcommand{\labelenumi}{(\theenumi)}
\item
 $LT(M) = \top$  if
 $M$ is an unsolvable of order $\infty$;

\item 
 $LT(M) = \lambda x_1 \ldots x_n. \bot $ 
 if $M$ is an unsolvable of order
$n$;


\item 
 $LT(M) =$ tree with $\lambda \ve{x}.y$ as the root and $LT(M_1)$,...,$LT(M_n)$ as the children, if $M$  has 
head normal form 
$\lambda \widetilde{x}.yM_1 \ldots  M_n$, $n \geqNAT 0$. That is, \\
 $LT(M) = $  \\
\begin{tikzpicture}[level distance=10mm,sibling distance=5mm]
  \node {$\lambda \ve{x}.y$} [grow=down]
  child   {node{$LT(M_1)$}} 
  child[missing]   {node{2}} 
  child[missing]   {node{3}} 
  child { node {$\cdots$} edge from parent[draw=none]  node {$\cdots$}} 
  child[missing] {node{5}}
  child[missing] { node {6} }
  child {node{$LT(M_n)$}};
\end{tikzpicture}
\end{enumerate}

 Two terms $M,N$ have the same  LT if  $LT(M) = LT(N)$.  
The definition of B\"ohm trees (BTs)
 is obtained from that of LTs using BT  in place of LT  in the
 definition above, and demanding that 
$BT(M) = \bot$ whenever  $M$  is unsolvable. 
That is, \Db clauses (1) and (2)  are replaced by \De the following one:
\[
BT(M) = \bot \qquad \mbox{ if  $M$  is  unsolvable}
\]

\end{definition}



\subsection{The (asynchronous) $\pi$-calculus}
\iflong
\label{subsec:a-pi}
\fi

We first consider encodings into the \emph{asynchronous}
$\pi$-calculus because its theory is simpler
 than that of the
synchronous $\pi$-calculus (notably bisimulation does not require
closure under name instantiations and has sharper congruence
properties \cite{BoSa98tcs}) 
and because it is the usual target
language for encodings  of the
 $\lambda$-calculus.
In all  encodings
we consider, 
  the encoding of a  $\lambda$-term
 is parametric on  a name, 
that is, is   a 
 function from names  to $\pi$-calculus  
 processes. We call such expressions {\em abstractions}.
  For the purposes of this paper
 unary abstractions, i.e.,  with only one parameter, suffice.
The actual instantiation of the parameter of an abstraction $F$ is done
via the {\em application} construct $\app F a$.
 We use $P,Q$ for process, $F$ for abstractions.
Processes and abstractions  form the set  of  {\em $\pi$-agents} (or
simply \emph{agents}), ranged
over by $A$. 
Small letters 
 $a,b, \ldots, x,y, \ldots$  
range  over the infinite set of names.
\DSb
Substitutions, ranged over by $\sigma$, 
 act on names; for instance $\sub{\tilc}\tilb$ represents the 
 substituting in which the $i$-th component of  $\tilb$ is replaced by the $i$-th
 component of  $\tilc$. 
\DSe 
The grammar of the calculus is thus:
$$ \begin{array}{ccll}
A & := & P \midd F & \mbox{(agents)}\\
P & := & \nil    \midd    \inp a \tilb . P    \midd    \out a \tilb 
   \midd     
  P_1 |  P_2   \midd    \res a P   \midd  ! \inp a \tilb . P   
\midd \app F a& \mbox{(processes)}  \\
F & := & \abs a P & \mbox{(abstractions)}
   \end{array}
 $$

Since the calculus is polyadic, 
we assume a \emph{sorting system} \cite{Mil99}
   to avoid disagreements  in the arities of
the tuples of names 
carried by a given name.
We will not present the sorting system 
because it is not essential. 
\iflong
 to understand the contents of
this paper.
\fi
The reader should
take for granted that all agents described  obey  a sorting. 
\begin{xxenv}
A \emph{context}  $\qct$ of $\pi$    is a $\pi$-agent in which some
subterms have been replaced by the hole $\holem$ or, if the context is
polyadic, with indexed holes $\holei 1, \ldots, \holei n$; 
then 
 $\ct A$ or $\ct {\til A}$
 is the agent resulting from replacing the holes with the terms $A$ or
 $\til A$.
 \XXab \Xab
\Db \clx{I don't understand the following sentence in the context. Is this a definition of an abstraction (or process) context means? What is the "initial expression"? }
If the initial $\pi$-agent  was an abstraction, we call the context an \emph{abstraction $\pi$-context}; otherwise it is a \emph{process $\pi$-context}.
A hole itself may stand for an abstraction or a process.
\De
\Xae 
\XXae
A context is \emph{guarded} if the holes in it only 
appear
\DSb 
underneath
\DSe some prefix (input or output) \cite{Mil89, SW01a}; for example context $\inp a \tilb . (P | [\cdot])$ is guarded whereas $\res a (P | [\cdot])$ is not. 
A name is \emph{fresh} if it does not occur in the objects under consideration. 
\end{xxenv}
\Db In a restriction $\res b P$, inputs $a(\til b).P $ or $! a(\til b).P$, and abstraction $(b)P$,
names $b $ and $\til b$  are binders with scope $P$. \De
\DSb
As for the $\lambda$-calculus, we assume that $\alpha $-convertible terms
are identified.
\DSe


\begin{figure}[tb]
\begin{center}
\begin{tabular}{rlrl}
{\trans{ inp}}:& $ \inp  a\tilb .     P \stm{\inp a \tilb}P$ 
& \trans{ rep}:& $
 !\inp a \tilb  .  P   \stm{ \inp a \tilb}  P | ! \inp a \tilb . P  $
\quad if $a \not \in \tilb$
 \\[\mysp]
{\trans{ out}}:& $ \out a\tilb     \stm{\out a\tilb} \nil $
&
{\trans{ par}}:& $\displaystyle{   P \stm\mu   P' \over   P | Q   \stm\mu
P'| Q } $ if $\bn \mu \cap \fn Q = \emptyset $   \\[\mysp]
 \multicolumn{4}{c}{
  {\trans{com}}: $ \;\;    \displaystyle{ P \stm{\inp  a\tilc }P'
\hskip .4cm   Q
\stm{\res {\til{d}}\out a\tilb}Q'  \over     P  
|  Q \stm{ \tau} \res{\til{d}}( P' \sub \tilb\tilc
|  Q' )}$ if  $\til{d} \cap \fn P = \emptyset $
} \\[\mysp]
 \multicolumn{4}{c}{
{\trans{ res}}: \; \; $\displaystyle{ P \stm{\mu}P' \over
 \res a     P   \stm{ \mu} \res a P'  } $ $ a$ does not appear in $\mu$
} \\[\mysp]
 \multicolumn{4}{c}{
{\trans{ open}}:\;\; $\displaystyle{ P \stm{\res{\til{d}} \out a\tilb }P' \over
 \res c     P   \stm{ \res{c,\til{d}} \out a\tilb  }  P'  } $ $c\in \tilb
-\til{d}, \;  a \neq  c$.
} \\[\mysp]
 \multicolumn{4}{c}{
 {\trans{ app}}:  $ \; \; \displaystyle{ P\sub b a \stm{\mu}P' \over
 \app F b   \stm{ \mu}  P'  } $ if  $F = \abs a P$}
\end{tabular}
 \end{center} 
 \caption{Operational semantics of the $\pi$-calculus}\label{f:opsem_pi}
\end{figure}

The operational semantics of the asynchronous polyadic $\pi$-calculus  is standard \cite{SW01a},
and given in \afig \ref{f:opsem_pi}. 
We write  $\fn P$ for the free
names of a process $P$, and $\bn \mu$ for the bound names of action $\mu$.

Transitions are of the form $ P \stm{\inp a \tilb}P'$ (an input, $\tilb$
are the bound names of the input prefix that has been fired), 
$P
\stm{\res {\til{d}}\out a\tilb}P'$ (an output, where $\til d \subseteq
\tilb$ are private names extruded in the output), and $P \stm\tau P'$
(an internal action). We use $\mu$ to range over the labels of
transitions.  
We write
$\Ar {}$ 
for the reflexive transitive closure of $\stm{\tau}$, and 
$\wt{\mu}$ for $\wt{}\stm{\mu}\wt{}$; then
$\Arcap \mu$ is $\wt{\mu}$ if $\mu$ is not $\tau$, and $\wt{}$
otherwise; finally 
$P \stm{\widehat{\mu}} P'$  holds if  $P \stm{\mu}P'$  or ($\mu=\tau$
and $P = P'$).
In 
\Db bisimilarity \De  or similar coinductive relations for the asynchronous
$\pi$-calculus, no name instantiation
is required in the input clause or elsewhere
\Db  because such relations are already closed under name substitutions.
\begin{definition}[bisimilarity]
\label{d:bisimulation}
\XXb
A symmetric relation $\R$ on $\pi$-processes is a
\Db
\emph{bisimulation}, if whenever $P \,\R\, Q$ and $P \stm\mu P'$, then $Q \Arcap\mu Q'$
for some $Q'$ and $P' \,\R\, Q'$. 

Processes $P$ and $Q$ are \emph{bisimilar}, written
$P\approx Q$, if $P \,\R\, Q$ for some bisimulation $\R$. 
\De\XXe
\end{definition}
 In a standard way, we can extend $\approx$ to abstractions:
\Db
 $F 
\approx G 
$ if $\app {F} b \approx \app {G} b $ for every $b$.
\De As usual, strong bisimilarity, written $\sim$, is obtained by replacing $\Arcap\mu$ with $\stm\mu$ in the definition of weak bisimilarity,


A key  preorder in our work will be  \emph{expansion}
\cite{A-KHe92,SW01a}; this is a refinement of  bisimulation that takes into account the
number of internal \Db actions. \De Intuitively, $Q$ expands $P$ if they are weakly bisimilar and moreover $Q$ has no fewer internal actions when simulating $P$.

\begin{definition}[expansion relation]
\label{d:expa}
A relation $\R$   on  
 $\pi$-processes is an 
 \emph{expansion relation} if  whenever $P \RR
Q$: 
\begin{enumerate}
\item
 if  $P \stm\mu P'$ then $Q \Ar\mu Q'$ and $P' \RR Q'$; 
\item  if  $Q \stm\mu Q'$ then $P \arcap\mu P'$ and $P' \RR Q'$.
\end{enumerate}
\Db
We write $\expa$ for the largest expansion relation, and  call it
 \emph{expansion}.  
\De
 \end{definition}

We \iflong will \fi also need 
\Db
the  `divergence-sensitive' variant of expansion, \De 
written $\expa^\Div$,
as
an auxiliary relation when tackling must equivalences. 
Using $\Div$ to indicate  divergence (i.e., $P\Div$  if $P$ can
undergo  an
infinite sequence  of $\tau$ transitions), then  
$\expa^\Div$ is
obtained by adding into Definition~\ref{d:expa}  the requirement that
$Q\Div$ implies $P\Div$. 
We write 
$\contr$ and $\mbox{}^\Div{\contr}$ for the inverse of 
$\expa$ and 
$\expa^\Div$, respectively.
As instance of a contextual divergence-sensitive equivalence, we consider
\emph{must-termination},  because of the simplicity of its definition~---
other choices would have been possible.
\Db 
The predicate
$\Dwa$  indicates   barb-observability, i.e.,  $P\Dwa$ if $P
\Longrightarrow \stm \mu$
for some $\mu$ other than $\tau$. 
\De

\begin{definition}[may and must equivalences] 
The $\pi$-processes $P$ and $Q$ are \emph{may equivalent}, written 
 $P \may Q$,  if in  all process contexts $\qct$ we have  $\ct P \Dwa$ iff
 $\ct Q \Dwa$.
They are \emph{must-termination equivalent} (briefly \emph{must equivalent}), written 
 $P \must Q$,  
  if in   all process contexts $\qct$ we have  $\ct P \Div$ iff
 $\ct Q \Div$.
\end{definition} 

The behavioural relations defined above use the standard
observables of  $\pi$-calculus; they can be made coarser
by using the observables of asynchronous calculi, where one takes into
account that, since outputs are not blocking, only output transitions
from tested processes are immediately detected by an observer. 
In our   examples, the option of
asynchronous observable will make  a difference
 only in the case of may equivalence. 
In \emph{asynchronous may equivalence}, 
 $\mayasy$, 
the \iflong general\fi
barb-observability predicate $\Dwa$ is replaced by 
the  asynchronous barb-observability
predicate $\outcon$, whereby $P \outcon$ holds if
$P\wt{}\stm{\mu}$ and $\mu$ is an output action. 
We have ${\expa} \subseteq {\approx}\subseteq{ \may}\subseteq {\mayasy}$, and 
${\expa^\Div }\subseteq {\must}$. 
The following  results will be useful later. 
\DSb
A process is
\emph{inactive} if it may never perform a visible action, i.e.,
 an input or output; formally $P$ is inactive if there is no $P'$ such
 that   
$P\wt{}\stm{\mu} P'$ and $\mu$ is   an input or output.  
\DSe

\begin{lemma}
\label{l:ina-div-NEW}
For all process  contexts $\qct$, we have: 
\begin{enumerate}
\item
 if $P$ is inactive, then 
 \begin{itemize}
 \item 
 $\ct P \Dwa$ implies $\ct Q \Dwa$ for all $Q$, 
\item $\ct P \outcon$ implies $\ct Q \outcon$ for all $Q$, 
\item   $\ct{\inp  a \tilx . P} \outcon$ implies $\ct{ P} \outcon$;
 \end{itemize}
\item  if $P\Div$ then  for all $Q$, $\ct Q \Div$ implies $\ct P \Div$.
\end{enumerate} 
%
\end{lemma}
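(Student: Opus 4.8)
The plan is to exploit that an \emph{inactive} process can, in any context, never contribute a visible action nor synchronise with its surroundings, whereas a \emph{divergent} process makes the whole term diverge as soon as it is unguarded. Throughout I decompose each transition of $\ct P$ (or $\ct Q$) into those that act only on the surrounding context and those internal to the hole, the crux being that for inactive hole-contents the synchronisation case cannot arise. For the first two items of Part (1) I first record two facts: inactivity is preserved by $\stm\tau$ (a visible transition of $P'$ reachable after $P\stm\tau P'$ would already be reachable from $P$), and an inactive process has \emph{no} visible transition at all, so it can neither emit a barb nor take part in a \trans{com} step with the context. I then prove both items simultaneously, in the strengthened form \emph{for every context $\qctp$ and every inactive $P'$, if $\ctp{P'}\wt{}\stm\mu$ with $\mu$ visible (resp.\ an output), then $\ctp Q\wt{}\stm\mu$ for every $Q$}, by induction on the length of the witnessing reduction. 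In the base case the single step $\stm\mu$ must be produced by the context part, hence it persists when $P'$ is replaced by $Q$. In the inductive step I inspect the first $\tau$-transition: if it is context-internal it has the form $\ctp{P'}\stm\tau\ctpp{P'}$ and I replay it for $\ctp Q$ before applying the induction hypothesis to $\ctpp{P'}$; if it is internal to the hole then $P'\stm\tau P''$ with $P''$ again inactive and I apply the induction hypothesis directly to $\ctp{P''}$; the synchronisation case is excluded because it would require a visible transition of $P'$. Taking $\qctp=\qct$ and $P'=P$ gives the two items.

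For the third item $\inp a\tilx.P$ is not inactive, but its only visible transition is the input on $a$, so it is \emph{output-inactive}: it neither performs, nor reaches after $\tau$'s, an output. Suppose $\ct{\inp a\tilx.P}\wt{}\stm\mu$ with $\mu$ an output. Along the reduction the hole can only be carried along inertly or consume a context message $\out a\tilb$ by firing its input, becoming the inactive $P\sub\tilb\tilx$. Since in this calculus outputs carry no continuation, firing the input unguards nothing that can emit an output, so the witnessed output comes from the context. Replaying the same context steps in $\ct P$, omitting the synchronisation in the second case (which merely leaves the message $\out a\tilb$ in place — an extra message never destroys an output barb, and the differing inactive hole-contents are immaterial), yields $\ct P\outcon$. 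As before I would organise this as an induction on the reduction, keeping the invariant that the hole-content stays output-inactive.

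For Part (2) I argue by contraposition: assume $P\Div$ and $\ct P\not\Div$, and show $\ct Q\not\Div$ for every $Q$. If some reduction of $\ct P$ ever \emph{unguarded} the hole, i.e.\ reached a state in which a copy of $P$ sits in an evaluation position, then the infinite $\tau$-sequence of $P$ would lift through \trans{par} and \trans{res} to an infinite $\tau$-sequence of the whole term, giving $\ct P\Div$ against the hypothesis. Hence the hole is never unguarded, so all reductions of $\ct P$ are context-only (the frozen hole is merely carried along) and, by $\ct P\not\Div$, terminating. The key observation is that the \emph{first} unguarding of any hole copy, in any reduction of $\ct Q$, must be triggered by the context alone: guards are input prefixes, fired in a $\tau$-step only by a matching output, and before any copy of the hole is active every top-level output is contributed by the context. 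Such a context-only step unguarding the hole would, as just noted, also be available in $\ct P$ and force $\ct P\Div$. Therefore no hole copy is ever unguarded in $\ct Q$ either, so $\ct Q$ performs only the same, terminating, context-only reductions, whence $\ct Q\not\Div$.

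The substance of the argument, and its main obstacle, is the bookkeeping needed to make ``context transition'' versus ``hole transition'' precise while the context evolves under reduction — in particular when a replicated input duplicates the hole, so that several copies of $P$ (or of $\inp a\tilx.P$) coexist. The right devices are the strengthened induction hypothesis quantifying over all contexts and all inactive (resp.\ output-inactive) hole-contents in Part (1), and the ``first unguarding is context-triggered'' analysis in Part (2); once these are in place, each individual transition case is routine.
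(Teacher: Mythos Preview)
The paper states this lemma in the background section without proof, so there is no argument to compare against directly. Your approach is the natural one and is essentially correct: decompose each transition of $\ct P$ into context-only, hole-only, and synchronisation steps, and observe that for an inactive hole the synchronisation case is vacuous while hole-only steps cannot produce the barb.

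One step in your treatment of the third bullet of Part~(1) is asserted rather than argued. After the input fires you write that the hole becomes ``the inactive $P\sub\tilb\tilx$''. That inactivity is preserved by name substitution is true in this calculus but not entirely trivial: in general, substitution can manufacture new $\tau$-redexes by identifying free names. The reason it holds here is that an inactive process can have no unguarded prefix on a \emph{free} name (such a prefix would already be a visible transition), so the name-merging effect of $\sigma$ creates no new top-level communications; a short induction on $\wt{}$ then finishes. Alternatively, you can bypass the issue altogether: apply your second bullet with the context $C[\inp a\tilx . \holem]$ to replace $P$ by $\nil$, reducing the third bullet to the special case $P=\nil$ where $\nil\sub\tilb\tilx=\nil$ is trivially inactive; then apply the second bullet once more (with $\nil$ in the role of the inactive process) to pass from $\ct\nil\outcon$ back to $\ct P\outcon$.

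Your Part~(2) argument via contraposition is sound; the key observation that the \emph{first} unguarding of any hole copy is necessarily context-triggered, and hence transferable verbatim to $\ct P$, is exactly what makes the reduction behaviour of $\ct Q$ and $\ct P$ coincide up to that point.
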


\begin{lemma}
\label{l:comm}
$\res a( \out a {\til b } | \inp a {\til x}. P ) \contrDiv
P \sub{\til b}{\til x}$.
\end{lemma}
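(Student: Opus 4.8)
The plan is to spell out the notation and reduce the goal to exhibiting one concrete relation. Write $A \DEF \res a(\out a {\til b} | \inp a {\til x}.P)$ and $B \DEF P\sub{\til b}{\til x}$. Since $\contrDiv$ is by definition the inverse of $\expaDiv$, proving $A \contrDiv B$ amounts to proving $B \expaDiv A$. As $\expaDiv$ is the largest divergence-sensitive expansion relation, it suffices to produce one such relation containing the pair $(B,A)$.

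First I would determine every transition of $A$. Because $a$ is restricted, any visible action of the body has subject $a$ and is therefore blocked by \trans{res} (and \trans{open} does not enable it, the restricted name being the subject itself); moreover the body $P$ is guarded by the input prefix, so it contributes nothing on its own. Hence the only move of $A$ is the internal communication obtained by combining \trans{inp} ($\inp a {\til x}.P \stm{\inp a {\til x}} P$) and \trans{out} ($\out a {\til b} \stm{\out a {\til b}} \nil$) via \trans{com} and then \trans{res}:
\[ A \stm\tau \res a(\nil | P\sub{\til b}{\til x}) . \]
Using the freshness convention, so that $a$ does not occur free in $P$ apart from the consumed prefixes, the residual is strongly bisimilar to $B$ by the standard laws $\res a(\nil | Q) \sim Q$ for $a \notin \fn Q$; write $A' \sim B$ for it. Thus $A$ has exactly one transition: a $\tau$-step to a process $\sim B$.

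Then I would take $\R \DEF {\sim} \cup \{(B,A)\}$ and verify it is a divergence-sensitive expansion. Since strong bisimilarity is already such a relation, only the added pair $(B,A)$ needs work. For the clause governing moves of the right component, the single transition $A \stm\tau A'$ is matched by $B$ standing still, $B \arcap\tau B$, landing in $(B, A') \in {\sim}$ because $A' \sim B$. For the clause governing moves of the left component, a transition $B \stm\mu B'$ is matched by first performing $A \stm\tau A'$ and then, since $A' \sim B$, letting $A'$ reply with $A' \stm\mu A''$ where $B' \sim A''$; together this gives $A \Ar\mu A''$ with $(B', A'') \in {\sim}$. Finally, divergence-sensitivity reduces to the implication $A \Div \Rightarrow B \Div$: as the only move of $A$ is the $\tau$ to $A' \sim B$, we get $A \Div$ iff $A' \Div$ iff $B \Div$ (strong bisimilarity preserving $\Div$), so the implication holds, while on the $\sim$-part it is immediate.

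The argument is essentially routine; the two points deserving care are the freshness of $a$, which is what makes the residual of the communication strongly bisimilar to $B$ rather than merely to $\res a B$, and the divergence clause, where it is crucial that absorbing the single $\tau$ neither creates nor destroys an infinite $\tau$-path — precisely what $A' \sim B$ together with the divergence-preservation of $\sim$ guarantees.
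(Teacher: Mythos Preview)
Your argument is correct. The paper does not actually supply a proof of this lemma; it is stated as a basic algebraic fact and then used (e.g., in Example~\ref{exa:inverse} and in the discussion around Theorem~\ref{t:resCBN}) without further justification. Your explicit construction of the relation ${\sim}\cup\{(B,A)\}$ and the verification of the two expansion clauses together with the divergence clause is exactly the routine check one would expect for such a law; the only implicit hypothesis you rely on---that the bound name $a$ may be taken fresh for $P$ and $\til b$---is the standard Barendregt convention the paper adopts throughout.
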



%
\section{Encodings of the $\lambda$-calculus and full abstraction}
\label{sec-encoding-notion}


\DSb
To make the encodings more readable, we shall 
 assume that $\lambda$-variables are included in the set of
$\pi$-calculus names. 
In this paper, an `encoding of the $\lambda$-calculus into
$\pi$-calculus'
 is supposed to be  {\em compositional}
and \emph{uniform}. 
Compositionality means that the 
definition of the encoding on a term  should  depend   only  upon the 
definition on the term's immediate
constituents, following the grammar of the encoded language. 
In the specific case of an encoding
$\qenco{}$
 of $\lambda$ into $\pi$, this means
that the encoding is defined thus:   
\[
\begin{array}{rcl}
\encom{x} &\defin& T_x \\
\encom{\lambda x.M} &\defin& \qctl[\encom{M}] \\ 
\encom{MN} &\defin& \qctapp[\encom{M},\encom{N}] 
\end{array}
\] 
where $T_x$ is a $\pi$-term that may contain 
 free occurrence of $x$, $\qctl$ is a $\pi$-context in which 
 $x$  
\DSb
only appears
\DSe  as  a bound name with a  scope that embraces the
 hole,  and $\qctapp$ is a two-hole $\pi$-context. 
\DSb
\DSe

Uniformity   refers to the treatment of the free variables: if the
$\lambda$-term $M$ and $M'$  are the same modulo a renaming of free variables, then also
their encodings should be same modulo a renaming of the corresponding free names.
A way of ensuring this is to require 
that the encoding commutes with  name substitution; i.e.,
if $\sigma $ is a $\lambda$-calculus variable renaming (a
 substitution  from variables to variables) that, 
since 
 $\lambda$-variables are included in the set of
 $\pi$-calculus names, also represents a   $\pi$-calculus substitution 
from names to   names, then 
  it holds that
\Db
 \begin{equation}
\label{e:MNsigma}
\encom {M \sigma} \equiv \encom M \sigma \, . 
\end{equation}
\DSe
This condition, involving substitutions, is meaningful provided that 
the encoding respects $\alpha $-conversion; that is, if $M $ and $N$ are
$\alpha$-convertible terms, then $\encom M = \encom N$.  
Moreover, if  we take $\sigma$ to mean a substitution acting on 
the set of  $\pi$-calculus names (a superset of  the set of $\lambda$-calculus variables),
then  \reff{e:MNsigma} also says that the encoding does not introduce extra free names;
that is,  for any $M$, the free names of $\encom M $ are also
 free variables of $M$.   Thus uniformity comes with
condition \reff{e:MNsigma}, plus the 
these  conditions on
$\alpha$-conversion and on free names.

\De

\iflong
 So for an encoding of the $\lambda$-calculus, there are contexts $\qctl$ ($x \in \Var$)
 and $\qctapp$ \st for all $x, M$ and $ N$: 
\[\begin{array}{rcl}
\encom {\lambda x . M} & \defin & \ctl{\:\encom M\:} \\
\encom { M N} & \defin & \ctapp{\: \encom M, \encom N\:} 
\end{array} 
\]
\fi

\DSb
A compositional encoding can be extended  to contexts, by extending 
the encoding mapping  a
$\lambda$-calculus  context into the corresponding 
 $\pi$-calculus context. 
Two such contexts   will be useful in this
work, for a given encoding $\qenco{ }$:
\begin{enumerate}[label=(\arabic*)]
\item $ \qctl \defin \encom{\lambda x . \holem}$,  called an
 {\em  abstraction  context of $\qenco$}. 
We have already mentioned 
this encoding when describing the meaning of 
 compositionality.
\item $ \qctappn \defin \encom{x\holem_1\cdots\holem_n}$ (for $
  n\geqNAT 0$), called 
\DSb   a   {\em  variable  context of $\qenco$}.
This context  will be used to represent the encoding of terms of the
form $x M_1 \cdots M_n$, for some $M_1\cdots M_n$, as we have 
\[ 
\encom {x M_1 \cdots M_n}  \equiv
\ctappn {\encom { M_1 }, \cdots, \encom { M_n }}
\]
\end{enumerate}
\DSe


In the remainder of the paper, `encoding' refers to a `compositional
and uniform encoding of
the $\lambda$-calculus into the $\pi$-calculus'.

\iflong 
\begin{itemize}
\item $\qenco$ is an encoding of the $\lambda$-calculus into $\pi$-calculus. 
\item $\Var$ ranges over subsets of $ \mathcal{N}$, where $\mathcal{N}$ is the set of $\pi$-names. 
\item $\sigma$ ranges over name substitutions.
\item $\cal C$  ranges over sets of $\pi$-contexts.
\item $\leq$ and $\asymp$ are relations on $\pi$-calculus agents, 
 with  $\leq$ being   a precongruence and $\asymp$ a congruence; and $\geq$ is the converse of $\leq$.

\end{itemize} 
\fi

\begin{definition}[soundness, completeness, full abstraction, validity
  of $\beta$ rule]
An encoding $\qenco$   and a  relation $\R$ on $\pi$-agents  are:  
 \begin{enumerate}
 \item  {\em sound for LTs}
if $\encom M \RR \encom N$ implies 
$\LLT M = \LLT N$,  for all $M, N \in \Lambda$; 
\item  
 {\em complete for LTs}
if  $\LLT M = \LLT N$
implies $\encom M \RR \encom N$,  for all $M, N \in \Lambda$;
\item  
 \emph{fully abstract for LTs} if they are both sound and
  complete for LTs. 
\end{enumerate}
The same definitions are also applied to BTs~--- just replace
`LT' with `BT'.
Moreover,  $\qenco$ and $\R$ 
   {\em validate rule   $\beta$}  if 
$\encom{(\lambda x . M)N} \rela \encom{M\sub N x}$, for all $x,M,N$.
\end{definition}

\iflong
\item   {\em validate rule   $\alpha$}  if 
$\encom{\lambda x . M} \rela \encom{\lambda y . (M\sub  y x)}$, for all $x,y,M$ with $y$ not free in $M$.
\fi


\section{Conditions for completeness and soundness}
\label{ss:sounds_completes}


\iflong
Now we are ready to present the conditions for full abstraction of an
encoding
$\qenco$
 from the $\lambda$-calculus into $\pi$ with respect to  a relation
$\asymp$ on $\pi$-agents.
\fi

We first  give the  conditions for {completeness}
of an
encoding
$\qenco$
 from the $\lambda$-calculus into $\pi$ with respect to  a relation
$\asymp$ on $\pi$-agents; then those for {soundness}.
In both cases, the conditions  involve an auxiliary relation
$\leq$ on $\pi$-agents.
\Db
\De

\iflong
The conditions  for LTs and BTs are similar. As BTs are
coarser, one or two extra conditions for them will be needed.
We will separate out clearly the common conditions.


\fi

\subsection{Completeness conditions.} 
\iflong
\label{ss:sounds}
\fi
In the conditions for  completeness
\iflong
 of the 
encoding
$\qenco$
and a congruence 
$\asymp$ on $\pi$-agents involve
\fi
the auxiliary precongruence $\leq$ 
\iflong
Intuitively, the precongruence 
\fi
is required 
\Db 
\De
to validate an `up-to $\leq$ and contexts'
 technique. Such technique is inspired by  
the  `up-to expansion and contexts' technique for  bisimulation   
\cite{SW01a}, which   allows us the following flexibility 
in the  bisimulation game required on a candidate  relation $\R$:
given a  pair of derivatives $P$ and $Q$,  it is not necessary
that the pair $(P,Q)$ itself  be in $\R$, as in the ordinary definition
of bisimulation; it is sufficient to find processes $\til P,\til
Q$, and a
context $\qct$ such  that $P \contr  \ct {\til P}$,  $Q \contr
\ct{\til Q}$, and $\til P \RR \til Q$; that is, we can manipulate the
 original derivatives in terms of $\expa$ so to isolate a common context
 $\qct$; this context is removed and  only the resulting processes
 $\til P,\til Q$ need to be  in $\R$.
In the technique,  the expansion relation is important: replacing it
with   bisimilarity  breaks correctness.
Also, some care is necessary when a hole of the  contexts occurs
underneath an input prefix, in which  case a closure under name
substitutions is required. 
Below, the technique is formulated in an abstract manner, using generic
relations 
  $\asymp$ and $\leq$.
   In the encodings we shall examine,   $\asymp$ will be any
of the congruence relations in Section~\ref{sec-background}, whereas 
 $\leq$ will always be the expansion relation (or its
   divergence-sensitive variant, when $\asymp$ is  must
   equivalence).


\begin{definition}[\uptoc\ technique]
\label{d:uniCAsyn}
\

\DSb
\begin{itemize}
\item
A  symmetric relation  $\R$ on $\pi$-processes
is an \emph{\uptoc\ candidate for $\asymp$}  if
for any pair $(P,Q)\in \R $, if  $P \stm \mu P'$ then  $Q \Arcap \mu
Q'$ and there  are processes $ \til P,  \til Q$ and 
a  context  $\qct$  such  that 
  $P' \geq  \ct {\til P} $, $Q' \geq  \ct {\til Q}$, 
and, if $n\geqNAT 0 $ is the length of the tuples $\til P$ and $\til Q$, 
 at least one of the following two statements is true, for each 
\Db\Xb
$1\leqNAT i \leqNAT n$: \Xe
\De\DSb
\begin{enumerate}
\item 
 $P_i \asymp Q_i$; 
\item
 $P_i \RR Q_i$ and, 
 if $[\cdot]_i$ occurs underneath an input-prefix in $C$ (that is, guarded by an input),
also $P_i\sigma \RR  Q_i\sigma$  for all substitutions  $\sigma$.
\end{enumerate}
\DSe
\item 
Relation  $\asymp$ {\em validates the \uptoc\ technique} if 
for any 
 \uptoc\ candidate  for $\asymp$  $\R$
we have $\R \subseteq {\asymp}$.
\end{itemize}
 \end{definition}

 Below is the core of the completeness conditions (Definition \defref{d:faith}). Some of these
 conditions
\Db
 (\hardcode{(1)}\conref{i:sLLa} to \hardcode{(3)}\conref{i:sLLf})  
\De
only concern the chosen behavioural equivalence
 $\asymp$ and its auxiliary relation $\leq$, 
and  are independent of the encoding; 
\iflong
ensuring some general 
desirable properties for them; 
\fi
the most important condition is the validity
of the \uptoc\ technique. 
\DSb
Other conditions (such as \hardcode{(4)}\conref{i:sLLb}) are purely 
syntactic;  we use the standard concept of \emph{guarded context}, 
in which each hole appears underneath some prefix \cite{Mil89, SW01a}. 
\DSe 
The only behavioural
 conditions on the encoding 
\DSb
are \hardcode{(5)}\conref{i:sLLg} and 
\hardcode{(6)}\conref{i:sLLh} in Definition \defref{d:faith}, plus
(ii) in Theorem~\ref{t:compNEW}. They \DSe
 concern validity of
 $\beta$ rule and equality of certain unsolvables~---
 very basic requirements for the operational  correctness of an
 encoding.

\finish{below: major changes} 
\DSb
We recall that a relation $\R$  in a language
 that is preserved by the
constructs of the language is:
\Db
\begin{itemize}
\item 
 a  \emph{precongruence} if $\R$  is a
preorder relation;  

\item 
a   \emph{congruence} if $\R$ is an equivalence
relation.  
\end{itemize}
\De

Note that for any abstraction $F \defin (a)P$, the terms $
\app F b$ and $P \sub b a$ have exactly the same transitions. 
Hence we expect any behavioural relation $\R$ to identify such
processes, i.e., $\app F b \RR P \sub b a$ as well as 
$P \sub b a \RR \app F b $. 
We call \emph{plain} a relation on processes in which this holds.

\begin{lemma}
\label{l:plain}
\Db 
Let $\R$ be a  plain   precongruence
 on $\pi$-agents. We have: 
\De
\begin{enumerate}
\item
 $\R$ is preserved by name
substitutions, i.e., $P \RR Q$ implies 
$ P \sub ab  \RR Q \sub ab$, for all $a,b$;

\item 
if $F\defin (a)P$ and 
$G\defin (a)Q$ 
then  
$F \RR G$ implies $P \RR Q$, and 
 $\app F z \RR \app G z $ implies
$F \RR G$, for any fresh name $z$.
\end{enumerate}
 
\end{lemma}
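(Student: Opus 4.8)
The plan is to prove item (1) first and then obtain item (2) from it. The central observation is that \emph{plainness} lets us realize a name substitution by an application context: for any process $P$ and names $a,b$, taking the abstraction $(b)P$ we have that $\app{(b)P}{a}$ and $P\sub a b$ have exactly the same transitions (by rule \trans{app}), so plainness gives both $\app{(b)P}{a} \RR P\sub a b$ and $P\sub a b \RR \app{(b)P}{a}$. Throughout, I will use that $\R$ is a preorder (hence transitive) and that, being a precongruence, it is preserved by the abstraction and application constructs of the grammar.

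For item (1), suppose $P \RR Q$. Preservation under the abstraction construct gives $(b)P \RR (b)Q$, and then preservation under application to $a$ gives $\app{(b)P}{a} \RR \app{(b)Q}{a}$. Sandwiching this with the two plainness relations and using transitivity yields $P\sub a b \RR \app{(b)P}{a} \RR \app{(b)Q}{a} \RR Q\sub a b$, hence $P\sub a b \RR Q\sub a b$, which is exactly the claim.

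For item (2), fix the fresh name $z$ and write $F \defin (a)P$, $G \defin (a)Q$. In the first implication, from $F \RR G$ and preservation under application to $z$ we get $\app F z \RR \app G z$; in the second implication this is the hypothesis. In either case, plainness applied to $F$ and to $G$ gives $P\sub z a \RR \app F z \RR \app G z \RR Q\sub z a$, so $P\sub z a \RR Q\sub z a$. Now I apply item (1) with the substitution $\sub a z$: since $z$ is fresh for $P$ and $Q$, the composite $\sub z a$ followed by $\sub a z$ acts as the identity on each term, so we recover $P \RR Q$. This already settles the first implication, while for the second a final application of preservation under the abstraction construct gives $(a)P \RR (a)Q$, i.e., $F \RR G$.

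The step I expect to be delicate is the freshness bookkeeping in item (2): the reason item (1) is proved first is precisely that it is needed to \emph{undo} the application to the fresh name $z$, turning $P\sub z a \RR Q\sub z a$ back into $P \RR Q$. The other point requiring care is that plainness must be invoked in \emph{both} directions and combined with transitivity of the preorder, since we cannot rely on any symmetry of $\R$.
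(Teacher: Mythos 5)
Your proof is correct, and item (1) is essentially the paper's own argument: precongruence under abstraction and application, then plainness (invoked in both directions and combined with transitivity of the preorder) to trade the application for the substitution. For item (2) you take a genuinely different, though equally valid, route. The paper's proof of the first implication applies $F$ and $G$ to the bound name $a$ itself, so that plainness yields $P \sub aa \RR Q \sub aa$, which is literally $P \RR Q$; no fresh name and no appeal to item (1) is needed. For the converse implication the paper abstracts the fresh name over both applications, obtaining $(z)(\app F z) \RR (z)(\app G z)$, then uses plainness and precongruence under that binder and concludes by identifying $(z)(P\sub za)$ with $F$ (and $(z)(Q\sub za)$ with $G$) up to $\alpha$-conversion. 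You instead reduce both implications uniformly to $P\sub za \RR Q\sub za$ and then invoke item (1) with the inverse renaming $\sub az$ to recover $P \RR Q$, finishing the converse with one further abstraction step. So your expectation that item (1) is what makes item (2) work is not borne out by the paper, whose item (2) proof is self-contained. Each organization has a small payoff: the paper's identity-substitution trick makes the inversion step unnecessary and is marginally shorter, while yours avoids reasoning about $\alpha$-conversion underneath a binder and concentrates all the freshness bookkeeping in one clearly flagged application of item (1).
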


\begin{proof}
For the first item, 
 from $P \RR Q$, by the precongruence property we have 
$(a) P \RR (a)Q$ and then also 
$\app{((a) P)} b  \RR \app{((a)Q)} b $; since $R$ is plain
we  conclude $P \sub ba \RR  Q \sub ba$.

For the second item, in the first case from 
$F \RR G$ we derive 
 $\app F a \RR \app G a $, hence 
also $P \sub aa \RR Q\sub a a $, which is 
 $P  \RR Q $. The second case is similar:
$\app F z \RR \app G z $ implies 
$(z)(\app F z) \RR (z)(\app G z) $ that,
 using the plain and precongruence properties, implies
$(z)(P\sub z a ) \RR (z)(Q\sub z a) $, which is the same as 
$F  \RR G $, since $z$ is fresh and we identify
$\alpha$-convertible terms. 
\end{proof} 
\DSe
 
\begin{definition}
\label{d:faith}
Let 
$\asymp$ and $\leq$ be relations on 
$\pi$-agents such that: 

\begin{enumerate}\renewcommand{\labelenumi}{(\theenumi)}
\item \label{i:sLLa} 
\DSb
$\asymp$ is a congruence, and 
  $  {\asymp} \supseteq{ \geq}$;
\DSe
\item 
\label{i:expa} 
$\leq$ is an expansion relation and is a 
\DSb
plain precongruence;
\DSe

\item $\asymp$ validates the \uptoc\ technique. 
\label{i:sLLf} 
\end{enumerate} 
Now,  an encoding $\qenco$  of  $\lambda$-calculus into
$\pi$-calculus is \emph{faithful for $ 
\asymp$ under  $\leq$}
if 
\begin{enumerate}\renewcommand{\labelenumi}{(\theenumi)}\setcounter{enumi}{3} 
\item the  variable  contexts of $\qenco$ are guarded;
\label{i:sLLb} 


\item 
\label{i:beta}
 $\qenco$ and $\geq$
validate rule $ \beta$;
\label{i:sLLg} 


\item if   $M $ is an unsolvable of order $0$ then $\encom M \asymp \encom \Omega$. 
\label{i:sLLh} 
\end{enumerate}
\end{definition}

\begin{theorem}[completeness]
\label{t:compNEW}
Let $\qenco$ be an encoding of the $\lambda$-calculus into
$\pi$-calculus, and  $\asymp$ a relation on $\pi$-agents. 
Suppose  there is a 
relation  $\leq$ on $\pi$-agents
such that $\qenco$ is faithful for $ 
\asymp$ under  $\leq$.  
We have:
\begin{enumerate}[label=(\roman*)]
\item \label{i:sLLj} 
if  the abstraction contexts of $\qenco$ are guarded, 
then
 $\qenco$ and $\asymp$ are complete  for LTs;

\item \label{i:sLLk} 
if
$ \encom {M} \asymp \encom{\Omega} $ whenever $M$ is unsolvable
 of order $\infty$,
 then 
 $\qenco$ and $\asymp$ are complete  for BTs.
\end{enumerate}
 \end{theorem}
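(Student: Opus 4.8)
The plan is to prove both parts uniformly by exhibiting, for the appropriate tree equality, a symmetric relation on $\pi$-processes that is an \uptoc\ candidate for $\asymp$, and then invoking condition~\conref{i:sLLf} (that $\asymp$ validates the \uptoc\ technique) to conclude this candidate is contained in $\asymp$. Concretely, for part~(i) I would set $\R \DEF \{(\encom M, \encom N) : \LLT M = \LLT N\}$, and for part~(ii) the same with \BTN s in place of \LLTN s. By uniformity of the encoding and the fact that variable renamings preserve tree equality, $\R$ is closed under the name substitutions demanded by clause~(2) of the candidate definition when a hole sits under an input prefix. The one preliminary fact I would isolate is that \emph{head reduction is mirrored by expansion}: if $M \shr M'$ then $\encom M \geq \encom{M'}$. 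This follows from validity of rule~$\beta$ under $\geq$ (condition~\conref{i:beta}), compositionality, and $\leq$ being a precongruence (condition~\conref{i:expa}), since the encoding of the $\lambda$-context surrounding the head redex is a $\pi$-context and $\geq$ is preserved by all $\pi$-constructs; iterating gives $\encom M \geq \encom{M^\ast}$ whenever $M \whr M^\ast$, and $\geq\,\subseteq\,\asymp$ (condition~\conref{i:sLLa}) lets me pass to $\asymp$ at the end.

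The core is checking the candidate condition, by cases on the shape of $\LLT M$. Take $(\encom M,\encom N)\in\R$ and a transition $\encom M \stm\mu P'$. A genuine head-reduction step $\encom M \stm\tau P'$ with $P'\geq\encom{M'}$ and $M\shr M'$ is matched by $\encom N \Arcap\tau\encom N$, taking the one-hole empty context, $\til P=\encom{M'}$, $\til Q=\encom N$: since $\LLT{M'}=\LLT M=\LLT N$ we have $\encom{M'}\RR\encom N$, which is clause~(2). When $M$ is solvable with head normal form $\lambda\ve x. y M_1\cdots M_k$, the encoding of this hnf is, by compositionality, a nesting of abstraction contexts and one variable context around $\encom{M_1},\dots,\encom{M_k}$; the variable contexts are guarded (condition~\conref{i:sLLb}) and, for part~(i), so are the abstraction contexts by hypothesis, so the holes carrying the $\encom{M_i}$ all lie under prefixes. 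Hence once $\encom M$ has reduced through $\geq$ to its hnf-encoding, its next visible action is forced by this guarded outer shell, which depends only on $\ve x$, $y$, $k$, data shared with $N$ (whose hnf is $\lambda\ve x. y N_1\cdots N_k$ with $\LLT{M_i}=\LLT{N_i}$). I match the action with the corresponding one of $\encom N$, reached through $\geq$, and decompose both residuals as a common context $\qct$ with holes filled by the $\encom{M_i}$ on one side and the $\encom{N_i}$ on the other, so that $P'\geq\ct{\til P}$ and the matching residual $\geq\ct{\til Q}$; clause~(2) applies since $\encom{M_i}\RR\encom{N_i}$, the substitution closure coming from closure of $\R$ under renaming.

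The unsolvable cases are where the two parts diverge. For part~(i): if $M$ is unsolvable of order $0$ then $\encom M\asymp\encom\Omega\asymp\encom N$ by condition~\conref{i:sLLh}, and its transitions, being head-reduction steps to further order-$0$ unsolvables, stay inside $\R$, with the base identifications supplied by clause~(1); if $M$ is unsolvable of finite order $n>0$, then $M\whr\lambda x_1\cdots x_n.M_0$ with $M_0$ of order $0$, the guarded abstraction contexts contribute exactly $n$ matching guard-actions, and firing the outermost one leaves the encoding of an order-$(n{-}1)$ unsolvable, again in $\R$ (the same for $N$, which has order $n$ since $\LLT M=\LLT N$). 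For part~(ii) all unsolvables collapse to $\bot$, so I must instead establish $\encom M\asymp\encom\Omega$ for \emph{every} unsolvable $M$: order $0$ is condition~\conref{i:sLLh}, order $\infty$ is the added hypothesis, and a finite order $n>0$ is reduced to order $0$ by head-reducing to $\lambda\ve x.M_0$ and using the congruence of $\asymp$ to replace $\encom{M_0}$ by $\encom\Omega$, the remaining leading abstractions being collapsed using that abstraction contexts are \emph{not} required guarded here. Condition~\conref{i:sLLf} then yields $\R\subseteq\asymp$, i.e.\ completeness.

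The main obstacle is the transition bookkeeping in the candidate check: separating the genuine head-reduction steps, which advance the term towards its hnf and are mirrored by $\geq$, from the purely administrative internal steps of the encoding, and showing the latter are absorbed by expansion so that every residual does fall under $P'\geq\ct{\til P}$ for a suitable decomposition. The second delicate point, confined to part~(ii), is justifying the collapse of finite-positive-order unsolvables with $\Omega$; this is the step that genuinely relies on the encoding not making leading abstractions observable, and is precisely why part~(i) needs guardedness of the abstraction contexts whereas part~(ii) does not.
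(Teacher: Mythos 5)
Your overall architecture coincides with the paper's: head reduction is mirrored by $\geq$ (Proposition~\ref{p:betatheory} in the appendix), tree equality is turned into an \uptoc\ candidate whose common contexts are the encoded abstraction and variable contexts, substitution closure comes from uniformity, and for BTs the unsolvables of finite positive order are absorbed into $\encom{\Omega}$ using the order-$\infty$ hypothesis. Two steps, however, do not go through as you state them. First, your relation $\R$ contains the top-level pairs $(\encom M,\encom N)$ with $M,N$ unsolvable of order $0$, so you must answer their transitions in the candidate game; your justification --- that the transitions of $\encom M$ are ``head-reduction steps to further order-$0$ unsolvables'' --- is not derivable from the hypotheses, which say nothing about which transitions $\encom M$ has (condition (\ref{i:sLLh}) only gives $\encom M\asymp\encom\Omega$, and $\asymp$ need not be a bisimulation; think of may equivalence). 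The paper excludes these pairs from $\R$ and discharges them directly by transitivity of $\asymp$; order-$0$ unsolvables then occur only as hole-fillers, where clause (1) of Definition~\ref{d:uniCAsyn} applies and no transition needs to be matched.

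Second --- and this is the obstacle you yourself leave open --- you organise the candidate check around a classification of the transitions of $\encom M$ into genuine head-reduction steps, administrative steps, and the action of the guarded shell. No such classification is available at this level of abstraction, and the paper's proof shows it is unnecessary: having the \emph{static} inequalities $\encom M\geq C[\encom{M_1},\ldots,\encom{M_n}]$ and $\encom N\geq C[\encom{N_1},\ldots,\encom{N_n}]$ with $C$ guarded, one plays the expansion game of Definition~\ref{d:expa} against them. An arbitrary $\encom M\stm{\mu}S$ is answered by a $\widehat\mu$-move of $C[\encom{M_1},\ldots,\encom{M_n}]$ whose residual is below $S$ in the $\geq$ order, and guardedness guarantees that this move at most consumes $C$ without touching the holes; the symmetric half of the expansion game then lifts the matching move back up to $\encom N$. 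This is precisely the role of $\leq$ being an expansion relation rather than a bisimulation, and it dissolves the bookkeeping you flag. A smaller point: your collapse of an order-$n$ unsolvable ($0<n<\infty$) onto $\encom\Omega$ in part (ii) is asserted via ``collapsing the leading abstractions'', but the actual mechanism (Lemma~\ref{l:PO}) is to replace $\encom\Omega$ under the abstraction contexts, by congruence, with $\encom\Xi$ for some $\Xi$ unsolvable of order $\infty$, so that the whole term becomes the encoding of an order-$\infty$ unsolvable and the hypothesis of (ii) applies; whether the abstraction contexts are guarded plays no role in that step.
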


\Db
\noindent The proof of Theorem \ref{t:compNEW} is placed in Appendix \ref{ap:proof_conditions}. 
We provide some intuitive account below. 
The proofs for LTs and BTs are similar.
In the proof for LTs, for
instance,  we consider the relation 
\[\R \defin \{(\app {\encom M} r , \app {\encom N} r)  \,|\,
\begin{array}[t]{l}
\LLT M = \LLT N ,
\mbox{ and $r$  fresh}
\}    
\end{array}
 \]
and   show that for each $(\encom M,\encom N)\in \R$ one of the following conditions is
true, for some abstraction context $\qctl$, variable context 
 $\qctappn$, and terms $M_i,N_i$:

\begin{enumerate}[label=(\alph*)]
\item $\encom M \asymp \encom \Omega $ and $\encom N \asymp \encom
\Omega$;

\item $\encom M \geq \ctl {\encom{M_1}}$,
$\encom N \geq \ctl {\encom{N_1}}$ and
$(\encom{M_1},\encom{N_1})\in \R$; 

\item $\encom M \geq \ctappn {\encom{M_1}, \ldots , \encom{M_n} }$,
$\encom N \geq \ctappn {\encom{N_1} , \ldots , \encom{N_n}   }$  and  
$(\encom{M_i},\encom{N_i})\in \R$  for all $i$.
\end{enumerate} 
Here, (a) is used when $M$ and $N$ are unsolvable of order $0$, by
 appealing to clause  \reff{i:sLLh} of Definition~\ref{d:faith}. 
In the remaining cases we obtain (b) or (c), depending on the shape of
the LT for $M$ and $N$, and appealing to 
 clause  \reff{i:beta} of Definition~\ref{d:faith}. 
The crux of the proof is exploiting  
the property that 
$\asymp$ validates the \uptoc\ technique
so to  derive ${\R} \subseteq {\asymp}$ (i.e., 
the continuations of $\encom M$ and $\encom N$ are related, using expansion and cutting
a common context).
Intuitively, this is possible because
  the 
variable and abstraction contexts of $\qenco$ are guarded 
and because $\leq$ is an expansion relation (clause \reff{i:expa}
 of Definition~\ref{d:faith}).
%
In the results for BTs, the condition on abstraction contexts being
guarded is not needed because the condition can be proved redundant in
presence of the  condition in the assertion  (ii) of the
theorem. 

\De



\subsection{Soundness conditions}
\iflong
\label{s:completes}
\fi
In the conditions for  soundness, one of the key requirements will be
that certain contexts have an \emph{inverse}.
This intuitively means that
 it is possible to  extract any of the processes in the holes of the
 context, up to the chosen behavioural equivalence. 
To have some more flexibility, we allow the appearance of the
process of a hole after a rendez-vous with the external observer. 
This allows us to:  initially  restrict some  names that are used to
consume the context;   then  export such names before revealing the
process of the hole. The reason why the restriction followed by  the
export of these names is useful is that the
 names might occur in the process of the hole; initially restricting
 them allows us to hide the names to the external environment;  
 exporting them allows to remove the restrictions
once the inversion  work on the context  is completed. 
The drawback of this initial rendez-vous is that we have to require
a prefix-cancellation property on the behavioural equivalence;
however, the requirement is straightforward to check in
common behavioural equivalences.

\XXb
We give the definition of inversion 
only for abstraction $\pi$-contexts whose holes are
themselves abstractions; 
\DSb
that is,  
 contexts that
are obtained by a $\pi$-abstraction by   replacing subterms that are
themselves abstractions with holes. 
\XXe
\iflong
; that is, contexts that
are obtained by a $\pi$-abstraction by replacing subterms that are
themselves abstractions with holes. The reason is that 
\fi
We only
need this form of contexts when reasoning on $\lambda$-calculus
encodings, and each hole 
of a context
 will be filled with the  encoding of a $\lambda$-term.


\begin{definition}\label{d:inverse_conext}
Let $C$ be an abstraction $\pi$-context
   with $n$ holes,
each occurring exactly once, each hole itself standing for an
abstraction.  
We say that \emph{$C$ has inverse with respect to a relation $\R$} on $\pi$-agents, if for
every $i=1,...,n$  and for
every $\til A $
there exists a process $\pi$-context $D_i$ 
and fresh names
 $a,z,b $
such that 
$$D_i[C[\ve{A}]] \;\R\; (\res {\til b})
(\out a {\til c}| b(z). \app{A_i} z) \; , 
\hskip 1cm  
\mbox{for 
 $b \in  \til b \subseteq \til
c$.}
$$
\end{definition}

It is useful to
 establish inverse properties for contexts for
the finest possible  behavioural relation,  so to export  the
\Db
 results
 onto \De coarser relations. In our work, the finest
such relation is the divergence-sensitive expansion ($\expaDiv$).

\begin{example}
\label{exa:inverse}
We show examples of inversion using contexts that are similar to
some abstraction and variable contexts in encodings of
$\lambda$-calculus.
\begin{enumerate}
\item
Consider  a  context 
$ C \defin (p ) \:  p(x,q) . (\app \holem q)$.
If $F$ fills the context, then an inverse 
for $ \contrDiv$
is the context  
\[ D \defin    \res b (  \out a  b | b(r) . \res p ( \app {\holem} p  | \out p {x,r} )) \hskip 1cm
  \] 
where all names are \Db 
fresh. \De 
Indeed we have,
using simple algebraic manipulations (such as the law of
Lemma~\ref{l:comm}):
\[
\begin{array}{rcl}
  D [\ct F ] &  \contrDiv &     \res b (  \out a  b | b(r) .  \res p (
 p(x,q) . \app F q   | \out p {x,r} )) \\
 &  \contrDiv &    \res b (  \out a  b | b (r).  
\app F r) 
\end{array}
 \] 

\item Consider now
 a context 
$
 \qct \defin \abs p (\res{r,y}) (\out
{x}{r}  |  \out r {y, p} | !  \inp y q .  \app \holem  q )
$. 
If $F$ fills the hole, then an inverse context is 
\begin{equation}
\label{e:hole}
D \defin  ( (\res{ x,p,b })  (\app \holem p  | x(r).  r(y,z) . 
(\out a {x,b} | b (u).  \out y
u)    ) 
\end{equation} 
where again all names are fresh with respect to $F$. We have:
\[
\begin{array}{rcl}
  D [\ct F ] & = &    
 ( \res{ x,p,b })  (\app {(C [F])} p  | x(r).  r(y,z) . 
(\out a {x,b} | b (u).  \out y
u)    )  \\
 & \contrDiv &
 (\res{ x,p ,b})  ( \res{r,y}) (\out
{x}{r}  |  \out r {y, p} | !  \inp y q .  \app F  q ) | \\
 & & \qquad\qquad\qquad\quad\;\,  x(r).  r(y,z) . (\out a {x,b} | b (u).  \out y u)    \\
 & \contrDiv &
  \resb{ x,b }  ( \res{y} ( !  \inp y q .  \app F  q  | 
(\out a {x,b} | b (u).  \out y
u)    ) ) \\
 & \contrDiv &
  \resb{ x,b } (\out a {x,b} | b (r).  
 ( \res{y} ( !  \inp y q .  \app F  q  | 
 \out y r)  )) \\
 & \contrDiv &
  \resb{ x,b } (\out a {x,b} | b (r).     \app F  r  ) 
\end{array}
 \] 
\end{enumerate}
\end{example}


\begin{definition}
\label{d:rv_canc}
A relation $\R$ on $\pi$-agents \emph{has the rendez-vous
  cancellation} property  if whenever 
$ \res {\til b} ( \out a {\til c} | b(r). P ) \RR 
 \res {\til b} ( \out a {\til c} | b(r).  Q)$
where $b \in \til b \subseteq \til c $ and $a,b$ are fresh, 
then also $P \RR Q$.  
\end{definition} 

The   cancellation property is straightforward
for a behavioural relation $\asymp$
 because, in the initial processes,  the output $\out a {\til c}$ is the only possible  initial action,
 after which the input at $b$ must fire (the assumption `$a,b$ fresh'
 facilitates matters, though it is not essential).

As for completeness, so for soundness we isolate the common conditions
for LTs and BTs. 
Besides the conditions on inverse of contexts, the other main
requirement 
is about  the inequality 
among some  structurally different $\lambda$-terms
 (condition
\conref{con:lls:f}).

\begin{definition}
\label{d:resp}
Let 
$\asymp$ and $\leq$ be relations on 
$\pi$-agents where 
\begin{enumerate}\renewcommand{\labelenumi}{(\theenumi)}
\item \label{con:lls:a} 
\DSb
$\asymp$  is a congruence, $\leq$  a plain precongruence; 
\DSe
\item   $ { \asymp} \supseteq {\geq}$;  
\label{con:lls:b} 

\item \label{con:lls:c} 
  $ { \asymp}$ has the rendez-vous
  cancellation property.
\end{enumerate}
An encoding  $\qenco$ of the $\lambda$-calculus into
$\pi$-calculus is \emph{respectful for $ 
\asymp$ under  $\leq$}
if
\begin{enumerate}\renewcommand{\labelenumi}{(\theenumi)} \setcounter{enumi}{3}
\item 
$\qenco$ and $\geq$ validate rule $\beta$;
\label{con:lls:d}

\item \label{con:lls:e} 
if $M$ is an unsolvable of order $0$, then $\encodingm{M}{}{} \asymp \encodingm{\Omega}{}{} $;

\item \label{con:lls:f}
the terms 
$\encodingm{\Omega}{}{}$, $\encodingm{x\ve M}{}{}$, $\encodingm{x\ve {M'}}{}{}$,
and $\encodingm{y\ve {M''}}{}{}$
are 
pairwise unrelated by $\asymp$, assuming that  
$x \neq y$  and  that  tuples $\ve M $ and $\ve {M'} $
have different  lengths;

\item 
\label{i:inverse}
the abstraction 
and  variable contexts 
 of $\encodingm{\,}{}{}$
 have  inverse with respect to $\geq$.
\label{con:lls:gh} 

\end{enumerate}
\end{definition} 

The condition on variable context having an inverse is the
most delicate one. In the
  encodings of the $\pi$-calculus we have examined,  however, the condition  is
simple  to achieve.




\begin{theorem}[soundness]
\label{t:soundNEW}
Let $\qenco$ be an encoding of the $\lambda$-calculus into
$\pi$-calculus, and  $\asymp$ a relation on $\pi$-agents. 
Suppose  there is a 
relation  $\leq$ on $\pi$-agents
such that $\qenco$ 
 is respectful for $ 
\asymp$ under  $\leq$.
We have:
\begin{enumerate}[label=(\roman*)]
\item \label{con:lls:j}
if, for any $M$, the term
$\encodingm{\lambda x.M}{}{}$
 is unrelated by  
$\asymp$  to 
$\encodingm{\Omega}{}{}$ 
and to 
any term  of the form 
$\encodingm{x \ve M}{}{}$,
then
 $\qenco$ and $\asymp$ are sound  for LTs;

\item \label{con:lls:k}
if 
\begin{enumerate}
\item
$ \encom {M} \asymp \encom{\Omega} $ whenever $M$ is unsolvable
 of order $\infty$,
\label{con:bts:e}

\item \label{con:bts:f} 
 $M$  solvable implies that the term
$\encodingm{\lambda x.M}{}{}$ is unrelated by  
$\asymp$  to 
$\encodingm{\Omega}{}{}$ 
and to 
any term  of the form 
$\encodingm{x \ve M}{}{}$,
\end{enumerate}
 then 
 $\qenco$ and $\asymp$ are  sound  for BTs.
\end{enumerate}
 \end{theorem}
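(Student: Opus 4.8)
The plan is to exhibit, for each of the two parts, a relation on $\lambda$-terms that is contained in tree equality, and to verify the coinductive clauses of Definition~\ref{d:LTS} by a case analysis on the head behaviour of the related terms. Concretely I would take $\mathcal{S}=\{(M,N)\mid \encom{M}\asymp\encom{N}\}$ and show that $\mathcal{S}$ respects the tree structure: for every $(M,N)\in\mathcal{S}$ the roots of the two trees carry the same label and the children are again pairwise in $\mathcal{S}$; by coinduction this yields $\LLT{M}=\LLT{N}$ (resp.\ $\BT{M}=\BT{N}$), which is exactly soundness.

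Three auxiliary facts would be established first. \emph{(Fact 1, validity of head reduction.)} Since $\qenco$ and $\geq$ validate rule $\beta$ (clause~\ref{con:lls:d}) and $\geq$ is a precongruence with $\geq\subseteq\asymp$ (clauses~\ref{con:lls:a} and~\ref{con:lls:b}), compositionality lets me push a $\beta$-step on the head redex through the surrounding application- and abstraction-contexts, so $M\whr M'$ implies $\encom{M}\geq\encom{M'}$ and hence $\encom{M}\asymp\encom{M'}$; thus I may always replace a term by its head normal form. \emph{(Fact 2, peeling an abstraction, and Fact 3, extracting arguments.)} Using that the abstraction and variable contexts have an inverse with respect to $\geq$ (clause~\ref{i:inverse}), that $\geq\subseteq\asymp$, that $\asymp$ is a congruence, and the rendez-vous cancellation property (clause~\ref{con:lls:c}, Definition~\ref{d:rv_canc}), I would apply the inverse context to both sides of an equality, simplify by the inverse law to a rendez-vous form, and cancel it, obtaining: $\encom{\lambda x.P}\asymp\encom{\lambda x.Q}$ implies $\encom{P}\asymp\encom{Q}$, and $\encom{y M_1\cdots M_k}\asymp\encom{y N_1\cdots N_k}$ implies $\encom{M_i}\asymp\encom{N_i}$ for each $i$ (the final step from $\app{\encom{P}}z\asymp\app{\encom{Q}}z$ to $\encom{P}\asymp\encom{Q}$ uses that $\asymp$ is plain, as in Lemma~\ref{l:plain}).

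With these in place I would run the case analysis. Given $(M,N)\in\mathcal{S}$, Fact~1 reduces both terms to their head behaviour. If $M$ produces a leading abstraction ($M\whr\lambda x.M_1$), then $\encom{N}\asymp\encom{\lambda x.M_1}$; the distinguishing hypothesis --- \ref{con:lls:j} for LTs, \ref{con:bts:f} for BTs (applicable because every intermediate body of a solvable term is itself a head normal form, hence solvable) --- forbids $\encom{N}$ from being $\asymp$ to $\encom{\Omega}$ or to any $\encom{z\ve N}$, so $N$ must also produce a leading abstraction; choosing the bound names equal and applying Fact~2 gives $\encom{M_1}\asymp\encom{N_1}$, and I recurse. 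If $M$ produces no leading abstraction, then either $M$ is an unsolvable of order $0$, in which case $\encom{M}\asymp\encom{\Omega}$ (clause~\ref{con:lls:e}) forces, via clause~\ref{con:lls:f}, that $N$ is also unsolvable of order $0$ (both $\bot$); or $M$ is solvable with head normal form $y M_1\cdots M_k$, in which case $\encom{M}\asymp\encom{y\ve M}$ and clause~\ref{con:lls:f} forces $N$ solvable with the same head variable $y$ and arity $k$, whereupon Fact~3 yields $\encom{M_i}\asymp\encom{N_i}$. Iterating the abstraction-peeling either terminates after finitely many, say $m$, steps at one of the two non-abstraction cases --- matching the finite leading block of the two trees --- or never terminates, which characterises coinductively the unsolvables of order $\infty$ on both sides; for LTs these give the common leaf $\top$, whereas for BTs they give $\bot$.

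The parts then diverge exactly at the unsolvables. For LTs the strong hypothesis~\ref{con:lls:j} makes Fact-2 peeling available for \emph{every} abstraction-producing term, so the order of unsolvability is matched step by step and the labels $\lambda x_1\ldots x_n.\bot$ and $\top$ come out equal on both sides. For BTs, where all unsolvables must collapse to the single leaf $\bot$, the key preliminary lemma is that $\encom{M}\asymp\encom{\Omega}$ for \emph{every} unsolvable $M$, not only those of order $0$ or $\infty$. I would prove this by picking a fixed term $\Theta$ of order $\infty$, noting $\Theta\whr\lambda z.\Theta'$ with $\Theta'$ again of order $\infty$; then clause~\ref{con:bts:e} gives $\encom{\Theta}\asymp\encom{\Omega}$ and $\encom{\Theta'}\asymp\encom{\Omega}$, while Fact~1 and congruence give $\encom{\Omega}\asymp\encom{\Theta}\asymp\encom{\lambda z.\Theta'}\asymp\encom{\lambda z.\Omega}$; an induction on the order then yields $\encom{\lambda z_1\ldots z_m.\Omega}\asymp\encom{\Omega}$, hence $\encom{M}\asymp\encom{\Omega}$ for every finite-order unsolvable $M$ as well. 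Armed with this, an unsolvable $M$ forces $N$ unsolvable (otherwise clause~\ref{con:bts:f} or~\ref{con:lls:f} is violated), both giving $\bot$, and the solvable case is identical to the LT analysis. The main obstacle I anticipate is precisely this collapse of all unsolvables for BTs --- reconciling the finite orders, which LTs keep apart, with the single $\bot$ of BTs --- together with making the non-terminating abstraction-peeling rigorous as a genuine coinductive argument rather than an informal limit.
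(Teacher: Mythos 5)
Your proposal matches the paper's own proof in all essentials: the same relation $\{(M,N)\mid \encom{M}\asymp\encom{N}\}$ shown to be contained in tree equality via its coinductive characterisation, the same use of validity of $\beta$ under $\geq$ to pass to head normal forms, the same combination of inverse contexts, rendez-vous cancellation and plainness to extract subterms, and the same auxiliary lemma (the paper's Lemma~\ref{l:PO}) equating all unsolvables to $\Omega$ for the BT case. The argument is correct as proposed.
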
 

For the proof of Theorem~\ref{t:soundNEW}, we use a coinductive
definition of LT and BT equality, as forms of bisimulation. 
Then we show that the relation
 $\{(M,N) \,|\; \encodingm{M}{}{} \asymp \encodingm{N}{}{} \}$
implies the corresponding tree equality. In the case of internal
 nodes 
of the trees, we exploit conditions such as \conref{con:lls:f} and
\conref{i:inverse} of Definition~\ref{d:resp}. 
The details are given in Appendix \ref{ap:proof_conditions}.



\paragraph{Full abstraction}
\iflong
\label{s:completes}
\fi
We put together
\iflong
the soundness and completeness
\fi Theorems~\ref{t:compNEW}
and \ref{t:soundNEW}.
\iflong
 and obtain the following conditions for  full abstraction
of an encoding.
\fi

\begin{theorem}
\label{t:fa}
Let $\qenco$ be an encoding of the $\lambda$-calculus into
$\pi$-calculus,   $\asymp$ a congruence on $\pi$-agents.
Suppose  there is a 
\DSb
plain
\DSe
  precongruence
  $\leq$ on $\pi$-agents
such  that

\begin{enumerate}\renewcommand{\labelenumi}{(\theenumi)}
\item
$\leq$ is an expansion relation and   ${  \asymp} \supseteq{ \geq}$;
\item $\asymp$ validates the \uptoc\ technique;

\item the  variable contexts  of $\qenco$ are guarded;

\item 
the abstraction
and   variable contexts
 of $\encodingm{\,}{}{}$
 have  inverse with respect to $\geq$;

\item
 $\qenco$ and $\geq$
validate rule $ \beta$;
\item if   $M $ is an unsolvable of order $0$ then $\encom M \asymp \encom \Omega$; 

\item
the terms
$\encodingm{\Omega}{}{}$,  
$\encodingm{x\ve M}{}{}$, $\encodingm{x\ve {M'}}{}{}$,
and $\encodingm{y\ve {M''}}{}{}$
are
pairwise unrelated by $\asymp$, assuming that
$x \neq y$  and  that  tuples $\ve M $ and $\ve {M'} $
have different  lengths.
\end{enumerate}
We have:
\begin{enumerate}[label=(\roman*)]
\item
if 
\begin{enumerate}

\item
  the   abstraction contexts  of $\qenco$ are guarded, and

\item
for any  $M$  the term
$\encodingm{\lambda x.M}{}{}$ is unrelated by
$\asymp$  to
$\encodingm{\Omega}{}{}$
and to
any term  of the form
$\encodingm{x \ve M}{}{}$,
\end{enumerate}
  then
 $\qenco$ and $\asymp$ are fully abstract  for LTs;

\item
if
\begin{enumerate}
 \item
 $M$  solvable implies that the term
$\encodingm{\lambda x.M}{}{}$ is unrelated by
$\asymp$  to
$\encodingm{\Omega}{}{}$
and to
any term  of the form
$\encodingm{x \ve M}{}{}$, and
\item
$ \encom {M} \asymp \encom{\Omega} $ whenever $M$ is unsolvable
 of order $\infty$,
\end{enumerate}
     then
 $\qenco$ and $\asymp$ are fully abstract  for BTs.
\end{enumerate}
 \end{theorem}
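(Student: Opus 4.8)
The plan is to observe that Theorem~\ref{t:fa} is nothing more than the conjunction of the completeness Theorem~\ref{t:compNEW} and the soundness Theorem~\ref{t:soundNEW}: since full abstraction for a given class of trees means exactly soundness plus completeness, it suffices to show that the hypotheses of Theorem~\ref{t:fa} entail, simultaneously, faithfulness in the sense of Definition~\ref{d:faith} and respectfulness in the sense of Definition~\ref{d:resp}, and that the case-specific side conditions for LTs and for BTs line up with those of the two constituent theorems. I would therefore carry out two parallel bookkeeping arguments, one feeding Theorem~\ref{t:compNEW} and one feeding Theorem~\ref{t:soundNEW}, and then intersect the conclusions.

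For completeness I would verify the clauses of Definition~\ref{d:faith} one by one: clause~\conref{i:sLLa} is the standing assumption that $\asymp$ is a congruence together with hypothesis~(1); clause~\conref{i:expa} is that $\leq$ is a plain precongruence (standing assumption) and an expansion relation (hypothesis~(1)); clause~\conref{i:sLLf} is hypothesis~(2); clause~\conref{i:sLLb} is hypothesis~(3); clause~\conref{i:sLLg} is hypothesis~(5); and clause~\conref{i:sLLh} is hypothesis~(6). Thus $\qenco$ is faithful for $\asymp$ under $\leq$. For LTs I would then add the guardedness of abstraction contexts supplied by side condition~(i)(a) and invoke Theorem~\ref{t:compNEW}\conref{i:sLLj}; for BTs I would add the identification of unsolvables of order $\infty$ with $\encom{\Omega}$ supplied by side condition~(ii)(b) and invoke Theorem~\ref{t:compNEW}\conref{i:sLLk}. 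This yields completeness for LTs, respectively BTs.

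For soundness I would match the clauses of Definition~\ref{d:resp} analogously: clauses~\conref{con:lls:a} and~\conref{con:lls:b} come from the standing assumptions together with hypothesis~(1); clause~\conref{con:lls:d} is hypothesis~(5); clause~\conref{con:lls:e} is hypothesis~(6); clause~\conref{con:lls:f} is hypothesis~(7); and clause~\conref{i:inverse} is hypothesis~(4). The only respectfulness clause not appearing explicitly among the hypotheses is the rendez-vous cancellation property~\conref{con:lls:c}; here I would appeal to the observation recorded immediately after Definition~\ref{d:rv_canc}, namely that this property holds automatically for the behavioural congruences $\asymp$ under consideration, since in the processes involved the output $\out a {\til c}$ is the unique enabled initial action and must be followed by the input at $b$. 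With respectfulness in hand, side condition~(i)(b) is exactly the hypothesis of Theorem~\ref{t:soundNEW}\conref{con:lls:j}, giving soundness for LTs, while side conditions~(ii)(a) and~(ii)(b) are exactly the two clauses required by Theorem~\ref{t:soundNEW}\conref{con:lls:k}, giving soundness for BTs.

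Combining the two halves, in part~(i) completeness and soundness for LTs together yield full abstraction for LTs, and in part~(ii) the same holds for BTs. I do not anticipate a genuine obstacle, since all the mathematical content already resides in Theorems~\ref{t:compNEW} and~\ref{t:soundNEW}; the work is purely the matching of hypotheses. The one point deserving explicit comment is the rendez-vous cancellation clause~\conref{con:lls:c}, which is absent from the listed hypotheses and must be supplied by the standing remark that it holds for behavioural congruences.
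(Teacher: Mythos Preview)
Your proposal is correct and matches the paper's approach exactly: the paper presents Theorem~\ref{t:fa} simply as the combination of Theorems~\ref{t:compNEW} and~\ref{t:soundNEW}, with no further proof given, and your bookkeeping of hypotheses against Definitions~\ref{d:faith} and~\ref{d:resp} is precisely what is needed. You are right to single out the rendez-vous cancellation clause~\conref{con:lls:c} as the one item not listed among the hypotheses of Theorem~\ref{t:fa}; the paper handles this only through the informal remark following Definition~\ref{d:rv_canc}, so your treatment is in fact more explicit than the paper's own.
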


 In   Theorems~\ref{t:compNEW}(i)
and \ref{t:fa}(i)
for LTs
the abstraction contexts are required to be guarded.
This is reasonable in encodings of strategies, such as
 call-by-name, where
   evaluation does not continue
underneath a $\lambda$-abstraction, but it is too demanding
when evaluation can go past a $\lambda$-abstraction, such as
 strong call-by-name.
We therefore present
also
the following  alternative   condition:
\[
\mbox{$ $} \hfill \hskip 2cm
\mbox{$M,N $ unsolvable of order $\infty$
implies $\encom M \asymp  \encom N$.  }
\hfill \hskip 2cm  \mbox{$(*)$}
\]

\begin{theorem}
\label{t:sound-strong}
  Theorems~\ref{t:compNEW}(i)
and \ref{t:fa}(i)  continue to hold when
the condition
that the abstraction contexts be guarded
is replaced by $(*)$
above.
\end{theorem}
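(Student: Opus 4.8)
The plan is first to localise the hypothesis. In both statements the clause ``abstraction contexts are guarded'' is used \emph{only} on the completeness side: it is hypothesis~(i)(a) of Theorem~\ref{t:fa}, inherited from Theorem~\ref{t:compNEW}(i), whereas the soundness ingredient (Theorem~\ref{t:soundNEW}(i), i.e.\ hypothesis~(i)(b) of Theorem~\ref{t:fa}) never mentions it. As Theorem~\ref{t:fa}(i) is the conjunction of the two, it suffices to re-prove completeness for LTs, Theorem~\ref{t:compNEW}(i), with $(*)$ in place of abstraction-context guardedness; the soundness half, and hence full abstraction, then goes through unchanged.

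Recall the completeness argument: one shows that $\R \defin \{(\app{\encom M} r,\app{\encom N} r)\mid \LLT M=\LLT N,\ r\text{ fresh}\}$ is an \uptoc\ candidate for $\asymp$ (condition~\ref{i:sLLf} of Definition~\ref{d:faith}), each pair being matched by one of the shapes (a), (b), (c). Abstraction-context guardedness is used exactly in shape~(b), which peels a single leading $\lambda$ through $\qctl$: guardedness is what keeps the peeled body $\encom{M_1}$ inside a \emph{guarded} hole, so that after any transition one still has $P'\geq C[\til P]$ with $\encom{M_1}$ in a hole. Shape~(b) is invoked both for the finite $\lambda$-prefix of a head normal form and of an order-$k$ unsolvable ($k<\omega$), and, iterated, for an order-$\infty$ unsolvable.

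The finite and the infinite uses are treated differently. For an order-$\infty$ unsolvable the iteration never reaches a head variable and guardedness was the only thing keeping the coinduction productive; here I simply appeal to $(*)$: if $\LLT M=\LLT N=\top$ then $M,N$ are both order-$\infty$ unsolvables and $(*)$ gives $\encom M\asymp\encom N$ directly, so these pairs are dropped from $\R$. For an order-$k$ unsolvable, $M\whr\lambda\til x.M'$ with $M'$ order $0$ and $|\til x|=k$; conditions~\ref{i:beta} and~\ref{i:expa} give $\encom M\geq\encom{\lambda\til x.M'}$, and condition~\ref{i:sLLh} with the congruence property~\ref{i:sLLa} applied in the $\pi$-context $\encom{\lambda\til x.\holem}$ yields $\encom{\lambda\til x.M'}\asymp\encom{\lambda\til x.\Omega}$; as the same holds for $N$, we obtain $\encom M\asymp\encom N$ with no appeal to shape~(b) or to guardedness. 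I therefore restrict the candidate $\R$ to pairs of \emph{solvable} terms; an unsolvable child is discharged through the $P_i\asymp Q_i$ alternative of Definition~\ref{d:uniCAsyn}, using $(*)$ (order $\infty$) or the computation just given (order $k$), while a solvable child stays in $\R$.

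It remains to verify the candidate condition on a solvable node without guardedness of $\qctl$, and this is the main obstacle. The decisive move is not to stop after peeling one $\lambda$ (whose body may be unguarded, e.g.\ in strong call-by-name) but to head-reduce all the way to the head normal form: $\encom M\geq\encom{\lambda\til x.yM_1\cdots M_n}=\big(\encom{\lambda\til x.\holem}\circ\qctappn\big)[\encom{M_1},\dots,\encom{M_n}]$, so that the children $\encom{M_i}$ lie inside the \emph{variable} context $\qctappn$, which is guarded by condition~\ref{i:sLLb}; they are thus guarded holes regardless of whether the outer prefix $\encom{\lambda\til x.\holem}$ is. One must check rigorously that exposing the (possibly unguarded, but finite) $\lambda$-prefix cannot disturb the children---any transition touches at most the prefix or fires the single head-variable guard of $\qctappn$, after which the children reappear as holes---so that the single-step decomposition $P'\geq C[\til P]$ of Definition~\ref{d:uniCAsyn} still holds with $C$ absorbing the $\lambda$-prefix and the children as its guarded holes, the substitution-closure of input-guarded holes being inherited verbatim from the original proof via uniformity. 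In short, the coinductive productivity at every internal node is carried by the head-variable rendez-vous of the guarded $\qctappn$, not by the $\lambda$-prefix; the one node type where that productivity is unavailable---the order-$\infty$ leaf---is exactly the one quarantined by $(*)$.
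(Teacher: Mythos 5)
Your proposal is correct and follows essentially the same route as the paper's own proof: reduce to the completeness half, dispose of pairs of unsolvables of equal order via $(*)$ together with the order-$0$ condition and $\beta$-validity (the paper packages this as Lemma~\ref{l:POG}), and for solvable terms head-reduce all the way to the head normal form so that the children sit in the composite context $\ctlx{x_1}{\ldots\qctlx{x_l}[\qctappn]\ldots}$, whose holes are guarded because the \emph{variable} context is guarded, independently of the abstraction contexts. The only cosmetic difference is that you peel the $\lambda$-prefix of a finite-order unsolvable in one step rather than by induction on the order.
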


\DSb
The proof of Theorem~\ref{t:sound-strong} can be found in Appendix~\ref{ap:proof_conditions}.  
\DSe

\iflong
\subsection{The difference theorem}\label{ss:difference}
\input{conditions_difference.tex}

\fi


%
\section{Examples with call-by-name }
\label{sec-examples-cbn}

In this section we apply the theorems on soundness and completeness in
the previous section to two well-known encodings of call-by-name
$\lambda$-calculus:
the one in \afig \ref{f:example-lazy}.a is Milner's original encoding
\cite{Mil92}.
The one  in \afig \ref{f:example-lazy}.b is a variant encoding in which
a function communicates with its environment via a rendez-vous
(request/answer) pattern. An advantage of this encoding is that it can
be easily tuned to call-by-need, or even used in combination with
call-by-value \cite{SW01a}.

For each encoding we consider soundness and completeness with respect
to four behavioural equivalences: bisimilarity ($\approx$), may
($\may$), must ($\must$), and
asynchronous may ($\mayasy$).
The following lemma allows us to apply the \uptoc\ technique.

\begin{lemma}
\label{l:vali}
Relations $\approx$, $\may$, and $\mayasy$ validate the \uptoE\ technique;
relation
 $\must$ validates the \uptoEdiv\ technique.
\end{lemma}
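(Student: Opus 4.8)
The plan is to fix an equivalence $\asymp$ from the list together with its auxiliary expansion ($\expa$ for $\approx,\may,\mayasy$, and $\expa^\Div$ for $\must$), take an arbitrary \uptoc\ candidate $\R$ for $\asymp$, and prove ${\R}\subseteq{\asymp}$. In every case I would first enlarge $\R$ to a more manageable relation, the context closure
\[
\S \defin \{(\ct{\til P},\ct{\til Q}) \,:\, C \text{ a context, and for each } i,\ P_i \asymp Q_i \text{ or } P_i \RR Q_i\}
\]
pre- and post-composed with expansion, i.e. ${\contr}\,{\S}\,{\expa}$ (and ${\contrDiv}\,{\S}\,{\expaDiv}$ in the must case). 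Taking $C=\holem$ and reflexivity of expansion gives ${\R}\subseteq{\S}\subseteq{\contr}\,{\S}\,{\expa}$, so it suffices to prove ${\contr}\,{\S}\,{\expa}\subseteq{\asymp}$. The point of passing to this closure is that it is by construction stable under exactly the context-and-expansion manipulations the candidate condition produces on derivatives: whenever the candidate yields $P' \contr \ct{\til P}$, $Q' \contr \ct{\til Q}$ with the components related, the pair $(P',Q')$ already lies in ${\contr}\,{\S}\,{\expa}$, so the relevant game can be played without leaving the relation.

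For $\approx$ this is essentially the `up-to expansion and contexts' technique for bisimulation of \cite{SW01a}: I would verify that ${\contr}\,{\S}\,{\expa}$ is a bisimulation up to expansion, using that $\expa$ is a plain precongruence contained in $\approx$ (so a common context can be reinstated around expansion-related derivatives) and that $\approx$ is a congruence (so components already related by $\asymp=\approx$ may be placed under $C$). A derivative transition of $\ct{\til P}$ is analysed according to whether it is produced by $C$ alone, by one of the hole-fillers $P_i$, or by a communication between $C$ and some $P_i$; in each subcase the candidate clause for $\R$, or congruence for the $\asymp$-components, supplies the matching move and keeps the resulting pair inside the closure.

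For the contextual equivalences $\may$ and $\mayasy$ the target is no longer a bisimulation but preservation of (asynchronous) barb observability under all contexts. Here I would show that $\S$ is a weak simulation up to $\expa$ in the sense needed to transfer barbs: given $(P,Q)\in\R$ and a testing context $D$ with $D[P]\Dwa$ (resp. $D[P]\outcon$), I would trace the computation exhibiting the barb and use the candidate's transition-matching clause, iterated along that computation, to build a matching computation of $D[Q]$, matching communications between $D$ and the hole-fillers through the simulated visible moves of the $P_i$. Because ${\expa}\subseteq{\approx}\subseteq{\may}\subseteq{\mayasy}$ and $\may,\mayasy$ are congruences, the expansion steps and the surrounding context introduced by the closure never destroy the observable, so barbs are transported in both directions by symmetry of $\R$, giving ${\contr}\,{\S}\,{\expa}\subseteq{\asymp}$. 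For $\must$ the same scheme applies with divergence in place of barbs: I would show the closure preserves $\Div$ under all contexts, the crucial extra ingredient being that the divergence-sensitive expansion ${\expaDiv}\subseteq{\must}$ never discards a divergence, which is exactly why $\must$ requires the \uptoEdiv\ technique rather than \uptoE.

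The main obstacle, in every case, is the correct treatment of the two features that distinguish this technique from a naive `up to context' one. First, when a hole $\holei i$ occurs underneath an input prefix in $C$, the matching derivatives may be reached only after a substitution, so clause~(2) of Definition~\ref{d:uniCAsyn} demands $P_i\sigma \RR Q_i\sigma$ for all $\sigma$; keeping the closure stable under these substitutions relies on $\expa$ (resp. $\expa^\Div$) being plain and a precongruence, via Lemma~\ref{l:plain}. Second, and most delicate, is the reason expansion cannot be weakened to bisimilarity: the number of silent steps must be controlled so that the enhancement does not license the classic unsound `speed-ups', and for $\must$ this control must additionally be divergence-sensitive. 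Verifying that the closure genuinely absorbs the context and expansion manipulations while respecting these two constraints — in particular the bookkeeping for communications between the context and the hole-fillers — is where the real work lies.
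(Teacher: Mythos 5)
Your proposal matches the paper's own argument in all essentials: the paper likewise passes to the closure $\mathcal{S} \defin {\asymp} \cup \{(P_1,P_2) : P_i \contr \ct{\til P_i},\ (\til P_1,\til P_2)\in \R\cup{\asymp}\}$, cites \cite{SW01a} for the bisimulation case, and for $\may$ (and, mutatis mutandis, $\mayasy$ and $\must$) transports barbs through an arbitrary testing context by a case analysis on where the observable originates, with an induction on the number of internal steps preceding it — exactly the point where your observation that expansion "never increases the number of silent steps" is used. The treatment of substitution closure under input-guarded holes and of divergence-sensitivity for $\must$ is also as you describe, so the proposal is correct and follows the same route.
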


The
result in Lemma~\ref{l:vali}
 for the bisimulation is from \cite{SW01a}. The proofs for the may
equivalences follow the definitions of the equivalences, reasoning by
  induction on the number of steps required to
bring out an observable. The proof for the must equivalence uses
coinduction to reason on  divergent paths. Both for the may  and for
the
 must equivalences,  the
role of expansion ($\expa$) is similar to its role in the technique for
bisimulation.
Detailed discussion can be found in Appendix \ref{ap:property_eg}.

\begin{theorem}
\label{t:resCBN}
The encoding of Figure~\ref{f:example-lazy}.a
is fully abstract for LTs when the behavioural equivalence for
$\pi$-calculus   is $\approx, \may$, or $\must$; and fully abstract
for BTs when the behavioural equivalence is $\mayasy$.

The encoding of Figure~\ref{f:example-lazy}.b
is fully abstract for LTs under any of the equivalences
 $\approx, \may$, $\must$, or $\mayasy$.
\end{theorem}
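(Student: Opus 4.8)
The plan is to obtain every assertion of Theorem~\ref{t:resCBN} as an instance of the general full-abstraction result, Theorem~\ref{t:fa}. Fix one of the two encodings (Figure~\ref{f:example-lazy}.a or~.b) and a target equivalence $\asymp\in\{\approx,\may,\must,\mayasy\}$, and take the auxiliary precongruence to be $\leq:=\expa$ in the cases $\approx,\may,\mayasy$, and $\leq:=\expaDiv$ when $\asymp$ is $\must$. With these choices the two purely behavioural hypotheses of Theorem~\ref{t:fa} come for free: hypothesis~(1) ($\leq$ an expansion relation with ${\asymp}\supseteq{\geq}$) follows from the inclusions ${\expa}\subseteq{\approx}\subseteq{\may}\subseteq{\mayasy}$ and ${\expaDiv}\subseteq{\must}$ recorded in Section~\ref{sec-background}, together with the fact that $\expa$ and $\expaDiv$ are plain precongruences; and hypothesis~(2) (validity of the \uptoE\ / \uptoEdiv\ technique) is exactly Lemma~\ref{l:vali}. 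Since call-by-name never evaluates under a $\lambda$, the abstraction contexts are guarded, so Theorem~\ref{t:fa} itself suffices and Theorem~\ref{t:sound-strong} is not needed here.

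Next come the conditions that depend on the shape of the encoding. Inspecting the clauses of Figure~\ref{f:example-lazy}, I verify that in both encodings the variable context $\qctappn$ places its holes underneath the prefix(es) interrogating the head name, giving hypothesis~(3); the same inspection yields guardedness of the abstraction context $\qctl$, needed for the LT clauses. For the inverse requirement, hypothesis~(4), I observe that the two contexts treated in Example~\ref{exa:inverse} are tailored precisely to the abstraction context of encoding~.a and to a variable/rendez-vous context; thus the inverse contexts $D_i$ built there, whose correctness rests on Lemma~\ref{l:comm} and routine structural laws, transfer, establishing inversion with respect to $\contrDiv$ and hence with respect to $\contr$, since ${\contrDiv}\subseteq{\contr}$. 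Validity of rule $\beta$ up to $\geq$ (hypothesis~(5)) is the standard operational-correctness property: the $\pi$-communication simulating a head $\beta$-step can be discharged up to expansion, so $\encom{(\lambda x.M)N}\;{\geq}\;\encom{M\{N/x\}}$.

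For the order-$0$ unsolvable condition (hypothesis~(6)) I note that if $M$ is unsolvable of order $0$ then head reduction never uncovers a $\lambda$, so $\encom M$ exhibits the same purely internal/divergent behaviour as $\encom\Omega$; this gives $\encom M\asymp\encom\Omega$ for each of the four equivalences, appealing to Lemma~\ref{l:ina-div-NEW} for the contextual relations $\may,\must,\mayasy$ and to a direct bisimulation for $\approx$. The separation conditions (hypothesis~(7), and clauses~(i)(b)/(ii)(b)) are discharged by exhibiting discriminating contexts: the head name of a variable-headed term is read off from the channel carrying its first output, the number of arguments from the arity of the ensuing interaction, and $\encom\Omega$ is separated from every variable-headed term because it can never emit an output; all these observations survive the asynchronous observable, so they hold uniformly for $\approx,\may,\must,\mayasy$.

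The crux, and the only place where the four equivalences and the two encodings genuinely diverge, is clause~(i)(b) versus the B\"ohm-tree clause~(ii): whether $\encom{\lambda x.M}$ can be separated from $\encom\Omega$. This reduces to the observability of the \emph{value signal} an abstraction emits before exposing its body. In Milner's encoding~.a this signal is an \emph{input} on the location name; the resulting input-guarded process is separated from $\encom\Omega$ by $\approx,\may$ (whose barb predicate $\Dwa$ records input actions) and by $\must$ (the input-guarded abstraction fails to diverge whereas $\encom\Omega\Div$), so clause~(i)(b) holds and one gets full abstraction for LTs; but it is \emph{not} recorded by the asynchronous observable $\outcon$, so $\encom{\lambda x.\Omega}\mayasy\encom\Omega$ (via the inactive and input-prefix clauses of Lemma~\ref{l:ina-div-NEW}), clause~(i)(b) fails, and instead the BT clauses apply. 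Here I must additionally check clause~(ii)(a), namely $\encom M\mayasy\encom\Omega$ for $M$ unsolvable of order $\infty$ — a term producing unboundedly many input-guarded abstractions but never an output is asynchronously indistinguishable from $\Omega$, again by Lemma~\ref{l:ina-div-NEW} — and clause~(ii)(b), which still holds because a \emph{solvable} body eventually produces a head output that a context can expose. In the rendez-vous encoding~.b the value signal is instead an \emph{output}, visible even to $\outcon$; hence clause~(i)(b) holds for all four equivalences and encoding~.b is fully abstract for LTs throughout. I expect this observability analysis — matching the input/output status of the value signal against the synchronous versus asynchronous barb, and verifying the order-$\infty$ collapse~(ii)(a) in the asynchronous case — to be the main obstacle, the remaining verifications being routine given Lemmas~\ref{l:vali},~\ref{l:comm},~\ref{l:ina-div-NEW} and Example~\ref{exa:inverse}.
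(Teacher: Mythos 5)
Your proposal is correct and follows essentially the same route as the paper: it instantiates Theorem~\ref{t:fa} with $\leq$ taken to be $\expa$ (or $\expaDiv$ for $\must$), discharges the behavioural hypotheses via Lemma~\ref{l:vali} and the stated inclusions, verifies guardedness, $\beta$-validity, the order-$0$ collapse, the separation conditions and the inverse-context property exactly as the paper does (the latter for $\expaDiv$ and then exported, with the constructions of Example~\ref{exa:inverse}/Appendix~\ref{ap:inverse_context_eg}), and isolates the same crux, namely that the abstraction's initial prefix is an input in Milner's encoding (invisible to $\outcon$, forcing the BT clauses and requiring the order-$\infty$ collapse via Lemma~\ref{l:ina-div-NEW}) but an output in the variant encoding (visible, giving LTs throughout). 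No gaps worth noting.
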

As Lemma~\ref{l:vali} brings up,
 in the proofs,
the auxiliary relation
  for  $\approx$, $\may$, and $\mayasy$
 is $\expa$; for
 $\must$ it is $\expa^\Div$.
With   Lemma~\ref{l:vali} at hand, the proofs for the soundness and
completeness statements  are simple. Moreover, there is a large reuse
of proofs and   results.  For instance, in the  completeness results for LTs, we only have to check
that: the variable and abstraction contexts of the encoding are
guarded;  $\beta$ rule is validated;   all unsolvable of order $0$ are
equated. The first check  is
straightforward and is done only once.
For the $\beta$ rule, it
 suffices to establish its
validity  for $\expa^\Div$, which  is the finest among
the behavioural relations considered; this is done using
distributivity laws for private replications \cite{SW01a}, which
are valid for strong bisimilarity and hence for $\expa^\Div$, and
the law of Lemma~\ref{l:comm}. Similarly,
for the  unsolvable terms of order $0$
it suffices to prove
that they are  all `purely divergent', i.e., divergent and unable
to even perform some visible action, and this follows from
the validity of the $\beta$ rule for $\expa^\Div$.

Having checked the conditions for completeness, the only two
additional conditions needed for
 soundness for LTs are conditions \conref{con:lls:f} and \conref{i:inverse} of
 Definition~\ref{d:resp}, where we have to prove that certain terms
 are unrelated and that certain contexts have an inverse. The
 non-equivalence of the terms in  condition \conref{con:lls:f} can
be established for the coarsest equivalences, namely $\mayasy$
 and $\must$, and then exported to the other equivalences. It suffices to look at visible traces of length $1$ at
 most, except for terms of the form
 $\encodingm{x\ve M}{}{}$ and $\encodingm{x\ve {M'}}{}{}$,
when   tuples $\ve M $ and $\ve {M'} $
have different  lengths,
in which case one reasons by induction on the shortest of the two tuples (this argument is straightforward on the basis of the `inverse context' property explained below).

The most delicate point is
the `inverse context' property, i.e., the existence of an inverse for the
 abstraction
and the  variable contexts. This can be established for the finest
equivalence ($\expaDiv$), and
then exported to coarser equivalences.
The two constructions needed for this are similar to  those examined
in Example~\ref{exa:inverse}.
We give detailed proofs concerning `inverse context' for the examples in Appendix~\ref{ap:inverse_context_eg}.

\begin{figure}[t] 
{\small
\centering
 \begin{minipage}{0.45\textwidth}%
\centering
\[
\begin{array}{rcl}
\encom{ \lambda     x . M } & \defin &  \abs p \inp{ p}{x, q}.  \enco{M}{q} \\[3pt]
\encom{x} & \defin &    \abs p \out{x}{p}  \\[3pt]   
\encom{M N } & \defin &   \abs p (\res{r, x}) \Big(\enco{M}{r}  |  \out r {x, p} | \\[3pt]
& &   !  \inp x q .  \enco N q \Big) \quad\mbox{(for  $x$ fresh)}
\end{array} 
\]\vspace*{-.5cm}
\caption*{Figure \ref{f:example-lazy}.a: Milner's
encoding}
\label{robin-original-lazy} 
\end{minipage}%
\hspace{19pt}
 \begin{minipage}{0.45\textwidth}%
\centering
\[
\begin{array}{rcl}
\encom{ \lambda     x . M } & \defin &  \abs p \res v (\out{ p}{v} \para \inp {v} {x,q} . \enco{M}{q}) \\[3pt]
\encom{x} & \defin &    \abs p \out{x}{p}  \\[3pt]   
\encom{M N } & \defin &   \abs p \res{r} \Big(\enco{M}{r}  |  \\[3pt] 
&  & \inp r v. \res{x} (\out v {x, p} | \\[3pt]
&  &   !  \inp x q .  \enco N q)\Big)  \quad\mbox{(for  $x$ fresh)} 
\end{array} 
\]\vspace*{-.5cm}
\caption*{Figure \ref{f:example-lazy}.b: a variant
encoding 
}\label{robin-var-lazy-lpi} 
\end{minipage}%
}
 \caption{The two encodings of  call-by-name} \vspace*{-.4cm}
\label{f:example-lazy}
 \end{figure}



For Milner's encoding,
in the case of $\mayasy$,
we actually obtain   the BT equality. One
may find this surprising at first: BTs are defined from  weak
head reduction, in which evaluation continues underneath a
$\lambda$-abstraction; however Milner's encoding mimics the
call-by-name strategy, where reduction stops  when a
$\lambda$-abstraction is uncovered.
We obtain BTs with $\mayasy$ by exploiting
Lemma~\ref{l:ina-div-NEW}(1) as follows. The encoding of a term $\lambda x. M$ is
$  \abs p \inp{ p}{x, q}.  \enco{M}{q} $.
In an asynchronous semantics, an input is not directly observable;
with $\mayasy$
an   input prefix can actually be erased provided,  intuitively,
that an  output is never  liberated.
We sketch the  proof of
$ \encom {M} \mayasy \encom{\Omega} $ whenever $M$ is unsolvable
 of order $\infty$, as required in
 condition (ii) of Theorem~\ref{t:fa}.
Consider a context $\qct$ with $\ct {\encom M}
 \Dwa$, and  suppose the observable is reached after $n$ internal
 reductions. Term  $M$, as $\infty$-unsolvable, can be
 $\beta$-reduced    to $M' \defin  
 \lambda x_1...\lambda x_n.N$,
 for some $N$.  By validity of $\beta$-rule for $\contr$, also
$\ct{\encom{M'}}\Dwa$ in at most $n $ steps;
hence the subterm $\encom N $ of $\encom{M'}$ does not
contribute  to the observable,
since the abstraction
contexts of the encodings are guarded and $M'$ has $n$ initial
abstractions. We thus derive
$\ct{\encom{\lambda x_1...\lambda x_n.\Omega}}\Dwa$ and then,
by repeatedly applying the third statement of
Lemma~\ref{l:ina-div-NEW}(1) (as $\Omega$ is inactive),
 also $\ct{\encom{\Omega}}\Dwa$. 
The converse implication is given by the first statement in  Lemma~\ref{l:ina-div-NEW}(1).



\section{An example with  strong call-by-name }\label{sec-examples-scbn}

In this  section we consider  a different
$\lambda$-calculus strategy,   strong call-by-name, where
\iflong
, in contrast
with ordinary call-by-name,
\fi the evaluation of a term may continue
underneath a $\lambda$-abstraction. The main reason is that we wish
to see the impact of this difference on the equivalences  induced by the
 encodings.  Intuitively,  evaluation
underneath a $\lambda$-abstraction is fundamental in the definition
of BTs and therefore we expect that obtaining  the BT equality will be
easier.
\iflong
 for strong call-by-name than it is for  ordinary
call-by-name.
\fi
However, the LT
equality will  still be predominant: in BTs a
$\lambda$-abstraction is sometimes unobservable, whereas in an
encoding into $\pi$-calculus a
$\lambda$-abstraction always introduces a few prefixes, which
are observable in the most common behavioural equivalences.

The  encoding of strong call-by-name,
from \cite{HMS12}, is  in \afig \ref{f:strongNOTYPE}.
The  encoding behaves similarly to that in
\afig \ref{f:example-lazy}.b;  reduction underneath a `$\lambda$'
is implemented by  exploiting  special wire  processes (such as $\link
qp$). They allow us to split the body $M$ of an abstraction from its
head $\lambda x$; then the  wires make the liaison between the head and the body.
\iflong

the one in \afig \ref{f:strongNEW}.a is from

and its properties are collected  in Table
\ref{f:strongNEW}.a;

 the one in
Figure \ref{f:strongNEW}.b is from \cite{SW01a}.

, and its
properties are found in Table \ref{f:strongNEW}.b.
\fi
It actually uses the \emph{synchronous}
$\pi$-calculus, because some of the output prefixes have a
continuation. Therefore the encoding also offers us the possibility of
discussing the portability of our conditions to the synchronous
$\pi$-calculus. For this, the only point in which some care is needed
is that in the synchronous $\pi$-calculus, bisimilarity  and expansion
 need
some closure under name substitutions, in the input clause (on the
placeholder name of the input), and the outermost level (that is, before
the bisimulation or expansion game is started) to become  congruence
or precongruence relations. Name substitutions
may be applied following the early, late or open styles. The move from
a style to another one does not affect the results in terms of BTs
and LTs in the paper. We omit the definitions,
 see e.g., \cite{SW01a}.
\begin{figure}[t] 
{\small
\[
\begin{array}{rclcrcl}
\encom{ \lambda  x . M } & \defin &  \abs p \resb{x,q} (\out{p}{x,q}
\para \enco M q)  &\hskip 2cm  &
\encom{x} & \defin &    \abs p (\inp{x}{p'}.  (\link {p'} p))   \\[3pt]
\encom{M N } & \defin & 
\multicolumn{5}{l}{ 
  \abs p \resb{q,r} (\enco{M}{q} \para 
 \inp{q}{x,p'}.(\link{p'}{p} \para  !\out{x}{r}.\enco{N}{r})) \;\;  
 \quad  \mbox{(for  $x$ fresh)}
}\end{array}
\]
$\qquad$ where $\link r q \defin \inp r {y,h} . \out q {y,h}$  
} 
\caption{Encoding of strong call-by-name}\vspace*{-.4cm}
\label{f:strongNOTYPE}
\end{figure}

In short,  for any of the standard  behavioural congruences and
expansion precongruences of the synchronous $\pi$-calculus, the
conditions concerning $\asymp$ and $\leq$ of the theorems in
Section~\ref{ss:sounds_completes}  remain valid.
In Theorem~\ref{t:resSCBN} below,
we continue to
use the symbols $\approx$ and $\expa$ for
bisimilarity and expansion, assuming that these are bisimulation
congruences and   expansion precongruences in any of the common
$\pi$-calculus styles (early, late, open).
Again, in the case of must equivalence
the expansion preorder should be divergence sensitive.
The proof of Theorem~\ref{t:resSCBN} is similar to that of
Theorem~\ref{t:resCBN}. The main difference is that, since in strong
call-by-name the abstraction contexts are not guarded, we have to
adopt the modification in one of the conditions for LTs suggested in
Theorem~\ref{t:sound-strong}. Moreover, for the proof of validity of
$\beta$ rule for $\expa$, we use the following law to reason about
wire processes $\link r q$ 
\Db
(and similarly for   $\expaDiv$); 
see \cite{HMS12,SW01a} for  more discussion:
\De
\begin{itemize}
\item $ \res q ( \link qp | P ) \contr P \sub p q$ provided $p$ does
  not appear free in $P$, and $q$ only appears free in $P$ only once,  in
a subexpression of the form $\out q {\til v}. \nil$.
\end{itemize}
This law is also needed when proving the existence of inverse context (the most involved condition).
The detailed proof of the existence of inverse context is given in  Appendix~\ref{ap:inverse_context_eg}.

\begin{theorem}
\label{t:resSCBN}
The  encoding of Figure~\ref{f:strongNOTYPE}
is fully abstract for LTs when the behavioural equivalence for
the $\pi$-calculus   is $\approx, \may$, or $\mayasy$; and fully abstract
for BTs when the behavioural equivalence is $\must$.
\end{theorem}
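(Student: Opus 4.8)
The plan is to obtain both claims as instances of the full-abstraction framework of Theorem~\ref{t:fa}, following the template already used for the two call-by-name encodings in Theorem~\ref{t:resCBN}. The one genuine deviation is that the abstraction contexts of the encoding of Figure~\ref{f:strongNOTYPE} are \emph{not} guarded: the clause $\encom{\lambda x . M} = \abs p \resb{x,q}(\out{p}{x,q} \para \enco M q)$ exposes an unguarded output on the parameter $p$. Hence, for the LT assertions I would invoke the variant of Theorem~\ref{t:fa}(i) licensed by Theorem~\ref{t:sound-strong}, replacing the guardedness requirement on abstraction contexts by the condition $(*)$ that equates all $\infty$-unsolvables. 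Since the encoding lives in the \emph{synchronous} $\pi$-calculus, I would first record that $\approx$, $\may$, $\mayasy$, $\must$ and the auxiliary expansions $\expa$, $\expaDiv$ retain their required status (congruence / plain precongruence / expansion relation) once the standard name-substitution closures are imposed at the input and outermost levels, uniformly in the early/late/open style; in particular the synchronous analogue of Lemma~\ref{l:vali} supplies validity of the up-to-context technique, with auxiliary relation $\expa$ for $\approx, \may, \mayasy$ and $\expaDiv$ for $\must$.

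With the structural hypotheses of Theorem~\ref{t:fa} in place, the remaining conditions are largely routine and shared across the four equivalences. The variable contexts are guarded by inspection of Figure~\ref{f:strongNOTYPE}, since each argument hole $\enco N r$ sits beneath a replicated output prefix $!\out{x}{r}$; this syntactic check is done once. Validity of rule $\beta$ I would establish for the finest auxiliary relation, i.e.\ for $\contrDiv$ (whence it holds for $\contr$, as $\contrDiv \subseteq \contr$): the computation uses the private-replication distributivity laws, the communication law of Lemma~\ref{l:comm}, and crucially the wire law $\res q(\link q p \para P) \contr P\sub p q$ (valid when $p \notin \fn{P}$ and $q$ occurs free in $P$ exactly once, within a subterm $\out q {\til v}.\nil$), together with its $\contrDiv$ variant, which discharges the wires $\link{p'}{p}$ linking an abstraction head to its detached body. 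The order-$0$ unsolvables are then equated to $\encom\Omega$ as a corollary of $\beta$-validity, since they are all purely divergent. The non-equivalence requirements — condition (7) of Theorem~\ref{t:fa} and the clauses separating $\encom{\lambda x . M}$ from $\encom\Omega$ and from variable terms $\encom{x \ve M}$ — I would verify for the coarsest relations ($\mayasy$ and $\must$) and transport to the finer ones, arguing on visible traces of length at most one and, where two argument tuples differ in length, by induction on the shorter tuple supported by the inverse-context property below.

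The correspondence between the chosen equivalence and the resulting tree shape is the conceptual core. Under $\approx$, $\may$ and $\mayasy$ the uncovering of each abstraction liberates an observable output on the parameter $p$, so an unsolvable of finite order $n$ is observably distinct from one of order $m \neq n$ and from any head-normal form; this yields exactly the LT granularity, while condition $(*)$ holds because two $\infty$-unsolvables both reduce, under $\beta$ and $\expaDiv$, to processes emitting an unbounded stream of such outputs with no further distinguishing behaviour, hence are equated by each of $\approx$, $\may$, $\mayasy$. Under $\must$, by contrast, individual outputs are invisible and only divergence is observed: every unsolvable term, of whatever order, reduces underneath its abstractions to an order-$0$ unsolvable whose encoding performs an internal $\tau$-divergence that no context can prevent, so all unsolvables collapse onto $\encom\Omega$ in every context; whereas a \emph{solvable} term converges to a head-normal form and is thus separated from $\encom\Omega$, as clause (ii)(a) demands. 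This delivers the coarser BT equality, and explains why $\mayasy$ produces LTs here — abstractions surface as observable outputs — even though the same observable gave BTs for Milner's encoding in Theorem~\ref{t:resCBN}, whose abstractions are encoded by an \emph{input}.

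The hard part will be the inverse-context property, clause \conref{i:inverse} of Definition~\ref{d:resp}, for the abstraction and variable contexts of Figure~\ref{f:strongNOTYPE}. I would establish it once for $\contrDiv$ (in the sense of Definition~\ref{d:inverse_conext}); this directly supplies the clause for $\must$, where $\geq$ is $\contrDiv$, and, since $\contrDiv \subseteq \contr$, also for the other three equivalences, where $\geq$ is $\contr$. The inverting contexts are built as in Example~\ref{exa:inverse}(2), but the presence of the wire processes $\link{p'}{p}$ forces the context $D_i$ first to drive a rendez-vous that resolves the wire — using the wire law above together with Lemma~\ref{l:comm} — before the abstraction sitting in the hole can be exposed and its parameter re-exported to the observer. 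Checking that the resulting algebraic simplifications yield, uniformly in the filler, a process of the prescribed shape $\resb{\til b}(\out a {\til c} \para b(z).\app{A_i}{z})$ up to $\contrDiv$ is the most delicate and calculation-intensive step; the explicit construction is deferred to the appendix.
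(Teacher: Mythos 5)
Your proposal follows essentially the same route as the paper: instantiate the general full-abstraction framework as in Theorem~\ref{t:resCBN}, replace the guardedness of abstraction contexts by condition $(*)$ via Theorem~\ref{t:sound-strong} for the LT cases, port the congruence/expansion/up-to conditions to the synchronous calculus, prove $\beta$-validity and the inverse-context property for $\contrDiv$ using the wire law $\res q(\link q p \mid P) \contr P\sub p q$, and explain the LT/BT split by the observability of the initial output of an abstraction (visible under $\approx$, $\may$, $\mayasy$; invisible under $\must$, where all unsolvables diverge). The reasoning and the delegation of the inverse-context calculation to an appendix match the paper's own proof sketch and its Appendix~\ref{ap:inverse_context_eg}.
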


Thus
we obtain the BT equality for the must equivalence. Indeed, under
strong call-by-name, all unsolvable terms are divergent.
In contrast with Milner's encoding of Figure~\ref{f:example-lazy}.a,
under asynchronous may equivalence we  obtain  LTs
 because in the encoding of
strong call-by-name the first action of an abstraction is an output,
rather than an input as in Milner's encoding, and outputs are
observable in asynchronous  equivalences.

\section{Types and asynchrony}
\label{s:typ_asy}
 \xx{\large Extend this section and prove Theorem \ref{t:resCBNlin}?}

We show, using Milner's encoding (Figure~\ref{f:example-lazy}.a), that we can
sometimes switch from LTs to BTs by taking into account some simple
\emph{type}  information together with asynchronous forms of behavioural
equivalences.  The type information
needed is the linearity of the parameter name of the encoding (names
$p,q,r$ in Figure~\ref{f:example-lazy}.a). Linearity ensures us that the
external environment can never cause interferences along these names:
if the input capability is used by the process encoding a
$\lambda$-term, then the external environment cannot exercise the same
(competing) capability.  In an asynchronous behavioural equivalence
input prefixes are not directly observable (as discussed earlier for
asynchronous may).

Linear types and asynchrony can easily be incorporated in a
bisimulation congruence by using a contextual form of bisimulation
such as  \emph{barbed congruence} \cite{SW01a}. In this case, barbs  (the observables of
barbed congruence) are only produced by output prefixes (as in
asynchronous may equivalence); and the contexts in which processes may
be tested should respect the type information ascribed to processes
(in particular the linearity mentioned earlier).
We write $\approx_{\rm{bc}}^{\rm{lin,asy}}$ for
 the resulting
asynchronous typed barbed congruence. Using
Theorem~\ref{t:fa}(ii) we obtain:
\begin{theorem} 
\label{t:resCBNlin}
The encoding of Figure~\ref{f:example-lazy}.a
is fully abstract for BTs when the behavioural equivalence for
the $\pi$-calculus   is
$\approx_{\rm{bc}}^{\rm{lin,asy}}$.
\end{theorem}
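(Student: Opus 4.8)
The plan is to verify the seven hypotheses of Theorem~\ref{t:fa}(ii) for Milner's encoding (Figure~\ref{f:example-lazy}.a) with $\asymp$ taken to be the asynchronous typed barbed congruence $\BCTA$ and $\leq$ taken to be (divergence-insensitive) expansion $\expa$, suitably restricted to respect the linear typing. Let me work through what each condition needs and where the real work lies.

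Let me go through the conditions systematically.

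Condition (1): $\leq$ is an expansion relation and $\asymp \supseteq \geq$. Expansion is finer than barbed congruence here because barbed congruence with asynchronous output barbs is coarser than bisimilarity, and $\expa \subseteq \approx \subseteq \BCTA$.

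Condition (2): $\asymp$ validates the up-to-$\leq$-and-contexts technique. This is the analogue of Lemma~\ref{l:vali}. Barbed congruence is a contextual equivalence defined via output barbs, so the reasoning is like the $\mayasy$ case in Lemma~\ref{l:vali} — induction on the number of steps to bring out an output barb, with expansion playing its usual role. The linear typing doesn't obstruct this; it only restricts the contexts.

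Condition (3): variable contexts guarded — syntactic, checked once (true for Milner's encoding, the variable context $\qctappn$ begins with prefixes).

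Condition (4): abstraction and variable contexts have inverse with respect to $\geq$ — established for $\expaDiv$ (hence for $\expa$) via the constructions of Example~\ref{exa:inverse}, and then exported. This is reused from the proof of Theorem~\ref{t:resCBN}.

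Conditions (5)–(7): validity of $\beta$, equating order-$0$ unsolvables with $\encom\Omega$, and pairwise non-equivalence of $\encom\Omega, \encom{x\ve M}, \encom{x\ve{M'}}, \encom{y\ve{M''}}$. These are again reused verbatim from the call-by-name analysis; non-equivalence only needs to be checked for the coarsest relation, and $\BCTA$ is coarse.

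So the bulk of Theorem~\ref{t:resCBNlin} is identical to the proof of Theorem~\ref{t:resCBN}; the genuinely new content is the part (ii) hypotheses for BTs.

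Here is how I would structure the proof.

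First, I would observe that most of the hypotheses of Theorem~\ref{t:fa} have already been discharged in the proof of Theorem~\ref{t:resCBN} for the same encoding: conditions (1)--(7) depend only on the encoding together with $\asymp$ and $\leq$, and since $\BCTA$ sits above $\approx$ and below the asynchronous may equivalence $\mayasy$, each behavioural hypothesis verified there (validity of $\beta$ for $\expa$, equating of order-$0$ unsolvables, pairwise non-equivalence of the four families of terms, guardedness of the variable contexts, and the inverse-context property for $\expaDiv$) transfers directly. The up-to-$\leq$-and-contexts technique for $\BCTA$ is established exactly as for $\mayasy$ in Lemma~\ref{l:vali}, since barbs of an asynchronous typed barbed congruence are produced by output prefixes only, so the inductive argument on the number of reductions needed to expose an output barb goes through unchanged; the linear typing merely shrinks the set of admissible testing contexts, which can only make the technique easier to validate.

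Second, I would verify the two clauses (a) and (b) of Theorem~\ref{t:fa}(ii), which are the only hypotheses specific to the BT case. For clause (a), I need $\encom{\lambda x.M} \not\BCTA \encom{\Omega}$ and $\encom{\lambda x.M} \not\BCTA \encom{x\ve M}$ whenever $M$ is solvable. Here I would exploit linearity together with the synchronous-style non-equivalence already proved for $\approx$ in the call-by-name analysis: an output barb at the parameter name distinguishes $\encom{\lambda x.M}$ (whose first action on $p$ is an input in Milner's encoding) from $\encom{x \ve M}$ and from $\encom{\Omega}$, and this distinction survives under $\BCTA$ precisely because the linear type on $p$ prevents the environment from masking it. This is the same observation as in the discussion following Theorem~\ref{t:resCBN} for $\mayasy$.

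Third, and this is where the real work lies, I must prove clause (b): $\encom M \BCTA \encom\Omega$ whenever $M$ is unsolvable of order $\infty$. This is exactly the statement proved at the end of Section~\ref{sec-examples-cbn} for $\mayasy$, and I would reproduce that argument with $\BCTA$ in place of $\mayasy$. The key device is Lemma~\ref{l:ina-div-NEW}(1): given a (typed) context $\qct$ with $\ct{\encom M}$ producing an output barb after $n$ internal steps, I use validity of $\beta$ for $\contr$ to pass to $M' \defin \lambda x_1 \ldots \lambda x_n. N$ with $\ct{\encom{M'}}$ still barbed in at most $n$ steps; since the abstraction contexts of Milner's encoding are guarded and $M'$ begins with $n$ abstractions, the subterm $\encom N$ cannot contribute to the barb, so $\ct{\encom{\lambda x_1 \ldots \lambda x_n. \Omega}}$ is barbed; then repeated application of the third statement of Lemma~\ref{l:ina-div-NEW}(1) (valid because $\BCTA$ uses asynchronous output barbs, for which guarding input prefixes are erasable) yields $\ct{\encom\Omega}$ barbed. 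The converse direction comes from the first statement of Lemma~\ref{l:ina-div-NEW}(1), since $\encom\Omega$ is inactive.

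The main obstacle is checking that Lemma~\ref{l:ina-div-NEW}(1) applies in the \emph{typed} setting: the lemma is stated for arbitrary process contexts, whereas $\BCTA$ quantifies only over contexts respecting the linear, asynchronous type discipline. I would argue that this restriction is harmless — indeed beneficial — because the lemma's conclusions are \emph{implications} of the form $\ct P \outcon \Rightarrow \ct Q \outcon$, so narrowing the class of contexts $\qct$ only weakens the hypotheses we must handle; the input-prefix erasure step (third bullet of Lemma~\ref{l:ina-div-NEW}(1)) remains sound because the relevant input occurs on the linear parameter name, whose input capability is owned by the encoded term, so no typed context can compete for it. Assembling these pieces, all hypotheses of Theorem~\ref{t:fa}(ii) hold, and the conclusion gives full abstraction for BTs under $\BCTA$.
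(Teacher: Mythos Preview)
Your overall strategy---verify the hypotheses of Theorem~\ref{t:fa}(ii) with $\asymp=\BCTA$ and $\leq=\expa$---matches the paper's, and the reuse of conditions (3)--(7) from the call-by-name analysis is sound. But your treatment of clause~(ii)(b) has a genuine gap, and it stems from mischaracterising $\BCTA$ as a may-style equivalence; the same confusion surfaces in your sketch for condition~(2).

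You propose to prove $\encom M \BCTA \encom\Omega$ for $M$ unsolvable of order~$\infty$ by ``reproducing'' the $\mayasy$ argument from Section~\ref{sec-examples-cbn} via Lemma~\ref{l:ina-div-NEW}(1). That argument establishes only that $\ct{\encom M}\outcon$ iff $\ct{\encom\Omega}\outcon$ for every context $C$, i.e.\ asynchronous may equivalence. But $\BCTA$ is barbed \emph{congruence}: the underlying barbed bisimilarity is coinductive and demands reduction closure, not merely barb coincidence. Concretely, if a context supplies an output on the parameter name $p$, then $\ct{\encom M}$ can consume it (since $M\Rah\lambda x.M'$) while $\ct{\encom\Omega}$ cannot; after this step the derivatives differ by a dangling $\out p{\til d}$ on the $\Omega$ side, and your barb-counting argument says nothing about why they remain bisimilar. (The same conflation appears when you say the up-to technique for $\BCTA$ is proved ``like the $\mayasy$ case---induction on the number of steps to bring out an output barb''; the paper's Claim~4 in Appendix~\ref{ap:property_eg_type} proves it along the bisimulation pattern, as for $\approx$.) The paper's route for clause~(ii)(b) is different: Claims~1--2 in Appendix~\ref{ap:property_eg_type} argue directly from the linear recursive type $T_b=\mu X.\,l_i(\sharp X,X)$ assigned to $p$, so that the environment's single permitted interaction on $p$ exhausts its capability and the resulting asymmetry is neutralised under the typed congruence. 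Linearity is doing essential bisimulation-level work here---not merely, as you suggest, ensuring that ``no typed context can compete for'' the input so that Lemma~\ref{l:ina-div-NEW}(1) still applies.
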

The auxiliary relation is
still $\expa$; here asynchrony and linearity are not needed.
We give the detailed development of Theorem \ref{t:resCBNlin} in Appendix \ref{ap:property_eg_type}.


%
\section{Conclusions and future work}
\label{sec-conclusion}

In this paper we have studied soundness and completeness
conditions with respect to BTs and LTs for
 encodings of the $\lambda$-calculus into the
$\pi$-calculus. While the conditions have been presented on the
$\pi$-calculus, they can be adapted to some other concurrency
formalisms. For instance, expansion, a key preorder  in our
conditions, can always be extracted from bisimilarity as its
``efficiency'' preorder.  It might be difficult, in contrast, to adapt
our conditions  to sequential languages; a delicate condition, for
instance, appears to be  the one on inversion of
variable contexts.

We have used the conditions to derive tree characterizations for
various encodings and various behavioural equivalences, including
bisimilarity, may and must equivalences, and asynchronous may equivalence.
\begin{xxenv}
Tables (\ref{f:example-lazy}.a), (\ref{f:example-lazy}.b), and (\ref{f:strongNOTYPE})
summarize the results with respect to BTs and LTs
for the encodings and the behavioural equivalences examined in the
paper.
\Db In a table, a check mark means that corresponding result
holds; otherwise \DSb
a
symbol $\crossDS$
indicates that the result is
 false.
\DSe
\De
\DSb
The results in the first column of
 Tables (\ref{f:example-lazy}.a) and (\ref{f:example-lazy}.b)
 appear in the literature \cite{San95lazy,SW01a}.
Concerning the remaining columns and tables,
the results are new, though 
\Db some of them \De
 could have   been obtained with variants of the proofs in  \cite{San95lazy,SW01a}.
The  main contribution of the current paper,  more than
the results themselves, is the  identification   of  some general and
 abstract
conditions that allow one to derive  such results.
\DSe
Some of the check marks are not stated in Theorems \ref{t:resCBN}, \ref{t:resSCBN} and \ref{t:resCBNlin}; they are inferred from the following  facts: soundness for LT implies soundness for BT; completeness for BT implies completeness for LT, since LT equality implies BT equality. 
We recall that  $\;\;\BISI\;\;$ is weak bisimilarity;
 $\may$ is may equivalence;  $\mayasy$  is asynchronous may
 equivalence;   $\must$  is must equivalence; and
 $\approx_{\rm{bc}}^{\rm{lin,asy}}$ is asynchronous barbed congruence
 with the linearity  type constraints on the location names of the
 encoded $\lambda$-terms (i.e., the abstracted name in the encoding of
 a $\lambda $-term).
\end{xxenv}
\Db
The negative results in the tables (i.e., the occurrences of
  symbol   $\crossDS$)
 are consequences of the difference
between LTs and BTs: an encoding that is fully abstract for LTs
cannot be complete for BTs, whereas an
encoding fully abstract for BTs
cannot be sound for LTs.
\De

 \begin{table}[t]
\centering
 \begin{minipage}{0.45\textwidth}%
\renewcommand{\arraystretch}{1.9}
\centering
\begin{tabular}{|c|c|c|c|c|c|c|c|} 
\hline
    &   & $\;\;\BISI\;\;$  & $\may$ & $\mayasy$ &  $\must$ & $\approx_{\rm{bc}}^{\rm{lin,asy}}$\\\hline  
 & \small\,{complete\,}   & \Checkmark  & \Checkmark  & \Checkmark   &
 \Checkmark &
 \Checkmark   \\ \cline{2-7}
\rb{LT} & \small{sound}      & \Checkmark   & \Checkmark  &\crossDS
 &  \Checkmark & \crossDS \\\hline
 & \small{complete} &\crossDS & \crossDS &  \Checkmark    &\crossDS&
 \Checkmark \\\cline{2-7}
\rb{BT} & \small{sound}      & {\Checkmark}   & {\Checkmark}  & \Checkmark   &  {\Checkmark}   &
 \Checkmark\\\hline
\end{tabular}
\vspace*{.1cm}
\caption*{Table \ref{f:example-lazy}.a:  Results for Figure \ref{f:example-lazy}.a} 
\end{minipage}%
\hfill 

\begin{minipage}{0.45\textwidth}%
\renewcommand{\arraystretch}{1.9}
\centering
\begin{tabular}{|c|c|c|c|c|c|} 
\hline
    &   & $\;\;\BISI\;\;$  & $\may$ & $\mayasy$ &  $\must$ \\\hline  
 & \small{\,complete\,} & \Checkmark   & \Checkmark  & \Checkmark  & \Checkmark  \\\cline{2-6}
\rb{LT} & \small{sound}      & \Checkmark    & \Checkmark  & \Checkmark   &  \Checkmark \\\hline
 & \small{complete} & \crossDS & \crossDS & \crossDS  &\crossDS      \\\cline{2-6}
\rb{BT} & \small{sound}      &  {\Checkmark} & {\Checkmark}  & {\Checkmark}  &  {\Checkmark} \\\hline
\end{tabular}
\vspace*{.1cm}
\caption*{Table \ref{f:example-lazy}.b: Results for Figure \ref{f:example-lazy}.b}  
\end{minipage}%
\hfill

 \begin{minipage}{0.45\textwidth}%
\renewcommand{\arraystretch}{1.9}
\centering
\begin{tabular}{|c|c|c|c|c|c|} 
\hline
    &   & $\;\;\approx\;\;$ & $\may$ & $\mayasy$ & $\must$ \\\hline
 & \small{\,complete\,} & \Checkmark  & \Checkmark& \Checkmark   & \Checkmark \\\cline{2-6}
\rb{LT} & \small{sound}      & \Checkmark  & \Checkmark& \Checkmark
                                             & \crossDS \\\hline
 & \small{complete} &\crossDS &\crossDS   &\crossDS&  \Checkmark \\\cline{2-6}
\rb{BT} & \small{sound}      &  {\Checkmark} & {\Checkmark}& {\Checkmark}  & \Checkmark  \\\hline
\end{tabular}
\vspace*{.1cm}
\caption*{Table \ref{f:strongNOTYPE}: Results for Figure \ref{f:strongNOTYPE}} 
\end{minipage}%
\hfill
\end{table}

The proofs of the conditions can often be transported
from a behavioural equivalence to another one, with little or no extra
work (e.g., exploiting containments among equivalences and
preorders). Overall, we found the conditions particularly useful when
dealing with contextual equivalences, such as may and
must equivalences.  It is unclear to us how soundness and completeness
could be proved for them by relying on, e.g., direct characterizations of the
equivalences (such as trace equivalence or forms of acceptance trees) and standard proof techniques for them.

It would be interesting to examine additional  conditions on the behavioural
equivalences of the $\pi$-calculus capable to
 retrieve,
 as
equivalence induced by an encoding,
\begin{xxenv}
that of $\eta$-BTs, or BTs under infinite $\eta$ expansions \cite{Bar84}.
Works on linearity in the $\pi$-calculus, such as \cite{YoshidaHB07}
might be useful; another possibility might be to exploit
\emph{receptive types}, which have a strong impact on the
the sequentiality constraints imposed by input prefixes, see e.g., \cite{Sangiorgi99tapos}.
\end{xxenv}

In the paper we have considered encodings of call-by-name. 
It would be challenging to apply the study to call-by-value; some
preliminary result in this direction has been recently obtained 
\cite{DurierHS18} (based however on different proof techniques, namely
unique solutions of equations \cite{DurierHS17}). 


\iflong
We have provided, for encodings from $\pi$-calculus to $\lambda$-calculus, some general sets of conditions whose satisfaction will ensure the soundness and completeness properties,  w.r.t. \LLTN s and \BTN s. These conditions are proven to be correct, and turned out to be quite easy to use, illustrated by several case studies exploiting available encodings in the field and some representative behavioral equivalences in $\pi$-calculus.

Moreover the union of the conditions of soundness and completeness for \LLTN s (respectively \BTN s) gives rose to the set of conditions guaranteeing \emph{full abstraction} property of encodings for \LLTN s (respectively \BTN s). Notice there is some intersection between the conditions of soundness and completeness, so checking them once shall be sufficient.

As an interesting extension, we have also shown that given an encoding, how one can move from \LLTN~ to \BTN, by tuning the discriminating power of the (bisimulation) equivalence in $\pi$, with the help of i/o types. This reveals the possibility of retrieving \BTN~ instead of \LLTN, even in absence of the $\xi$ rule in $\lambda$-calculus.

\sepp

\fi

\sep

\noindent\textsl{Acknowledgements.}
\DSb
We thank the anonymous referees for their constructive comments, which have allowed us to improve the presentations and amend a number of problems in the original document.
\DSe

\bibliographystyle{plain}
\bibliography{process}

\clearpage
\noindent\textbf{\Large Appendix}
\appendix

\section{The proofs for the conditions in Section \ref{ss:sounds_completes}}\label{ap:proof_conditions} 

We first present some auxiliary results.
\subsection{Auxiliary results}

\begin{proposition}\label{p:betatheory}
If  $\qenco$ and $\,\R\,$ validate rule $ \beta$ and $\R$ is a precongruence, then $ M \Rah N$ implies $\encom M \,\R\, \encom N$.
\end{proposition}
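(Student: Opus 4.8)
The plan is to reduce the claim to a single head-reduction step and then exploit compositionality of the encoding together with the precongruence property of $\R$. Since $\R$ is a precongruence it is in particular a preorder, hence reflexive and transitive. Reading $\Rah$ as the reflexive-transitive closure of single-step head reduction $\shr$, I would argue by induction on the length of the reduction $M \Rah N$: the base case $M = N$ follows from reflexivity of $\R$, and the inductive step follows from transitivity of $\R$ once we have established the single-step statement that $M \shr N$ implies $\encom M \RR \encom N$. So the whole proposition reduces to proving this one-step statement.

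For the single step, I would use the grammar of terms possessing a head redex (the one introduced just before the definition of L\'evy--Longo and B\"ohm trees): if $M \shr N$ then $M = \lambda \widetilde x.\,(\lambda x. M_0) M_1 M_2 \cdots M_n$ with $n \geq 1$, the head redex being $(\lambda x. M_0) M_1$, and $N = \lambda \widetilde x.\,(M_0 \sub{M_1}{x}) M_2 \cdots M_n$. By compositionality, $\encom M$ is obtained by nesting the abstraction contexts $\qctl$ around a nest of application contexts $\qctapp$, whose innermost left argument is the encoding of the head redex. Concretely, there is a single $\pi$-context $\mathcal{C}$ such that $\encom M = \mathcal{C}[\encom{(\lambda x. M_0) M_1}]$ and $\encom N = \mathcal{C}[\encom{M_0 \sub{M_1}{x}}]$: the surrounding material (the prefix $\lambda \widetilde x$ and the trailing arguments $M_2, \ldots, M_n$) is identical on both sides, so only the contents of the hole change.

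With this factorisation in place the conclusion is immediate. Validity of rule $\beta$ for $\qenco$ and $\R$ gives exactly $\encom{(\lambda x. M_0) M_1} \RR \encom{M_0 \sub{M_1}{x}}$; and since $\R$ is a precongruence it is preserved by every $\pi$-context, in particular by $\mathcal{C}$ (including the binders introduced by the abstraction contexts $\qctl$ that capture the free occurrences of $\widetilde x$). Hence $\encom M = \mathcal{C}[\encom{(\lambda x. M_0) M_1}] \RR \mathcal{C}[\encom{M_0 \sub{M_1}{x}}] = \encom N$, as required. The only point demanding care is the compositional bookkeeping of the preceding paragraph: one must check that the head redex genuinely sits at a subterm position of $M$, so that compositionality applies verbatim and $\encom M$ and $\encom N$ factor through one and the same context $\mathcal{C}$. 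This is precisely where the head-normal-form shape of $M$ is used; once it is settled, the remainder is routine. (An induction on the structure of $M$ — descending through the initial abstractions and the spine of applications until the head redex is reached — would make the construction of $\mathcal{C}$ fully explicit, should a more detailed argument be wanted.)
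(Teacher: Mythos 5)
Your proof is correct and follows essentially the same route as the paper's: induction on the length of the head reduction, with the single step handled by locating the head redex as a subterm, applying validity of rule $\beta$ to it, and closing under the surrounding encoded context via the precongruence property of $\R$. The only cosmetic difference is that the paper splits the one-step case into two subcases (the application spine and the leading abstractions, using the contexts $\qctapp^n$ and $C^{\ve x}_\lambda$ separately), whereas you package both into a single context $\mathcal{C}$ --- the substance is identical.
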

\begin{proof}[Proposition \ref{p:betatheory}]
The proof proceeds by \Db  \De induction on the length of $M\wt{}_h N$.
The case when the length is zero is trivial. Now we suppose the length is $n+1$ and show 
\Db that  \De
the result holds.
We know from $M\stm{}_h^{n+1} N$ that
\[
M\stm{}_h^{n} M' \stm{}_h N \, . 
\]
By induction hypothesis, we have
\[\encodingm{M}{}{} \,\R\, \encodingm{M'}{}{}
\]
Next there are several cases to consider with regard to $M' \stm{}_h N$.
\Db \De
\begin{enumerate}[label=(\arabic*)]

\item
\begin{xxenv}
$M'\equiv (M_1M_2)M_3\cdots M_n \stm{}_h M_1'M_3\cdots M_n \equiv N$
in which $M_1M_2$ is a (head) redex and $M_1M_2 \stm{}_h M_1'$.
\DSb
By validity of $\beta$ rule
\DSe
 we know
\[
\encodingm{M_1M_2}{}{} \,\R\, \encodingm{M_1'}{}{} \, .
\]
\DSb
As $\R$ is a precongruence, we can add an arbitrary context, and thus doing
we derive
\DSe
\[
\begin{array}{lcl}
\encodingm{M}{}{} \,\R\, \encodingm{M'}{}{} &\equiv& \qctapp^n[\encodingm{M_1M_2}{}{}, \encodingm{M_3}{}{}, ... , \encodingm{M_n}{}{}] \\
&\,\R\,& \qctapp^n[\encodingm{M_1'}{}{}, \encodingm{M_3}{}{}, ... , \encodingm{M_n}{}{}] \,\equiv\, \encodingm{N}{}{}
\end{array}
\] where $\qctapp^n \DEF \encom { [\cdot]_1 [\cdot]_2 \cdots [\cdot]_{n-1}}$.
\end{xxenv}

\item $M'\equiv \lambda \ve{x}.M_1 \stm{}_h \lambda \ve{x}.M_1' \equiv N$ because $M_1\stm{}_h M_1'$ and $M_1$ is not an abstraction, and $\ve{x}$ denotes $\vect{x}$.
Through similar arguments to case (1), we know
\[
\encodingm{M_1}{}{}\,\R\, \encodingm{M_1'}{}{}
\]
\DSb
and, exploiting the precongruence property of
 $\R$,
\DSe
\[
\begin{array}{lcl}
\encodingm{M}{}{} \,\R\, \encodingm{M'}{}{} &\equiv& C^{\ve{x}}_{\lambda}[\encodingm{M_1}{}{}] \\
&\,\R\,& C^{\ve{x}}_{\lambda}[\encodingm{M_1'}{}{}] \,\equiv\, \encodingm{N}{}{}
\end{array}
\] where $C^{\ve{x}}_{\lambda} \DEF C^{x_1}_{\lambda}[\cdots C^{x_n}_{\lambda}[\cdot] \cdots]$.
\end{enumerate}
This completes the proof.
\end{proof}

\begin{lemma}\label{l:PO}
Suppose that
\begin{enumerate}
\item $\qenco$ and  $\R$ validate rule $ \beta$, and $\R$ is a congruence;
\label{i:POb}
\item whenever $M$ is an unsolvable of order $0$, then $\encom M \,\R\, \encom \Omega$;
\label{i:POd}
\item whenever  $M$ is an unsolvable of order $\infty$, then $\encom M \,\R\, \encom \Omega $.
\label{i:POf}
\end{enumerate}
Then, for any unsolvable $M$ of order $n$ ($n = 0,... \infty$), $\encom {M} \,\R\, \encom{\Omega}$.
\end{lemma}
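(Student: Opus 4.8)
The plan is to split on the three ranges of the order. For $n=0$ and $n=\infty$ the statement is literally hypotheses \reff{i:POd} and \reff{i:POf}, so the only real work is the finite case $0<n<\omega$. First I would bring such an $M$ into a canonical shape. By definition of order, $M \whr \lambda x_1\ldots x_n. M'$ for some $M'$; this $M'$ is again unsolvable (else $\lambda\ve x.M'$ would be a head normal form and $M$ solvable) and has order $0$, since a leading abstraction uncovered by $M'$ would let $M$ expose $n+1$ abstractions, contradicting maximality of $n$. As $\R$ is a congruence it is a precongruence, and $\qenco,\R$ validate $\beta$ by \reff{i:POb}; hence Proposition~\ref{p:betatheory} applies to the head reduction $M \whr \lambda\ve x.M'$ and yields $\encom M \,\R\, \encom{\lambda\ve x.M'} = \ctlx{\ve x}{\encom{M'}}$, where $\qctlx{\ve x}$ is the iterated abstraction context $C^{x_1}_\lambda[\cdots C^{x_n}_\lambda[\cdot]\cdots]$ from the proof of Proposition~\ref{p:betatheory}.

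The crux is to erase this leading block of $n$ abstractions up to $\R$. The key idea is to import the order-$\infty$ hypothesis through a \emph{self-reproducing} unsolvable. Set $A \DEF \lambda x\, y.\, x x$ and $U \DEF A A$; then $U \whr \lambda x_1\ldots x_n.U$ for every $n$ (one step gives $\lambda y. U$, and head reduction then proceeds underneath the abstraction), so $U$ is unsolvable of order $\infty$. Now I would assemble the chain
\[
\encom M \,\R\, \ctlx{\ve x}{\encom{M'}} \,\R\, \ctlx{\ve x}{\encom{U}} = \encom{\lambda\ve x.U} \,\R\, \encom U \,\R\, \encom\Omega ,
\]
reading it with transitivity and symmetry of $\R$ (a congruence is an equivalence relation). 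The first link is the display above. For the second, $\encom{M'}\,\R\,\encom\Omega$ by \reff{i:POd} ($M'$ of order $0$) and $\encom U\,\R\,\encom\Omega$ by \reff{i:POf} ($U$ of order $\infty$), so $\encom{M'}\,\R\,\encom U$, and applying the congruence $\qctlx{\ve x}$ gives $\ctlx{\ve x}{\encom{M'}} \,\R\, \ctlx{\ve x}{\encom U}$. The middle equality is compositionality of the encoding. The fourth link is Proposition~\ref{p:betatheory} applied to $U \whr \lambda\ve x.U$ (read with symmetry), and the last is \reff{i:POf} again. Transitivity then gives $\encom M \,\R\, \encom\Omega$, completing the finite case; incidentally this argument subsumes $n=0$ by taking $\ve x$ empty.

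The step I expect to be the main obstacle is precisely the erasure of the leading abstractions, i.e.\ $\ctlx{\ve x}{\encom{M'}} \,\R\, \encom\Omega$: the two base hypotheses only speak about orders $0$ and $\infty$, and congruence alone cannot bridge them to a finite positive order. The device that makes it work is the self-reproducing term $U$, whose $\beta$/head-reduction behaviour equates $\encom U$ with $\encom{\lambda\ve x.U}$ while hypothesis \reff{i:POf} simultaneously equates $\encom U$ with $\encom\Omega$; composing these through the abstraction context transfers the order-$\infty$ collapse down to every finite order. The remaining points (unsolvability and order $0$ of $M'$, and order $\infty$ of $U$) are routine facts about head reduction.
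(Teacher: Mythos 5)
Your proof is correct and follows essentially the same route as the paper's: reduce $M$ to an abstraction-prefixed form via Proposition~\ref{p:betatheory}, then use the congruence property together with the order-$\infty$ hypothesis applied to a term of the form $\lambda \ve{x}.\Xi$ to collapse the leading abstractions onto $\encom{\Omega}$. The only differences are cosmetic: the paper peels off one abstraction at a time by induction on $n$ and uses a generic order-$\infty$ unsolvable $\Xi$ (observing directly that $\lambda x.\Xi$ is again unsolvable of order $\infty$), whereas you go down to order $0$ in one shot and route the final step through the self-reproducing term $U$ and a second application of Proposition~\ref{p:betatheory} --- a detour you could have skipped, since $\lambda \ve{x}.U$ is itself unsolvable of order $\infty$ and hypothesis~(3) applies to it directly.
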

\begin{proof}[Lemma \ref{l:PO}]
For 
\Db order \De
$\infty$ this is  precisely \ref{i:POf}.
For
\DSb
the remaining cases
\DSe we proceed by induction on $n$. For $n = 0$ this is precisely \ref{i:POd}.
Suppose now $0 < n $ and $M$ is an unsolvable of order $n$. By definition, there is $N$ \st  $M \Rah \lambda x . N$.
\DSb
Thus we have, writing $\Xi$ for  an unsolvable of order $\infty$,
\[
\begin{array}{lclr}
\encom M &\;\R\;& \encom{\lambda x . N} & \qquad\qquad\qquad\qquad\qquad\qquad\mbox{(proposition~\ref{p:betatheory})} \\
 &\equiv& \ctl{\, \encom N \,} & \\
 &\;\R\;& \ctl{\, \encom \Omega \,} & \mbox{(inductive hypothesis)} \\
 &\;\R\;& \ctl{\, \encom \Xi \,} & \mbox{(since $\R$ is a congruence and, by (3), $\Xi \RR
                                   \Omega$)} \\
 &\;\R\;& \encom\Omega & \mbox{($\lambda x.\Xi$ is an unsolvable of order $\infty$)}
\end{array}
\]
\DSe
 which completes the proof.
\end{proof}

\begin{lemma}\label{l:POG}
Suppose that
\begin{enumerate}
\item $\qenco$ and  $\R$ validate rule $ \beta$, and $\R$ is a congruence;
\label{i:POGb}
\item whenever $M,N$ are unsolvable of order $0$, then $\encom M \,\R\, \encom N$;
\label{i:POGd}
\item whenever  $M,N$ are unsolvable of order $\infty$, then $\encom M \,\R\, \encom N$.
\label{i:POGf}
\end{enumerate}
Then, whenever $M,N$ are unsolvable of order $n$ ($n = 0,... \infty$), $\encom {M} \,\R\, \encom{N}$.
\end{lemma}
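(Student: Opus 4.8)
The plan is to mirror the structure of the proof of Lemma~\ref{l:PO}, adapting it from relating an unsolvable term to $\encom\Omega$ to relating two unsolvable terms \emph{of the same order}. First I would record why the tempting shortcut fails: one might hope to derive the hypotheses of Lemma~\ref{l:PO} (taking $N=\Omega$, which is unsolvable of order $0$) and then invoke it. But hypothesis~(3) of Lemma~\ref{l:PO} relates an order-$\infty$ term to $\encom\Omega$, and $\Omega$ has order $0$, whereas the hypotheses \ref{i:POGd} and \ref{i:POGf} available here only ever relate terms of the \emph{same} order. So there is no way to cross orders, and I would instead argue directly, by induction on $n$, treating $n=\infty$ as a separate base case since it is not reached by the finite induction.

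The two base cases are immediate: for $n=0$ the claim is exactly hypothesis~\ref{i:POGd}, and for $n=\infty$ it is exactly hypothesis~\ref{i:POGf}. For the inductive step, suppose $0<n<\omega$ and that $M,N$ are both unsolvable of order $n$. By the definition of the order, each head-reduces so as to uncover one abstraction, say $M \Rah \lambda x. M_1$ and $N \Rah \lambda x. N_1$; using $\alpha$-conversion and the fact that the encoding respects it, I can arrange the two bound variables to be the same name $x$, so that both bodies sit under the \emph{same} abstraction context $\ctl{\cdot}=\encom{\lambda x.\holem}$. The bodies $M_1,N_1$ are then unsolvable of order $n-1$. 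Proposition~\ref{p:betatheory} (applicable since $\R$ validates $\beta$ and is a precongruence) gives
\[
\encom M \,\R\, \encom{\lambda x. M_1} \equiv \ctl{\encom{M_1}}, \qquad \encom N \,\R\, \encom{\lambda x. N_1} \equiv \ctl{\encom{N_1}} .
\]
The inductive hypothesis yields $\encom{M_1} \,\R\, \encom{N_1}$, and since $\R$ is a congruence this lifts to $\ctl{\encom{M_1}} \,\R\, \ctl{\encom{N_1}}$. Chaining these with symmetry and transitivity of $\R$ gives $\encom M \,\R\, \encom N$, closing the induction.

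The calculation itself is routine, so the main obstacle is conceptual rather than computational: recognizing that Lemma~\ref{l:PO} cannot be reused as a black box (its order-crossing hypothesis is unavailable), so that one must redo the induction directly. The remaining delicate points are pure bookkeeping: verifying that the body $M_1$ obtained by uncovering a single head abstraction really has order $n-1$ (this follows from the maximality clause in the definition of the order of unsolvability), and matching the bound-variable names so that a single context $\ctl{\cdot}$ serves both sides. Notably, the argument does \emph{not} need the $\Xi$-substitution device used at the end of Lemma~\ref{l:PO}'s proof---the step there that replaces $\encom\Omega$ by the encoding of an order-$\infty$ term has no analogue here, which is precisely why the present argument is slightly shorter than its specialization to $\encom\Omega$.
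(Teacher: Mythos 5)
Your proof is correct and follows essentially the same route as the paper's: the two base cases are read off from hypotheses (2) and (3), and the inductive step head-reduces both terms to $\lambda x.M'$ and $\lambda x.N'$ of order $n-1$, applies Proposition~\ref{p:betatheory}, and closes with the induction hypothesis plus the congruence and transitivity of $\R$. The additional remarks (why Lemma~\ref{l:PO} cannot be invoked as a black box, and why the $\Xi$ device is unnecessary here) are accurate but inessential commentary.
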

\begin{proof}[Lemma \ref{l:POG}]
For $\infty$ this is  precisely \ref{i:POGf}.
\XXb For \XXe
\DSb
the remaining cases
\DSe we proceed by induction on $n$. For $n = 0$ this is precisely \ref{i:POGd}.
Suppose now $0 < n $ and $M,N$ are \Db \De  unsolvable of order $n$. By definition, there are $M',N'$ \st  $M \Rah \lambda x . M'$, $N \Rah \lambda x . N'$, and $M',N'$ are unsolvable of order $n{-}1$.
Thus by Proposition~\ref{p:betatheory},
\[
\begin{array}{lclclclr}
\encom{M} & \R & \encom{\lambda x . M'}\equiv \ctl{\, \encom {M'} \,} & \qquad& \ctl{\, \encom {N'} \,} \equiv \encom{\lambda x . N'} & \R & \encom{N} & 
\end{array}
\] Then by induction hypothesis and congruence property of $\R$,
\[
\begin{array}{lclclclr}
\encom{M} & \R & \ctl{\, \encom {M'} \,} & \R& \ctl{\, \encom {N'} \,} & \R & \encom{N} & 
\end{array}
\] Hence $\encom{M} \,\R\, \encom{N}$.
\end{proof}


\subsection{The completeness theorems}

\iflong
\begin{xxenv}
\xxx{We first describe the main idea of the proof.\\
\noindent\emph{Proof idea}.  \\
\bc{TODO: MODIFY and INSERT the intuitive account below (for the 3 cases) into the proof's corresponding positions ...} \\
We provide some intuitive account below. The proofs for LTs and BTs are similar. In the proof for LTs, for instance,  we consider the relation
\[
\R \defin \{(\encom M , \encom N)  \; \st  \LLT M = \LLT N \}
\]
and show that for each $(\encom M,\encom N)\in \R$ one of the following conditions is true, for some abstraction context $\qctl$, variable context $\qctappn$, and terms $M_i,N_i$:
\begin{enumerate}
\item[(a)] $\encom M \asymp \encom \Omega $ and $\encom N \asymp \encom \Omega$;
\item[(b)] $\encom M \geq \ctl {\encom{M_1}}$, $\encom N \geq \ctl {\encom{N_1}}$ and $(\encom{M_1},\encom{N_1})\in \R$;
\item[(c)] $\encom M \geq \ctappn {\encom{M_1}, \ldots , \encom{M_n} }$, $\encom N \geq \ctappn {\encom{N_1} , \ldots , \encom{N_n}}$  and $(\encom{M_i},\encom{N_i})\in \R$  for all $i$.
\end{enumerate}
Here, (a) is used when $M$ and $N$ are unsolvable of order $0$, by appealing to clause  \reff{i:sLLh} of Definition~\ref{d:faith}. In the remaining cases we obtain (b) or (c), depending on the shape of the LT for $M$ and $N$, and appealing to clause  \reff{i:beta} of Definition~\ref{d:faith}. The crux of the proof is exploiting the property that $\asymp$ validates the \uptoc\ technique so to  derive ${\R} \subseteq {\asymp}$ (the continuations of $\encom M$ and $\encom N$ are somehow related via the expansion and common context). Intuitively, this is possible because the variable and abstraction contexts of $\qenco$ are guarded, and therefore the first action from terms such as $\ctl {\encom{M_1}}$ and $\ctappn {\encom{M_1}, \ldots , \encom{M_n} }$ only consumes the context, and because $\leq$ is an expansion relation (clause \reff{i:expa} of Definition~\ref{d:faith}). \\
Note that condition (2) 
of Definition~\ref{d:uniCAsyn} requires closure under substitutions when a hole is
underneath a prefix. In clause (c) above we can derive closure under substitutions from
$(\encom{N_i},\encom{N_i})\in \R$ because the LT equality is preserved by variable
renaming and because we assume an encoding to act uniformly on the free names
(Section~\ref{sec-encoding-notion}).  \\
In the results for BTs, the condition on abstraction contexts being guarded is not needed because the condition can be proved redundant in presence of the  condition in the assertion  (ii) of the theorem. Intuitively, the reason is that, if in a term the head reduction never unveils a variable, then the term is unsolvable and can be equated to $\Omega$ using the premise of (ii) (for simplicity, sometimes we simply say condition (ii)); if it does unveil a variable, then in the encoding the subterms following the variable are underneath at least one prefix (because the variable contexts of the encoding are guarded, by condition \conref{i:sLLb}) and then  we are able to apply a reasoning similar to that in clause (c) above. Also, we do not need to explicitly prove $ \encom {\lambda x . \Omega} \asymp \encom{\Omega} $, this can be derived from condition (ii) and clauses  \hardcode{(5)}\conref{i:sLLg} and  \conref{i:sLLh} of Definition~\ref{d:faith}.
}
\end{xxenv}
\fi


\DSb
\DSe

\begin{proof}[Theorem \ref{t:compNEW}: completeness]
\DSb
We follow the convention that by `condition n',  for \Xab $1 \leqNAT i \leqNAT 6$, \Xae  we mean the corresponding condition in Definition \ref{d:faith}, and by `condition i' or `condition ii' we mean the premise of the corresponding clause in Theorem \ref{t:compNEW}.
\DSe

Assume $\LLT M = \LLT N$, then it follows from the definition of LT equality that one of
the following cases holds 
\Db
 (as usual modulo $\alpha$ conversion).
 \De
\begin{enumerate}[label=(\Roman*)]
\item  $M,N$ are unsolvable of order $0$.
\item $M \Rah   \lambda x . M_1$, $N \Rah   \lambda x . N_1$, and $\LLT {M_1} = \LLT {N_1}$. 
\item $M \Rah    x  M_1\ldots M_n$, $N \Rah    x  N_1 \ldots N_n$, 
and $\LLT {M_i} = \LLT {N_i}$. 
\end{enumerate}
Then we have the following observation.
\begin{itemize}
\item Suppose (I) holds; then, by condition
  \hardcode{6}\conref{i:sLLh}, $\encom M \asymp \encom \Omega$ and
  $\encom N \asymp \encom \Omega$. Thus $\encom{M} \asymp \encom{N}$,
\DSb
because $\asymp$ is an equivalence relation.
\DSe
\item Suppose (II) holds; then, by Proposition~\ref{p:betatheory}, we
  infer $\encom M \geq \encom {\lambda x . M_1} \equiv \ctl{\, \encom{
      M_1} \,}$ and,  in the same way, $\encom N \geq  \ctl{\,
    \encom{N_1} \,}$.
\DSb \DSe
\item Suppose (III) holds; then, by Proposition~\ref{p:betatheory}, we
  infer $\encom M \geq \encom {x M_1\ldots M_n} \equiv
  \ctappn{\encom{M_1}\ldots \encom{M_n}}$ and, in the same way,
  $\encom N \geq \ctappn{\encom{N_1}\ldots \encom{N_n}}$.
\DSb \DSe
\end{itemize}

\DSb
So we are left with cases (II) and (III), which
we handle in the remainder of the proof.
Define $\R$ thus:
\finish{below i have changed so to make sure terms in $\R$ are processes, not abstractions}
\DSe
\[\R \defin \{(\app {\encom M} r , \app {\encom N} r)  \,|\,
\begin{array}[t]{l}
\LLT M = \LLT N ,\\
 \mbox{neither $M$ nor $N$ is unsolvable of order $0$}, \\
\mbox{$r$  fresh}
\}    
\end{array}
 \]
\DSb
In  the remainder we sometimes write $\encom M \,\R\, \encom N$ to mean
$\app {\encom M} r \R \app {\encom N} r$, for some fresh $r$.
We first note that for each $\encom M \,\R\, \encom N$, based on (II) and (III) above and
the following corresponding  observations, one of the following cases is true.
\DSe
\begin{enumerate}[label=(\alph*)]
\item $\encom M \geq \ctl {\encom{M_1}}$, $\encom N \geq \ctl {\encom{N_1}}$ and $(\encom{M_1},\encom{N_1})\in \R$.
\item $\encom M \geq \ctappn {\encom{M_1}, \ldots , \encom{M_n} }$, $\encom N \geq \ctappn {\encom{N_1} , \ldots , \encom{N_n}   }$  and \\
$(\encom{N_i},\encom{N_i})\in \R$  for all $i$.
\end{enumerate}

\DSb
Now, the crux of the proof is to show that
$\R$ is an  \uptoc\ candidate
(Definition~\ref{d:uniCAsyn}), which allows us to conclude $\R\subseteq \asymp$,
 exploiting the property that $\asymp$
validates the \uptoc\ technique, according to condition
\hardcode{3}\conref{i:sLLf}.
This intuitively  will be  possible
because
 $\encom M$ and $\encom
N$ are  related, via the preorder $\geq$, to terms that have a common
context, as shown in (a) and (b) above,
because
 $\leq$ is an expansion relation (condition
 \reff{i:expa} of Definition~\ref{d:faith}),  and because the variable
 and abstraction contexts of $\qenco$ are guarded
 (conditions (i) 
and \hardcode{4}\conref{i:sLLb}), hence the
 first action from terms such as $\ctl {\encom{M_1}}$ and $\ctappn
 {\encom{M_1}, \ldots , \encom{M_n} }$ only consumes the context.
\DSe
\Db 
In both (a) and (b), one does not need to worry about closure under
substitution (of variables) when a hole is underneath an input
prefix, since $\R$ is closed under substitution. That is,
$(\encom{M},\encom{N})\in \R$ implies
$(\encom{M\sigma},\encom{N\sigma})\in \R$, because LT equality is
preserved by variable renaming \cite[Lemma 18.2.6 and Theorem
18.2.7]{SW01a}), 
and because the encoding is uniform (which implies that the free names of 
the encoding of a $\lambda$ term are included in the free variables of that term).
Below are the details for the diagram-chasing requirements.
In the diagrams, the implications of the vertical transitions should
be read from the left to the right.
\DSe
\De
\begin{itemize}
\item If (a) is true, then since $\leq$ is an expansion (condition
  \reff{i:expa}), we have the following diagram
\[
\xymatrix@C=25pt{
\app {\encom M} r \ar@/^1.6pc/@{.}[0,4]|{\R} \ar@{}[r]|-{\displaystyle \geq} \ar[d]^{\mu}
& \app {\ctl {\encom{M_1}}} r \ar[d]^{\widehat{\mu}} &  & \app { \ctl {\encom{N_1}}}
r\ar@{}[r]|-{\displaystyle \leq} \ar[d]^{\widehat{\mu}} & \app { \encom N} r \ar@{=>}[d]^{\widehat{\mu}} \\
S \ar@{}[r]|-{\displaystyle \geq}& C_1[\encom{M_1}] & & C_1[\encom{N_1}]\ar@{}[r]|-{\displaystyle \leq} & T
}
\]
\DSb
The existence of context $C_1$ is due to the fact that $\qctl$ is guarded (condition
(i)), so the action merely consumes the context $\qctl$, and does not affect the term in
the hole. 
In the case $M_1,N_1$ are unsolvable of order $0$, one directly applies $\encom{M_1}
\asymp \encom{N_1}$; otherwise,  $\encom{M_1} \,\R\, \encom{N_1}$ holds.
\DSe

\item If (b) is true
\DSb
then, again because $\leq$ is an expansion, we have
\[
\xymatrix@C=10pt{
 \app {\encom M} r \ar@/^1.9pc/@{.}[0,4]|{\R} \ar@{}[r]|-{\displaystyle \geq} \ar[d]^{\mu}
 &
 \app {
\ctappn {\encom{M_1}, \ldots , \encom{M_n} }} r \ar[d]^{\widehat{\mu}} &  & \app {
 \ctappn {\encom{N_1},\ldots , \encom{N_n}}} r \ar@{}[r]|-{\displaystyle \leq}
\ar[d]^{\widehat{\mu}} &  \app { \encom N} r \ar@{=>}[d]^{\widehat{\mu}} \\
S \ar@{}[r]|-{\displaystyle \geq} & C_2[\encom{M_1}, \ldots , \encom{M_n}] & & C_2[\encom{N_1},\ldots , \encom{N_n}] \ar@{}[r]|-{\displaystyle \leq} & T
}
\]
As in the previous case,  the existence of context $C_2$ is due to the fact that
$\qctappn$ is guarded (condition \hardcode{4}\conref{i:sLLb}).
Moreover, for each $i$, if $M_i,N_i$ are unsolvable of order $0$,
we have $\encom{M_i} \asymp \encom{N_i}$; otherwise, we have $\encom{M_i}
\,\R\, \encom{N_i}$.
\DSe
\end{itemize}


\DSb
This completes the case for LTs.
\DSe
 \sep


\DSb
For  BTs, intuitively, if in a term the head reduction never unveils a variable, then the
term is unsolvable and can be equated to $\Omega$ using the premise of (ii); if head reduction does
unveil a variable, then in the encoding the subterms following the variable are underneath
at least one prefix (because the variable contexts of the encoding are guarded, by
condition \conref{i:sLLb}), and then we are able to apply a reasoning similar to that in
clause (b)  above  for LTs.
Formally,
assume $\BT M = \BT N$. Then, from the definition of BT equality,
 one of the following cases holds: 
\DSe
\begin{enumerate}[label=(\Roman*')]
\item  $M,N$ are unsolvable.
\item $M \Rah  \lambda \ve{x}. xM_1\ldots M_n$, $N \Rah   \lambda \ve{x}. x N_1 \ldots N_n$, 
and $\BT {M_i} = \BT {N_i}$. 
\end{enumerate}
Suppose (I') holds;  then by condition (ii) of this theorem and
\hardcode{(5)}\conref{i:sLLg} and  \conref{i:sLLh} of Definition~\ref{d:faith}, we have
$\encom M \asymp \encom N \asymp \Omega$ by Lemma~\ref{l:PO},
\DSb
which closes the case.
Suppose  now that (II') holds.
We proceed in a similar way to that for LTs.
\DSe
Define $\R'$ as below.
\finish{below, added the $r$}
\[\R' \defin \{( \app {\encom M} r  ,  \app {\encom N}r )  \,|\,
\begin{array}[t]{l}
 \BT M = \BT N , \mbox{neither $M$ nor $N$ is unsolvable}\\
\mbox{$r$ fresh}
\}    
\end{array}
 \]
\DSb
As before, we sometimes write $\encom M \,\R'\, \encom N$ to mean
$\app {\encom M} r \, \R'\,  \app {\encom N} r$, for some fresh $r$.
\DSe
\DSb
\Db
As for LTs, so here we do not need to worry  closure under substitution
of the  `up-to-$\leq$-and-contexts' technique 
because
BT equality, as LT equality, 
 is preserved by  substitution of variables.
\De 
\DSe
For each $\encom M \,\R'\, \encom N$, from (II') and Proposition~\ref{p:betatheory}, we have
\[
\encom{M} \geq \encom{\lambda \ve{x}. xM_1\ldots M_n}, \encom{N} \geq \encom{\lambda \ve{x}. x N_1 \ldots N_n}, \mbox{ and } M_i\,\R'\,N_i
\] Thus
\[
\encom M \geq \ctlxapp{\encom{M_1}\ldots \encom{M_n}}, \encom N \geq \ctlxapp{\encom{N_1}\ldots \encom{N_n}}, \mbox{ and } M_i\,\R'\,N_i
\] for some context $\qctlxapp \DEF \ctlx {x_1} {\,\ctlx {x_2}{\ldots \qctlx {x_l}[\qctappn]\ldots \, }}$ (in which $\ve{x}{=}x_1,...,x_l$).
We know $\qctlxapp$ is guarded thanks to condition \hardcode{4}\conref{i:sLLb},
so some context $C_3$ exists s.t. we have the following chasing diagram
\finish{again, added $r$}
\[
\xymatrix@C=8.5pt{
 \app {\encom M} r \ar@/^2.1pc/@{.}[0,4]|{\R'} \ar@{}[r]|-{\displaystyle \geq}
 \ar[d]^{\mu} &
 \app {
\ctlxapp {\encom{M_1}, \ldots , \encom{M_n} }} r \ar[d]^{\widehat{\mu}} &  &  \app
{\ctlxapp {\encom{N_1},\ldots , \encom{N_n}}} r \ar@{}[r]|-{\displaystyle \leq}
\ar[d]^{\widehat{\mu}} &  \app {\encom N} r \ar@{=>}[d]^{\widehat{\mu}} \\
S \ar@{}[r]|-{\displaystyle \geq} & C_3[\encom{M_1}, \ldots , \encom{M_n}] & & C_3[\encom{N_1},\ldots , \encom{N_n}] \ar@{}[r]|-{\displaystyle \leq} & T
}
\] In the case $M_i,N_i$
\Db  are \De
 unsolvable of any order, one uses $\encom{M_i} \asymp
\encom{N_i}$; for the
\DSb
remaining
 cases, one applies $\encom{M_i} \,\R'\, \encom{N_i}$.

We have thus shown  that $\R'$ is an  \uptoc\ candidate,  and we can
finally conclude  by condition \hardcode{3}\conref{i:sLLf} that $\R'\subseteq \asymp$.
\end{proof}
\sep
\DSe

\DSb
We conclude by presenting the proof of Theorem~\ref{t:sound-strong}, which gives
\Db us \De
 some
alternative condition for completeness (which  also yields an alternative condition  for
full abstraction). Precisely, Theorem~\ref{t:sound-strong}
replaces the condition that the abstraction contexts  be guarded with the requirement that
\[
\mbox{$ $} \hfill \hskip 2cm
\mbox{$M,N $ unsolvable of order $\infty$
implies $\encom M \asymp  \encom N$.  }
\hfill \hskip 2cm  \mbox{$(*)$}
\]
\DSe

\begin{proof}[Theorem 4: alternative completeness conditions]
If $\LLT M = \LLT N$, then one of the following holds.
\begin{enumerate}[label=(\Roman*)]
\item $M,N$ are unsolvable of order $m$ ($m=0,...,\infty$).
\item $M \Rah \lambda \til x . x M_1\ldots M_n$, and $N \Rah \lambda \til x . x
  N_1\ldots N_n$, and $\LLT {M_i} = \LLT {N_i}$ (in which
$\ve{x} = x_1,...,x_l$, 
\Db for some $l$, and  \De
 $i=1,...,n$). 
\end{enumerate}
\DSb
In case (I), by Lemma \ref{l:POG} (using condition
\hardcode{6}\conref{i:sLLh}, and condition $(*)$ in the statement of
Theorem~\ref{t:sound-strong}),
we derive
 $\encom{M} \asymp \encom{N}$, as desired.
\DSe
In case (II), let
\[\R \defin \{(\app{ \encom M} r , \app{\encom N} r)  \,|\,
\begin{array}[t]{l}
 \LLT M = \LLT N , \mbox{neither $M$ nor $N$ is unsolvable } \\
\mbox{$r$ fresh}
\}    
\end{array}
 \]
We sometimes write $\encom M \,\R\, \encom N$ to mean
$\app {\encom M} r \, \R\,  \app {\encom N} r$, for some fresh $r$.
\DSb
We prove that the relation is an  \uptoc\ candidate
(Definition~\ref{d:uniCAsyn}), which allows us to conclude $\R\subseteq \asymp$,
 by condition \hardcode{3}\conref{i:sLLf}.
\Db As in the proof of Theorem~\ref{t:compNEW}, so here 
 relation $\R$ is closed under name substitutions, which  is needed
for application of condition \hardcode{3}\conref{i:sLLf}.
\De 
\DSe
For each $\encom M \,\R\, \encom N$, from (ii) and Proposition~\ref{p:betatheory}, we have
\[
\encom{M} \geq \encom{\lambda \ve{x}. xM_1\ldots M_n}, \encom{N} \geq \encom{\lambda \ve{x}. x N_1 \ldots N_n}, \mbox{ and } M_i\,\R\,N_i
\] Thus
\[
\encom M \geq \ctlxapp{\encom{M_1}\ldots \encom{M_n}}, \encom N \geq \ctlxapp{\encom{N_1}\ldots \encom{N_n}}, \mbox{ and } M_i\,\R\,N_i
\] for some context $\qctlxapp \DEF \ctlx {x_1} {\,\ctlx {x_2}{\ldots \qctlx {x_l}[\qctappn]\ldots \, }}$.
A key point here is that $\qctlxapp$ is guarded thanks to condition \hardcode{4}\conref{i:sLLb}.
Thus some context $C_4$ exists s.t. the following chasing diagram holds
\finish{added the $r$}
\[
\xymatrix@C=10pt{
\app{\encom M} r \ar@/^2.1pc/@{.}[0,4]|{\R} \ar@{}[r]|-{\displaystyle \geq} \ar[d]^{\mu} &
\app{
\ctlxapp {\encom{M_1}, \ldots , \encom{M_n} }} r \ar[d]^{\widehat{\mu}} &  &
\app{
 \ctlxapp {\encom{N_1},\ldots , \encom{N_n}}} r \ar@{}[r]|-{\displaystyle \leq}
\ar[d]^{\widehat{\mu}} &
\app{
\encom N} r \ar@{=>}[d]^{\widehat{\mu}} \\
S \ar@{}[r]|-{\displaystyle \geq} & C_4[\encom{M_1}, \ldots , \encom{M_n}] & & C_4[\encom{N_1},\ldots , \encom{N_n}] \ar@{}[r]|-{\displaystyle \leq} & T
}
\] In the case $M_i,N_i$
\DSb
are unsolvable \Db then they are unsolvable of the same order and 
\De we have $\encom{M_i} \asymp \encom{N_i}$; for the
remaining cases, we have
\DSe
 $\encom{M_i} \,\R\, \encom{N_i}$.
\DSb
This completes the proof.
\DSe
\end{proof}


\subsection{The soundness theorem}
\begin{proof}[Theorem \ref{t:soundNEW}: soundness]

We define
\[
\R\DEF \{(M,N) \,|\; \encodingm{M}{}{} \asymp \encodingm{N}{}{} \}
\] and show that $\mathcal{R}$ implies LT equality. To that aim, it suffices to prove that, for any $M \,\R\, N$ (i.e., $M\asymp N$), the following properties hold.
\begin{enumerate}
\item If $M$ is unsolvable of order $0$, then so is $N$;
\item If $M\wt{}_h \lambda x.M_1$, then $N\wt{}_h \lambda x.N_1$ and $M_1 \,\mathcal{R}\, N_1$;
\item If $M\wt{}_h xM_1\cdots M_n$, then $N\wt{}_h xN_1\cdots N_n$ and $M_i \,\mathcal{R}\, N_i$ for every $i=1,...,n$.
\end{enumerate}
The proof proceeds by the case analysis below.
\DSb
Similarly to what \Xab is \Xae done before, here
 by `condition n',  for \Xab $1 \leqNAT i \leqNAT 6$, \Xae we mean the corresponding condition in Definition \ref{d:resp}, and by `condition i' or `condition ii' we mean the premise of the corresponding clause in Theorem \ref{t:soundNEW}.
\DSe
\begin{enumerate}
\item $M$ is unsolvable of order $0$.  By condition \hardcode{5}\conref{con:lls:e}, we know $\encodingm{M}{}{} \asymp \encodingm{\Omega}{}{} $, then since $M\asymp N$
\[
\encodingm{\Omega}{}{} \asymp \encodingm{N}{}{} 
\]
By condition \hardcode{6}\conref{con:lls:f} and condition (i), it must
be that $N$ is unsolvable of order $0$. This is because if not, two
cases are possible: (1) $N$ has order other than $0$; (2) $N$ (head)
reduces to $\lambda \ve{y}.z\ve{N}$. Either case would contradict the
conditions
\DSb
(conditions (6) and (i),  using Proposition \ref{p:betatheory}).
\DSe

\item $M\wt{}_h \lambda x.M_1$.  By Proposition \ref{p:betatheory} (using conditions \hardcode{1}\conref{con:lls:a} and \hardcode{4}\conref{con:lls:d}),
\[
\encodingm{M}{}{} \geq \encodingm{\lambda x.M_1}{}{}
\]
Then we know from condition \hardcode{2}\conref{con:lls:b} that $\encodingm{\lambda x.M_1}{}{} \asymp \encodingm{M}{}{}$.
So 
\[
\encodingm{N}{}{} \asymp \encodingm{\lambda x.M_1}{}{}
\]
Now from condition \hardcode{6}\conref{con:lls:f} and condition (i) of this theorem, 
it must be that $N$
\DSb
head reduces to
\DSe
 $\lambda x. N_1$, for some $N_1$, so 
\begin{equation}\label{on-abstraction-context-equ2}
\ctl{\encodingm{M_1}{}{}} \equiv \encodingm{\lambda x.M_1}{}{} \asymp \encodingm{\lambda x. N_1}{}{} \equiv \ctl{\encodingm{N_1}{}{}}
\end{equation}
By condition \hardcode{7}\conref{con:lls:gh}, we suppose $D$ is the existing context as stated in Definition \hardcode{8}\defref{d:inverse_conext}.
Then we have 
\[
\xymatrix{
D[C^x_\lambda[\encodingm{M_1}{}{}]] \ar@{}[r]|-{\displaystyle \asymp}& D[C^x_\lambda[\encodingm{N_1}{}{}] \\
(\res{\ve{b}})(\out{a}{\ve{c}}\para b(z).\encodingm{M_1}{}{}\lrangle{z}) \ar@{}[u]|-{\displaystyle \rotatebox[origin=c]{90}{$\leq$}} & (\res{\ve{b}})(\out{a}{\ve{c}}\para b(z).\encodingm{N_1}{}{}\lrangle{z}) \ar@{}[u]|-{\displaystyle \rotatebox[origin=c]{90}{$\leq$}}
}
\] where $a,b,z$ fresh, and $b\in \ve{b}\subseteq{\ve{c}}$; we recall that the encoding of a $\lambda$-term is an abstraction of the $\pi$-calculus. 
Thus
\[
(\res{\ve{b}})(\out{a}{\ve{c}}\para b(z).\encodingm{M_1}{}{}\lrangle{z}) \asymp (\res{\ve{b}})(\out{a}{\ve{c}}\para b(z).\encodingm{N_1}{}{}\lrangle{z})
\] By condition \hardcode{3}\conref{con:lls:c},
\[
\encodingm{M_1}{}{}\lrangle{z} \asymp \encodingm{N_1}{}{}\lrangle{z}
\] Thus
\DSb
by Lemma~\ref{l:plain}(2)
\DSe
\[
\encodingm{M_1}{}{} \asymp \encodingm{N_1}{}{}
\]
Hence in summary, $N\wt{}_h \lambda x.N_1$ and $M_1 \,\mathcal{R}\, N_1$.

\item $M\wt{}_h xM_1\cdots M_n$. By Proposition \ref{p:betatheory} (using conditions \hardcode{1}\conref{con:lls:a} and \hardcode{4}\conref{con:lls:d}),
\[
\encodingm{M}{}{} \geq \encodingm{xM_1\cdots M_n}{}{}
\]
Then we know from condition \hardcode{2}\conref{con:lls:b} that $\encodingm{M}{}{} \asymp \encodingm{xM_1\cdots M_n}{}{}$.
So 
\[
\encodingm{N}{}{} \asymp \encodingm{xM_1\cdots M_n}{}{}
\]
Now from condition \hardcode{6}\conref{con:lls:f} and condition (i) of this theorem, 
it must be that $N$ derives (i.e., head reduces to) $xN_1\cdots N_n$ for some $N_1,...,N_n$, so by Proposition \ref{p:betatheory} we have 
\begin{equation}\label{on-variable-context-equ2}
\ctappn{\encodingm{M_1}{}{},...,\encodingm{M_n}{}{}} {\equiv} \encodingm{xM_1\cdots M_n}{}{} \asymp \encodingm{xN_1\cdots N_n}{}{} {\equiv} \ctappn{\encodingm{N_1}{}{},...,\encodingm{N_n}{}{}}
\end{equation}
Then by condition \hardcode{7}\conref{con:lls:gh}, we suppose $D_i$ ($i=1,...,n$) is the existing context as stated in Definition \hardcode{8}\defref{d:inverse_conext}. So we have
\[
\xymatrix{
D_i[\ctappn{\encodingm{M_1}{}{},...,\encodingm{M_n}{}{}}] \ar@{}[r]|-{\displaystyle \asymp}& D_i[\ctappn{\encodingm{N_1}{}{},...,\encodingm{N_n}{}{}}] \\
(\res{\ve{b}})(\out{a}{\ve{c}}\para b(z).\encodingm{M_i}{}{}\lrangle{z}) \ar@{}[u]|-{\displaystyle \rotatebox[origin=c]{90}{$\leq$}} & (\res{\ve{b}})(\out{a}{\ve{c}}\para b(z).\encodingm{N_i}{}{}\lrangle{z}) \ar@{}[u]|-{\displaystyle \rotatebox[origin=c]{90}{$\leq$}}
}
\] where $a,b,z$ fresh, and $b\in \ve{b}\subseteq{\ve{c}}$.
Thus
\[
(\res{\ve{b}})(\out{a}{\ve{c}}\para b(z).\encodingm{M_i}{}{}\lrangle{z}) \asymp (\res{\ve{b}})(\out{a}{\ve{c}}\para b(z).\encodingm{N_i}{}{}\lrangle{z})
\] By condition \hardcode{3}\conref{con:lls:c},
\[
\encodingm{M_i}{}{}\lrangle{z} \asymp \encodingm{N_i}{}{}\lrangle{z}
\] Thus \DSb
by Lemma~\ref{l:plain}(2)
\DSe
\[
\encodingm{M_i}{}{} \asymp \encodingm{N_i}{}{}
\] Hence in summary, $N\wt{}_h \, xN_1\cdots N_n$ and $M_i \,\mathcal{R}\, N_i$ for every $i=1,...,n$.
\end{enumerate}
\DSb
This completes  the proof for LTs.
\DSe
\sep

For the BT case, we define
\[
\mathcal{S}\DEF \{(M,N) \,|\; \encodingm{M}{}{} \asymp \encodingm{N}{}{} \}
\] and show that $\mathcal{R}$ implies BT equality. To this end, we prove that, for any $M \,\R\, N$ (i.e., $M\asymp N$), the following properties hold.
\begin{enumerate}
\item If $M$ is unsolvable of order $n$ ($n=0,...,\infty$), then $N$ is unsolvable of order $m$ ($m=0,...,\infty$).
\item If $M\wt{}_h\, \lambda \ve{x}.xM_1\cdots M_n$, then $N\wt{}_h\, \lambda \ve{x}.xN_1\cdots N_n$ and $M_i \,\mathcal{R}\, N_i$ for every $i=1,...,n$.
\end{enumerate}
\DSb
The proof proceeds by the following  case analysis.
\DSe
\begin{enumerate}
\item 
$M$ is unsolvable of order $n$ ($n=0,...,\infty$). By Lemma \ref{l:PO} (using condition \hardcode{4}\conref{con:lls:d}, condition \hardcode{5}\conref{con:lls:e}, and condition (ii).(a) 
of this theorem), $\encom{M} \asymp \encom{\Omega}$. Since $\encodingm{M}{}{} \asymp \encodingm{N}{}{}$, we have
\[
\encom{N} \asymp \encom{\Omega}
\] Thus $N$ must be unsolvable of some order, because if not,
\DSb
a
contradiction would arise by appealing to condition
\hardcode{6}\conref{con:lls:f} and  condition (ii).(b) 
of this theorem, and to  Proposition \ref{p:betatheory}.
\DSe

\item 
$M\wt{}_h\, \lambda \ve{x}.yM_1\cdots M_n$. This case can be dealt
\DSb
with in a similar way to that for LTs, by combining cases 2 and 3
there; here one uses  condition
\hardcode{7}\conref{con:lls:gh} several times  (precisely, the length of
$\ve{x}$ plus one): one for a variable context and the others for abstraction
contexts. Also the condition (ii).(b) 
is used 
when determining the shape of $N$. \qedhere
\DSe
\end{enumerate}
\end{proof}


\section{The `inverse context' property of the encodings}\label{ap:inverse_context_eg} 

\begin{xxenv}
Lemmas \ref{l:contexts_inverse:1}, \ref{l:contexts_inverse:2}, \ref{l:contexts_inverse:3} provide the inverse context properties of the examples in Section \ref{sec-examples-cbn}, \ref{sec-examples-scbn}. 
In each of them, we first give the form of the inverse context, then exemplify how it works when fed with encodings of $\lambda$-terms, which is the frequent case (see Theorem \ref{t:soundNEW}), though generic abstraction can be used in the same way. 
\end{xxenv}
\DSb
We recall that $\sim$ is strong bisimilarity (Section~\ref{sec-background}).
\DSe 

\begin{lemma}[On the first call-by-name encoding, \afig \ref{f:example-lazy}.a]\label{l:contexts_inverse:1}
\

\begin{enumerate}
\item The abstraction contexts of $\encodingm{\,}{}{}$ have inverse with respect to $\contrdiv$;
\item The variable contexts of $\encodingm{\,}{}{}$ have inverse with respect to $\contrdiv$. 
\end{enumerate}
\end{lemma}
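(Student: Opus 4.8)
The plan is to read off the two context shapes directly from the encoding and then exhibit an explicit inverse context for each, re-using the calculations already performed in Example~\ref{exa:inverse}. Unfolding Figure~\ref{f:example-lazy}.a, the abstraction context is $\qctl = \abs p \inp p {x,q}.\app \holem q$, while the variable context $\qctappn = \encom{x\holem_1\cdots\holem_n}$ expands, by repeated use of the application clause and by pulling all restrictions outwards through scope extrusion (a strong-bisimilarity, hence $\contrDiv$, law), into a \emph{spine}
\[
\app{\qctappn[\ve A]}{p} \;\sim\; \res{\til s,\til x}\big( \out x {s_1} \para \out{s_1}{x_1,s_2}\para\cdots\para \out{s_n}{x_n,p}\para \prod_{j=1}^{n} {!}\inp{x_j} q.\app{A_j} q\big),
\]
where $s_1,\dots,s_n$ are the request channels and $x_1,\dots,x_n$ the argument channels generated by the encoding; by $\alpha$-conversion all of these may be assumed fresh for the $A_j$ and for the chosen rendez-vous names $a,b,z$.

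For item (1) I would simply observe that $\qctl[F]$ is literally the context $C[F]$ of Example~\ref{exa:inverse}(1); hence the inverse context $ D \defin \res b (\out a b \para b(r).\res p (\app \holem p \para \out p {x,r}))$ already established there gives $D[\qctl[F]] \contrDiv \res b(\out a b \para b(r).\app F r)$, which is the shape required by Definition~\ref{d:inverse_conext} with $\til b = \til c = (b)$ and $z=r$.

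For item (2), the case $n=1$ is exactly Example~\ref{exa:inverse}(2). For general $n$ and a chosen hole $i$ I would let the inverse context descend the spine $i$ steps and then trigger the $i$-th replicated argument:
\[
D_i \defin \resb{x,p,b}\big(\app \holem p \para x(w_0).\, w_0(y_1,w_1).\cdots.\, w_{i-1}(y_i,w_i).\,(\out a {x,b}\para b(u).\out{y_i} u)\big).
\]
The verification is a chain of $\contrDiv$ steps: $i$ applications of Lemma~\ref{l:comm} consume $\out x {s_1},\out{s_1}{x_1,s_2},\dots,\out{s_i}{x_i,s_{i+1}}$ against the inputs of $D_i$, binding $y_i$ to the argument channel $x_i$; one further communication against the hole's replication turns $\out{x_i} u$ into $\app{A_i} u$ (exactly the last two manipulations of Example~\ref{exa:inverse}(2)); and the remaining material is discarded up to strong bisimilarity. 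This yields $D_i[\qctappn[\ve A]] \contrDiv \resb{x,b}(\out a {x,b}\para b(u).\app{A_i} u)$, matching Definition~\ref{d:inverse_conext} with $\til b=\til c=(x,b)$, $z=u$, and $b\in\{x,b\}\subseteq\{x,b\}$. Note that $x$ must be restricted in $D_i$ (to shield the head communication from the environment) and then re-exported through $\out a {x,b}$, precisely because $x$ may occur free in $A_i$; the $s_j$ and $x_j$ are genuinely fresh and so need not be exported.

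The main obstacle I anticipate is the bookkeeping for the general $n$-ary variable context: writing the spine normal form correctly and, above all, arguing that after the extraction the whole downstream of the spine ($\out{s_{i+1}}{\dots},\dots,\out{s_n}{x_n,p}$) together with the $n-1$ unused replications ${!}\inp{x_j} q.\app{A_j} q$ for $j\neq i$ is truly dead. The point is that, once the matching inputs and the single output on $x_i$ in $D_i$ have been consumed, each surviving prefix has a restricted subject with no complementary partner, so the residue has no transitions at all and is $\sim\nil$; in particular a replication guarded by a restricted, never-triggered name cannot diverge, which is what makes the erasure sound for the \emph{divergence-sensitive} refinement $\contrDiv$ and not merely for $\expa$.
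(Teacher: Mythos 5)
Your proposal is correct and matches the paper's own proof essentially step for step: the paper likewise reuses the constructions of Example~\ref{exa:inverse} for the abstraction context, rewrites $\encom{xM_1\cdots M_n}$ into the same spine normal form (citing a lemma of \cite{San95lazy} for what you derive by scope extrusion), and uses the same nested-input inverse context $D^{x,n}_i$ that restricts and then re-exports the head variable $x$ before triggering the $i$-th replication. Your closing remark about the downstream spine and unused replications being dead under restriction (hence harmless for the divergence-sensitive $\contrDiv$) is exactly the justification implicit in the paper's strong-bisimilarity garbage-collection steps.
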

\begin{proof} \xx{\rc{please check if possible}}

\noindent\textbf{1}. 
The abstraction contexts \Db \De  are defined by
\[C^x_\lambda \equiv \abs p \inp{ p}{x, q}.  [\cdot]\lrangle{q}
\] 
We define context $D$ as below. 
\[
D \DEF (\res{r,b})(\out{a}{b}\para b(r_1).([\cdot]\lrangle{r} \para \out{r} {x, r_1})) 
\]
Then 
\[D[C^x_\lambda[\encodingm{M}{}{}]] \;\contrdiv\; (\res{r,b})(\out{a}{b}\para b(r_1).(\encodingm{M}{}{}\lrangle{r_1} )) \sim (\res{b})(\out{a}{b}\para b(r_1).(\encodingm{M}{}{}\lrangle{r_1} ))
\]

\noindent\textbf{2}. 
We know from the encoding that  the variable contexts are defined as below.
\[
\begin{array}{lcl}
C^{x,n}_{var} &\equiv& \encodingm{x[\cdot]_1\cdots [\cdot]_n}{}{} \\
&\equiv & \abs p \res{r, y} \Big({\enco{x[\cdot]_1\cdots [\cdot]_{n-1}}{r}  |  \out r {y, p} |  !  \inp y q .  \enco {[\cdot]_n} q}\Big) \quad y \mbox{ fresh}
\end{array}
\] If $n=0$, there is nothing to prove. Suppose $n \geqslant 1$. 
By \cite{San95lazy}(Lemma 5.2), we know
\DSb
\XXb
\[
\begin{array}{lcl}
C^{x,n}_{var}[ \encodingm{M_1}{},\cdots, \encodingm{M_n}{}]\lrangle{r_n} &\equiv& \encodingm{xM_1\cdots M_n}{}{}\lrangle{r_n} \\
&\sim& \res{r_0} (\out x {r_0} \para \mathcal{O}\lrangle{r_0,r_n,\encodingm{M_1}{}{},...,\encodingm{M_n}{}{}}) 
\end{array}
\]
\Db  where $r_1,...,r_{n-1},x_1,...,x_n,q \mbox{ are fresh}$ and  \De
\[
\begin{array}{lcl}
\mathcal{O}\lrangle{r_0,r_n,F_1,...,F_n} &\DEF& 
 \res{r_1,...,r_{n-1},x_1,...,x_n}
\\ && \Big( \begin{array}[t]{l}
 \out {r_0} {x_1,r_1} \para \cdots \para \out {r_{n-1}} {x_n,r_n} \\
 \para !\inp {x_1} q.F_1\lrangle{q} \para \cdots\para  !\inp {x_n} q.F_n\lrangle{q}  \Big)
\end{array}
\end{array}
\]
\DSe 
\XXe
\Db \De
We need to find the contexts $\{D^{x,n}_i \,|\, 1\leqslant i \leqslant n\}$ in which
\[ D^{x,n}_i[C^{x,n}_{var}[\encodingm{M_1}{}{},...,\encodingm{M_n}{}{}]] \;\contrdiv\; (\res{\ve{b}})(\out{a}{\ve{c}}\para b(z).\encodingm{M_i}{}{}\lrangle{z}) \quad (b\in \ve{b}\subseteq{\ve{c}})
\] 
\XXb
The context $D^{x,n}_i$ takes the shape below ($0<j<i-1; a,b,z$ fresh).
\[
\begin{array}{lcl}
D^{x,n}_i &\DEF& (\res{r_n,x,b})([\cdot]\lrangle{r_n} \para \\
&& \inp x {r_0}. \inp {r_0} {x_1,r_1}.....\inp {r_j} {x_{j+1},r_{j+1}}.....\inp {r_{i-1}} {x_i,r_i'} . (\out{a}{x,b} \para \\
&& \inp{b}{z}.\out {x_i} {z}))
\end{array}
\] It can be observed that
\[
\begin{array}{rl}
 & D^{x,n}_i[C^{x,n}_{var}[\encodingm{M_1}{}{},...,\encodingm{M_n}{}{}]] \\
\sim & (\res{r_n,x,b})(\res{r_0}(\out x {r_0} \para \mathcal{O}\lrangle{r_0,r_n,\encodingm{M_1}{}{},...,\encodingm{M_n}{}{}}) \para \\
 & \inp x {r_0}. \inp {r_0} {x_1,r_1}.....\inp {r_j} {x_{j+1},r_{j+1}}.....\inp {r_{i-1}} {x_i,r_i'} . (\out{a}{x,b} |\inp{b}{z}.\out {x_i} {z})) \\
\contrdiv 
& (\res{r_n,r_i,x,x_i,b})(\mathcal{O}\lrangle{r_i,r_n,\encodingm{M_{i+1}}{}{},...,\encodingm{M_n}{}{}} \para \\
 & !\inp {x_i} q.\encodingm{M_i}{}{}\lrangle{q}\para \out{a}{x,b} \para \inp{b}{z}.\out {x_i} {z}) \\
\sim & (\res{x,x_i,b})(!\inp {x_i} q.\encodingm{M_i}{}{}\lrangle{q}\para \out{a}{x,b} \para \inp{b}{z}.\out {x_i} {z}) \\
\sim & (\res{x,x_i,b})(\out{a}{x,b} \para \inp{b}{z}.(!\inp {x_i} q.\encodingm{M_i}{}{}\lrangle{q}\para \out {x_i} {z})) \\
\contrdiv & (\res{x,b})(\out{a}{x,b} \para \inp{b}{z}.\encodingm{M_i}{}{}\lrangle{z}) 
\end{array}
\] where the first occurrence of 
$\contrdiv$ subsumes  ($i+1$)  internal 
$\tau$ actions. So we are done.
\XXe
\end{proof} 


\begin{lemma}[On the second call-by-name encoding, \afig \ref{f:example-lazy}.b]\label{l:contexts_inverse:2}
\

\begin{enumerate}
\item The abstraction contexts of $\encodingm{\,}{}{}$ have inverse with respect to $\contrdiv$;
\item The variable contexts of $\encodingm{\,}{}{}$ have inverse with respect to $\contrdiv$. 
\end{enumerate}
\end{lemma}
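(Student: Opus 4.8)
The plan is to replay the proof of Lemma~\ref{l:contexts_inverse:1} almost verbatim, adapting the two inverse contexts to the extra rendez-vous handshake that distinguishes the encoding of Figure~\ref{f:example-lazy}.b from Milner's: in Figure~\ref{f:example-lazy}.b every input/output of Figure~\ref{f:example-lazy}.a is replaced by a two-step protocol (first exchange a private channel, then use it), so each inversion will cost one extra synchronisation per interaction. As in Lemma~\ref{l:contexts_inverse:1} I would establish both inversions for the finest relation $\contrdiv$, so that they can then be exported to all coarser equivalences; the only algebraic ingredients needed are strong bisimilarity (scope extrusion and garbage collection of inactive restricted processes), the communication law of Lemma~\ref{l:comm}, and the distributivity laws for private replications, all valid for $\sim$ and hence for $\contrdiv$.

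For part~1 the abstraction context is $C^x_\lambda \equiv \abs p \res v(\out p v \para \inp v {x,q}. [\cdot]\lrangle q)$, which first emits a private $v$ on $p$ and only then inputs on $v$. I would therefore take the inverse
\[
D \DEF \res b(\out a b \para \inp b r. \res p([\cdot]\lrangle p \para \inp p w. \out w {x,r})) ,
\]
so that, after applying $[\cdot]$ to $p$, the output $\out a b$ extrudes $b$ and the matching $\inp b r$ performs the required rendez-vous; the two residual internal steps, first the exchange of the private channel on $p$ and then the pair $\langle x,r\rangle$ on it, are absorbed by $\contrdiv$ via Lemma~\ref{l:comm}, yielding
\[
D[C^x_\lambda[\encom M]] \;\contrdiv\; \res b(\out a b \para \inp b r. \app{\encom M}{r}) ,
\]
which is of the shape required by Definition~\ref{d:inverse_conext} with $\ve b = \ve c = b$ and $z = r$.

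For part~2, the delicate half, I would first prove by induction on $n$ (using only scope extrusion, so up to $\sim$) a normal form
\[
\app{C^{x,n}_{var}[\encom{M_1},\ldots,\encom{M_n}]}{r_n} \;\sim\; \res{r_0}(\out x {r_0} \para \mathcal{O}'\lrangle{r_0,r_n,\encom{M_1},\ldots,\encom{M_n}}) ,
\]
where, in contrast with the $\mathcal{O}$ of Lemma~\ref{l:contexts_inverse:1}, the $k$-th link of the chain first receives a channel on $r_{k-1}$ and only then emits the pair $\langle y_k,r_k\rangle$ on it, the argument being guarded by the server $!\inp{y_k}q.\app{\encom{M_k}}{q}$. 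The inverse context must drive this chain from outside, feeding a fresh private channel $v_k$ into each of the first $i$ links:
\[
\begin{array}{lcl}
D^{x,n}_i &\DEF& \res{r_n,x,b}\big([\cdot]\lrangle{r_n} \para \inp x {r_0}. \\
&& \res{v_1}(\out{r_0}{v_1} \para \inp{v_1}{y_1,r_1}. \cdots \res{v_i}(\out{r_{i-1}}{v_i} \para \inp{v_i}{y_i,r_i'}. \\
&& (\out a {x,b} \para \inp b z. \out{y_i} z))\cdots)\big) .
\end{array}
\]
Computing $D^{x,n}_i[C^{x,n}_{var}[\ldots]]$, the head output $\out x{r_0}$ synchronises with $\inp x{r_0}$, and each handshake (emission of $v_k$ on $r_{k-1}$ followed by the input on $v_k$) consumes one link, exposing the server $!\inp{y_i}q.\app{\encom{M_i}}{q}$ while leaving the tail of $\mathcal{O}'$ blocked under restrictions so that it vanishes up to $\sim$. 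After the rendez-vous on $a/b$ delivers $z$, the output $\out{y_i}z$ fires exactly one copy of that server, so
\[
D^{x,n}_i[C^{x,n}_{var}[\encom{M_1},\ldots,\encom{M_n}]] \;\contrdiv\; \res{x,b}(\out a {x,b} \para \inp b z. \app{\encom{M_i}}{z}) ,
\]
with $b \in \{x,b\} = \ve b = \ve c$ as required, all intermediate synchronisations being charged to the left-hand process.

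The hard part will be the bookkeeping of this rendez-vous chain, which is the only point where Figure~\ref{f:example-lazy}.b genuinely differs from Milner's encoding: because each application now costs two synchronisations, both $\mathcal{O}'$ and $D^{x,n}_i$ must carry the auxiliary channels $v_k$, and one has to verify that after $y_i$ is exposed the unconsumed suffix of $\mathcal{O}'$ (the pending handshakes and servers for $M_{i+1},\ldots,M_n$) is inactive under its restrictions and hence discardable up to $\sim$, while every consumed step is correctly counted on the left so as to justify the $\contrdiv$ verdict rather than mere $\sim$. Once the normal form is in place this last part is routine, reusing verbatim the private-replication laws and Lemma~\ref{l:comm} already used in Lemma~\ref{l:contexts_inverse:1}.
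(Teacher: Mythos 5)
Your proposal is correct and follows essentially the same route as the paper's proof: the same inverse context for the abstraction case (hole applied to a fresh restricted location, rendez-vous on $a/b$, then the two-step handshake absorbed by $\contrdiv$ via Lemma~\ref{l:comm}), the same normal form $\res{r_0}(\out x{r_0}\para \mathcal{O}\lrangle{r_0,r_n,\ldots})$ for the variable contexts obtained by induction on $n$, and the same chain-driving inverse $D^{x,n}_i$ that feeds a fresh private channel into each of the first $i$ links before exporting $x,b$ and firing one copy of the replicated server. The bookkeeping you flag (discarding the unconsumed suffix of the chain up to $\sim$ under its restrictions, and charging the $2i{+}1$ internal steps to the left-hand side) is exactly how the paper concludes.
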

\begin{proof} \xx{\rc{please check if possible}} 
\
\Db 
\finish{in this appendix, i have replaced the strong and weak transitions, because the
  inverse property talks about behavioural relations (something like 5 occurrences)} 
\De

\noindent\textbf{1}. 
The abstraction contexts are
\[C^x_\lambda \equiv  \abs p \res v (\out{ p}{v} \para \inp {v} {x,q} . [\cdot]\lrangle{q})
\] 
We define context $D$ thus: 
\[
D \DEF (\res{r,b}) (\out{a}{b}\para \inp{b}{r_1}.([\cdot]\lrangle{r} \para \inp {r} v.\out {v} {x, r_1}) ) 
\]
It then holds \Db that \De
\[
D[C^x_\lambda[\encodingm{M}{}{}]] \;\contrdiv\; (\res{r,b}) (\out{a}{b}\para \inp{b}{r_1}.(\encodingm{M}{}{}\lrangle{r_1}) ) \sim (\res{b}) (\out{a}{b}\para \inp{b}{r_1}.(\encodingm{M}{}{}\lrangle{r_1}) )
\]

\noindent\textbf{2}. In the encoding the variable contexts are defined as: 
\[
\begin{array}{lcl}
 C^{x,n}_{var} &\equiv& \encodingm{x[\cdot]_1\cdots [\cdot]_n}{}{} \\ 
& \equiv & \abs p \res{r} \Big(\enco{x[\cdot]_1\cdots [\cdot]_{n-1}}{r}  |  \inp r v. \res{y} (\out v {y, p} |  !  \inp y q .  \enco  {[\cdot]_n} q)\Big),  \;\;\mbox{ $y$ fresh.}
\end{array}
\] Suppose $n \geqslant 1$, since there is nothing to prove if $n=0$.  
By an inductive analysis similar to that in Lemma \ref{l:contexts_inverse:1}, we have
\XXb
\[
\begin{array}{lcl}
C^{x,n}_{var}[ \encodingm{M_1}{},\cdots, \encodingm{M_n}{}]\lrangle{r_n} &\equiv&   \encodingm{xM_1\cdots M_n}{}{}\lrangle{r_n} \\
&\sim& \res{r_0} (\out x {r_0} \para \mathcal{O}\lrangle{r_0,r_n,\encodingm{M_1}{}{},...,\encodingm{M_n}{}{}})
\end{array}
\]
where 
\Db 
$r_1,...,r_{n-1},v_0,...,v_{n-1},x_1,...,x_n,q \mbox{ are fresh}$ and 
\[
\begin{array}{lcl}
\mathcal{O}\lrangle{r_0,r_n,F_1,...,F_n} &\DEF& \res{r_1,...,r_{n-1},x_1,...,x_n}
\\ &&\Big( \begin{array}[t]{l}
\inp {r_0} {v_0}.\out {v_0} {x_1,r_1} \para \cdots \para \\
 \inp {r_{n-1}} {v_{n-1}}.\out {v_{n-1}} {x_n,r_n} \para \\
  !\inp {x_1} q.F_1\lrangle{q} \para \cdots\para  !\inp {x_n} q.F_n\lrangle{q}  \Big)
\end{array}
\end{array}
\]
\De 
\XXe
We need to design the contexts $\{D^{x,n}_i \,|\, 1\leqslant i \leqslant n\}$ in which
\[ D^{x,n}_i[C^{x,n}_{var}[\encodingm{M_1}{}{},...,\encodingm{M_n}{}{}]] \;\contrdiv\; (\res{\ve{b}})(\out{a}{\ve{c}}\para b(z).\encodingm{M_i}{}{}\lrangle{z}) \quad (b\in \ve{b}\subseteq{\ve{c}})
\] The context $D^{x,n}_i$ is defined 
\Db thus, for 
$0<j<i-1$, and $ a,b,z$ fresh.
\XXb
\[
\begin{array}{lcl}
D^{x,n}_i &\DEF& (\res{r_n,x,v_0,...,v_{i-1},b})
\\ && ( \begin{array}[t]{l}
[\cdot]\lrangle{r_n} \para \inp x {r_0}. \out {r_0} {v_0} \para \inp {v_0} {x_1,r_1}.\out {r_1}{v_1} \para \\
  ...\para \inp {v_{j-1}}{x_j,r_j}.\out {r_j} {v_j} \para \\
  ...\para \inp {v_{i-2}}{x_{i-1},r_{i-1}} .\out {r_{i-1}}{v_{i-1}}\para \inp {v_{i-1}} {x_i,r_i'} . (\out{a}{x,b} \para \inp{b}{z}.\out {x_i} {z})) 
\end{array}
\end{array}
\] It holds that
\begingroup
\addtolength{\jot}{-0.2em}
\begin{align*}
 & D^{x,n}_i[C^{x,n}_{var}[\encodingm{M_1}{}{},...,\encodingm{M_n}{}{}]] \\
\sim~ & (\res{r_n,x,v_0,...,v_{i-1},b}, {r_0})
\\ &
( \begin{array}[t]{l}
\out x {r_0} \para
    \mathcal{O}\lrangle{r_0,r_n,\encodingm{M_1}{}{},...,\encodingm{M_n}{}{}}) \para \\ 
  \inp x {r_0}. \out {r_0} {v_0} \para \inp {v_0} {x_1,r_1}.\out {r_1}{v_1} \para \\
  ...\para \inp {v_{j-1}}{x_j,r_j}.\out {r_j} {v_j} \para \\
  ...\para \inp {v_{i-2}}{x_{i-1},r_{i-1}} .\out {r_{i-1}}{v_{i-1}}\para \inp {v_{i-1}} {x_i,r_i'} . (\out{a}{x,b} \para \inp{b}{z}.\out {x_i} {z}) \\
\end{array} \\
\contrdiv~ 
& (\res{r_n,x,x_i,r_i,b})(\mathcal{O}\lrangle{r_i,r_n,\encodingm{M_{i+1}}{}{},...,\encodingm{M_n}{}{}} \para \\
& !\inp {x_i} q.\encodingm{M_i}{}{}\lrangle{q} \para \out{a}{x,b} \para \inp{b}{z}.\out {x_i} {z}) \\  
\sim~ & (\res{x,x_i,b})(!\inp {x_i} q.\encodingm{M_i}{}{}\lrangle{q} \para \out{a}{x,b} \para \inp{b}{z}.\out {x_i} {z}) \\
\sim~ & (\res{x,x_i,b})(\out{a}{x,b} \para \inp{b}{z}.(!\inp {x_i} q.\encodingm{M_i}{}{}\lrangle{q} \para \out {x_i} {z})) \\
\contrdiv~ & (\res{x,b})(\out{a}{x,b} \para \inp{b}{z}.\encodingm{M_i}{}{}\lrangle{z}). \tag*{\qEd}
\end{align*}
\endgroup
\De
\XXe
\def\popQED{}
\end{proof} 


\begin{lemma}[On the strong call-by-name encoding, \afig \ref{f:strongNOTYPE}]\label{l:contexts_inverse:3} 
\

\begin{enumerate}
\item The abstraction contexts of $\encodingm{\,}{}{}$ have inverse with respect to $\contrdiv$;
\item The variable contexts of $\encodingm{\,}{}{}$ have inverse with respect to $\contrdiv$. 
\end{enumerate}
\end{lemma}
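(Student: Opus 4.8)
\emph{Overall plan.} I would follow the template of Lemmas~\ref{l:contexts_inverse:1} and~\ref{l:contexts_inverse:2} (and of Example~\ref{exa:inverse}): for each context exhibit an explicit process context, then verify the $\contrdiv$-equation of Definition~\ref{d:inverse_conext} by algebraic reduction. The only laws I expect to need are the communication law of Lemma~\ref{l:comm} and the wire law $\res q(\link q p \para P)\contr P\sub{p}{q}$ of Section~\ref{sec-examples-scbn}, valid when $q$ occurs free in $P$ exactly once and in output-subject position. As before, I would establish the property for the finest relation $\contrdiv$ and then let it propagate to the coarser equivalences used in Theorem~\ref{t:resSCBN}.

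\emph{Abstraction contexts.} Here $C^x_\lambda \equiv \abs p \resb{x,q}(\out p {x,q} \para [\cdot]\lrangle q)$: the abstraction first \emph{emits} an argument channel $x$ and a body location $q$. I would take
\[
D \DEF (\res{x,p,b})\big([\cdot]\lrangle p \para \inp p {x,q}.(\out a {x,b} \para b(z).\link q z)\big),
\]
with $a,b,z$ fresh. Feeding $C^x_\lambda[\encom M]$ into $D$ and firing the communication on $p$ (Lemma~\ref{l:comm}) binds the emitted names to the private names of $D$; a use of the wire law then collapses the body location $q$ against $\link q z$, so that
\[
D[C^x_\lambda[\encom M]] \;\contrdiv\; (\res{x,b})\big(\out a {x,b} \para b(z).\encom M\lrangle z\big),
\]
which is of the form demanded by Definition~\ref{d:inverse_conext} (with $b\in\{x,b\}\subseteq\{x,b\}$; the exported argument channel $x$ is a harmless extra component, later erased by rendez-vous cancellation in the soundness proof).

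\emph{Variable contexts.} For $n=0$ there is nothing to prove. For $n\geqslant 1$ I would first derive, by induction on $n$ exactly as in Lemmas~\ref{l:contexts_inverse:1}--\ref{l:contexts_inverse:2}, a normal form
\[
\encom{x M_1\cdots M_n}\lrangle{r_n} \;\sim\; \res{r_0}\big(\out x {r_0} \para \mathcal{O}\lrangle{r_0,r_n,\encom{M_1},\ldots,\encom{M_n}}\big),
\]
where $\mathcal{O}$ collects the chain of handshakes $\inp{r_{j-1}}{x_j,r_j}$, the connecting wires $\link{\cdot}{\cdot}$, and the replicated servers $!\out{x_j}{r}.\encom{M_j}\lrangle r$. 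I would then define $D^{x,n}_i$ to supply a fresh location on $x$, drive the chain through its first $i-1$ handshakes, and at the $i$-th handshake export a rendez-vous $b$ and wire out the $i$-th server; reducing with Lemma~\ref{l:comm} and the wire law would give $(\res{\til b})(\out a {\til c}\para b(z).\encom{M_i}\lrangle z)$, just as for the two call-by-name encodings.

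\emph{The main obstacle.} The genuinely new difficulty --- absent from Lemmas~\ref{l:contexts_inverse:1}--\ref{l:contexts_inverse:2} --- is that strong call-by-name evaluates \emph{under} the $\lambda$: in $C^x_\lambda[\encom M]$ the body $\encom M\lrangle q$ is \emph{not} guarded by the argument handshake (contrast Figure~\ref{f:example-lazy}.b, where it sits behind an input prefix). Thus once the communication on $p$ has fired the body is already active, whereas in the target $b(z).\encom M\lrangle z$ it is frozen behind the rendez-vous. The crux is therefore to justify the wire-law step that re-guards the body: I would argue that this preserves behaviour up to $\contrdiv$ by exploiting that the result location $q$ is restricted and linear in output, so that the wire simultaneously supplies the continuation name $z$ and serializes the body's reporting through the rendez-vous. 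Discharging the linearity side-condition of the wire law for the $\lambda$-term encodings that fill the holes, and controlling the body's free-variable communications during this realignment, is where essentially all the work lies; the variable-context case then adds only the orthogonal bookkeeping of building $\mathcal{O}$ and peeling off $i-1$ handshakes.
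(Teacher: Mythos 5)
Your overall strategy is the paper's: exhibit explicit inverse contexts and verify the equation of Definition~\ref{d:inverse_conext} by algebraic reduction, using Lemma~\ref{l:comm} together with the wire law. For the abstraction contexts your $D$ is, up to renaming, exactly the context the paper uses, and the computation goes through as you describe. Concerning your ``main obstacle'': the paper introduces no new machinery for re-guarding the active body. It simply commutes $\encom M\lrangle q$ under the prefix $\inp b{r_1}$ (asserted as a strong bisimilarity, just as in Lemmas~\ref{l:contexts_inverse:1} and~\ref{l:contexts_inverse:2}) and then applies the wire law, whose side condition is a purely syntactic check on $\encom M\lrangle q$ (the restricted location $q$ occurs exactly once, as the subject of an output). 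Your instinct that this commutation needs more care here than in the call-by-name lemmas --- because $\encom M\lrangle q$ is not inert and can act on the free variables of $M$ before the rendez-vous fires --- is a legitimate observation, but you leave it unresolved, and the paper treats the step as routine.

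The concrete error is in the variable-context case. You import the normal form $\encom{x M_1\cdots M_n}\lrangle{r_n}\sim\res{r_0}(\out x{r_0}\para\mathcal O\lrangle{\cdots})$ from Lemmas~\ref{l:contexts_inverse:1}--\ref{l:contexts_inverse:2}, but in the strong call-by-name encoding the polarities are reversed: $\encom x\lrangle{q_0}=\inp x{p_0'}.\link{p_0'}{q_0}$ begins with an \emph{input} on $x$, and the arguments sit behind replicated \emph{outputs} $!\out{x_j}{r_j}.\encom{M_j}\lrangle{r_j}$. (You state the servers correctly, which makes the output head inconsistent with the encoding's own clause for variables.) The correct normal form is $\res{q_0}(\inp x{p_0'}.\link{p_0'}{q_0}\para\mathcal O\lrangle{q_0,q_n,\ldots})$, where $\mathcal O$ is a chain of inputs $\inp{q_{j-1}}{x_j,p_j'}.(\link{p_j'}{q_j}\para !\out{x_j}{r_j}.\encom{M_j}\lrangle{r_j})$. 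Consequently $D^{x,n}_i$ must \emph{output} $\out x{p_0'}$ and drive the chain with a sequence of nested outputs $\out{p_0'}{x_1,p_1'}.\cdots.\out{p_{i-1}'}{x_i,p_i'}$ on names it restricts itself, and then extract $\encom{M_i}$ with an input $\inp{x_i}{r_2}$ followed by the wire $\link{r_2}{r_1}$ behind the rendez-vous $\inp b{r_1}$. An inverse context built against your stated normal form would not compose with the actual encoding; the repair is mechanical, but it lies exactly in the part you left schematic.
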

\begin{proof} \xx{\rc{please check if possible}}

As noted, the following property (which admits a routine reasoning) is used in the proof of this lemma.
\begin{equation}\label{scbn-prop-wire}
\res r (\link{r}{p} | \enco M r) \contrdiv \enco M p
\end{equation} 
Below we cope with each clause of the lemma.

\noindent\textbf{1}.
The abstraction context is
\[
C^x_\lambda \equiv \abs p \res{x,q} (\out{p}{x,q} \para [\cdot]\lrangle{q})
\]
We define 
\[
D \DEF (\res{r,b}) ([\cdot]\lrangle{r} \para \inp {r}{x,q}.(\out{a}{x,b}\para \inp{b}{r_1}.(\link{q}{r_1})) )
\]
\XXb
We then have
\[
\begin{array}{rl}
 & D[C^x_\lambda[\encodingm{M}{}{}]] \\
 \equiv & (\res{r,b}) ((\res{x,q}) (\out{r}{x,q} \para \encodingm{M}{}{}\lrangle{q}) \para \inp {r}{x,q}.(\out{a}{x,b}\para \inp{b}{r_1}.(\link{q}{r_1})) ) \\
\contrdiv
 & (\res{r,b}) ((\res{x,q}) (\encodingm{M}{}{}\lrangle{q} \para \out{a}{x,b}\para \inp{b}{r_1}.(\link{q}{r_1}) )) \\
\sim & (\res{b,x,q}) (\out{a}{x,b} \para \encodingm{M}{}{}\lrangle{q} \para  \inp{b}{r_1}.(\link{q}{r_1}) ) \\
\sim & (\res{b,x,q}) (\out{a}{x,b} \para \inp{b}{r_1}.(\encodingm{M}{}{}\lrangle{q} \para \link{q}{r_1}) ) \\
\contrdiv & (\res{b,x}) (\out{a}{x,b}\para \inp{b}{r_1}.\encodingm{M}{}{}\lrangle{r_1})
\end{array}
\]  where property (\ref{scbn-prop-wire}) is used. 
\XXe

\noindent\textbf{2}. The variable context in the encoding is defined as 
\[
\begin{array}{lcl}
 C^{x,n}_{var} &\equiv& \encodingm{x[\cdot]_1\cdots [\cdot]_n}{}{} \\ 
& \equiv & \abs p \res{q,r} (\encodingm{x[\cdot]_1\cdots [\cdot]_{n-1}}{}{}\lrangle{q} \para  \inp{q}{y,p'}.(\link{p'}{p} \para !\out{y}{r}.\encodingm{[\cdot]_n}{}{}\lrangle{r})) \; \mbox{ ($y$ fresh)} 
\end{array}
\] Suppose $n\geqslant 1$ (nothing to prove if $n=0$). By an inductive analysis similar to that in Lemma \ref{l:contexts_inverse:1}, we have
\DSb
\XXb
\[
\begin{array}{l}
C^{x,n}_{var}[ \encodingm{M_1}{},\cdots,
  \encodingm{M_n}{}]\lrangle{r_n}  \equiv \\
   \encodingm{xM_1\cdots M_n}{}{}\lrangle{r_n} 
\sim \res{q_0} (\inp{x}{p_0'}.  (\link {p_0'}{q_0}) \para \mathcal{O}\lrangle{q_0,q_n,\encodingm{M_1}{}{},...,\encodingm{M_n}{}{}}) \\
\end{array}
\]
\XXe
where 
\Db 
$q_1,...,q_{n-1},r_1,...,r_n,x_1,...,x_n$ are fresh and 
\De \[ 
\begin{array}{lc}
 \;\;\mathcal{O}\lrangle{q_0,q_n,F_1,...,F_n} \DEF \\
 \res{q_1,...,q_{n-1},r_1,...,r_{n},x_1,...,x_n}
\\  
\Big( \begin{array}[t]{l}
 \inp {q_0} {x_1,p_1'}.(\link{p_1'}{q_1}\para !\out {x_1} {r_1}.\encodingm{M_1}{}{}\lrangle{r_1}) \para \\
  \cdots \para \inp {q_{i-1}} {x_i,p_{i}'}.(\link{p_{i}'}{q_{i}}\para !\out {x_i} {r_i}.\encodingm{M_i}{}{}\lrangle{r_i}) \\
 \cdots \para \inp {q_{n-1}} {x_n,p_{n}'}.(\link{p_{n}'}{q_{n}}\para !\out {x_n} {r_n}.\encodingm{M_n}{}{}\lrangle{r_n})  \Big) \qquad (i=1,...,n)
\end{array}
\end{array}
\] 
\DSe 
We need to design the contexts $\{D^{x,n}_i \,|\, 1\leqslant i \leqslant n\}$ in which
\[ D^{x,n}_i[C^{x,n}_{var}[\encodingm{M_1}{}{},...,\encodingm{M_n}{}{}]] \;\contrdiv\; (\res{\ve{b}})(\out{a}{\ve{c}}\para b(z).\encodingm{M_i}{}{}\lrangle{z}) \quad (b\in \ve{b}\subseteq{\ve{c}})
\] The context $D^{x,n}_i$ is defined as ($0<j<i-1; a,b,z$ fresh).
\[
\begin{array}{lcl}
D^{x,n}_i &\DEF& (\res{r_n,p_0',...,p_i',x,x_1,...,x_i,b})([\cdot]\lrangle{r_n} \para \out{a}{x,b} \para \\ 
& & \out{x}{p_0'}.\out{p_0'}{x_1,p_1'}.\cdots.\out{p_{j-1}'}{x_j,p_j'}.\cdots.\out{p_{i-1}'}{x_i,p_i'} \para \\
& & \inp{b}{r_1}.\inp {x_i} {r_2}.(\link{r_2}{r_1}))
\end{array}
\] Then
\DSb 
\[
\begin{array}{rl}
 & D^{x,n}_i[C^{x,n}_{var}[\encodingm{M_1}{}{},...,\encodingm{M_n}{}{}]] \\
\sim  & 
        \begin{array}[t]{l}
(\res{r_n,p_0',...,p_i',x,x_1,...,x_i,b}) (\res{q_0}) \\
( \;\;\;  (\inp{x}{p_0'}.  (\link {p_0'}{q_0}) \para \mathcal{O}\lrangle{q_0,q_n,\encodingm{M_1}{}{},...,\encodingm{M_n}{}{}})   \\ 
\;  \para \out{a}{x,b} \para \out{x}{p_0'}.\out{p_0'}{x_1,p_1'}.\cdots.\out{p_{j-1}'}{x_j,p_j'}.\cdots.\out{p_{i-1}'}{x_i,p_i'}  \\
\; \para \inp{b}{r_1}.\inp {x_i} {r_2}.(\link{r_2}{r_1}))
        \end{array}
 \\
\contrdiv 
&  (\res{r_n,x,x_i,q_i,r_i,b})( \mathcal{O}\lrangle{q_i,q_n,\encodingm{M_{i+1}}{}{},...,\encodingm{M_n}{}{}}) \para  \\ 
 & \out{a}{x,b} \para !\out {x_i} {r_i}.\encodingm{M_i}{}{}\lrangle{r_i} \para  \inp{b}{r_1}.\inp {x_i} {r_2}.(\link{r_2}{r_1})) \\
\sim & (\res{x,x_i,r_i,b})(\out{a}{x,b} \para !\out {x_i} {r_i}.\encodingm{M_i}{}{}\lrangle{r_i} \para  \inp{b}{r_1}.\inp {x_i} {r_2}.(\link{r_2}{r_1})) \\
\sim & (\res{x,x_i,r_i,b})(\out{a}{x,b} \para \inp{b}{r_1}.(!\out {x_i} {r_i}.\encodingm{M_i}{}{}\lrangle{r_i} \para  \inp {x_i} {r_2}.(\link{r_2}{r_1}))) \\
\contrdiv & (\res{x,r_i,b})(\out{a}{x,b} \para \inp{b}{r_1}.(\encodingm{M_i}{}{}\lrangle{r_i} \para \link{r_i}{r_1})) \\
\contrdiv & (\res{x,b})(\out{a}{x,b} \para \inp{b}{r_1}.(\encodingm{M_i}{}{}\lrangle{r_1})) 
\end{array}
\]
\DSe where 
 where the first occurrence of 
$\contrdiv$ subsumes  ($2i+1$)   internal 
$\tau$ actions, and the last step uses property (\ref{scbn-prop-wire}). Thus we are done.
\end{proof}

\section{More properties of the encodings in Section \ref{sec-examples-cbn} and Section \ref{sec-examples-scbn}}\label{ap:property_eg}

\begin{proof}[Proof of Lemma \ref{l:vali}]
For convenience, we recall the content of Lemma \ref{l:vali} and Definition \ref{d:uniCAsyn} below.

\begin{flushleft}{\small
\begin{tabular}{l}
\begin{minipage}{0.97\textwidth}
Lemma \ref{l:vali}. ~Relations $\approx$, $\may$, and $\mayasy$ validate the \uptoE\ technique;
relation $\must$ validates the \uptoEdiv\ technique. 
\end{minipage}
\\\\
\begin{minipage}{0.97\textwidth}
Definition \ref{d:uniCAsyn}. ~ Relation  $\asymp$ {\em validates the \uptoc\ technique} if
for any symmetric relation  $\R$ on $\pi$-processes
we have $\R \subseteq {\asymp}$ whenever
for any pair $(P,Q)\in \R $, if  $P \stm \mu P'$ then  $Q \Arcap \mu
Q'$ and there  are processes $ \til P,  \til Q$ and
a  context  $\qct$  such  that
  $P' \geq  \ct {\til P} $, $Q' \geq  \ct {\til Q}$,
and, if $n\geqNAT 0 $ is the length of the tuples $\til P$ and $\til Q$,
 at least one of the following two statements is true, for each $i \leqNAT n$:
(1) $P_i \asymp Q_i$;
(2) $P_i \RR Q_i$ and,
 if $[\cdot]_i$ occurs under an input in $C$,
also $P_i\sigma \RR  Q_i\sigma$  for all substitutions  $\sigma$.
\end{minipage}
\end{tabular}}
\end{flushleft}
\sepp


As mentioned before, the case for bisimulation is proven in \cite{SW01a}. The cases of may, asynchronous may and must equivalences have similar proofs, which follow from their definitions and use the expansion ($\expa$ for $\may$ and $\mayasy$; $\expa^\Div$ for $\must$) in a way similar to the technique for bisimulation in \cite{SW01a}.
We focus on $\may$ below.
Let
$\R$ be  a relation as in Definition~\ref{d:uniCAsyn}, where $\asymp $ is
$\may$ and $\leq$ is $\expa$. 
 Take the relation
\DSb
\[
{\mathcal{S}} \;\defin\; \asymp \,\cup\, \{(P_1,P_2) \, |\, 
\begin{array}[t]{l}
\mbox{there is a context $\qct$ s.t.} \\
P_i \contr \ct{\til{P_i}}
(i=1,2) \mbox{ and } (\til {P_1}, \til{P_2}) \in \R \,\cup\!\asymp
 \}
\end{array}
 \] 
\DSe
Notation $(\til {P_1}, \til{P_2}) \in \mathcal{R}$ means $(P_1^k, P_2^k) \in \mathcal{R}$
for every $P_i^k\in \til {P_i}$($i=1,2$), $k\leqslant m$ 
\Db
($m$ is the number of holes in
the  $\pi$-context  $C$). \De

Obviously we have $\R \subseteq \mathcal{S}$, then it suffices to show that  $\mathcal{S} \subseteq \asymp$. 
Assume $P_1 \mathcal{S} P_2$. For any context $D$, suppose $D[P_1]\Dwa$.
\Xb
We want to show $D[P_2]\Dwa$.
To do so, we first note that 
from  $P_1 \contr \ct{\til{P_1}}$ we have $D[\ct{\til{P_1}}]\Dwa$ (with not more silent moves before the observable action), 
then we derive $D[\ct{\til{P_2}}]\Dwa$ by a case analysis, and finally we have $D[P_2]\Dwa$ using $\contr$ again. 
We detail the analysis below.
%
\Xe
\Db\finish{the line above is too dense, it would be better to spell out what you want to say}
\De

\DSb
There is a case  analysis to be made  on the origin of the observable in
$D[\ct{\til{P_1}}]\Dwa$, according to where the action in the
observable comes: 
(1)   from $D$; (2)  from $C$; (3)  from $\ve{P_1}$; (4) 
from an interaction between 
\Db
a  component of \De  $\til{P_1}$ and 
its context. 
We only show the details for (3), which is the interesting case;
cases (1) and (2) are easy, and  (4) is easily handled by relying on (1), (2) and (3). 
\DSe

To deal with (3), there are two subcases on $\ve{P_1}\Dwa$: (3{-}1) the observable is from
$P_1^k$ for some $k$; (3{-}2)  the observable is from interaction
between
 components of
 $\ve{P_1}$. We
focus on (3-1) since (3-2) 
can be \Db tackled  similarly to (3-1). \De
For convenience, we set some notations: $\Dwan n$ means  ``observable in $n$ steps'' (of internal move), $\Dwan {\leqslant n}$ means  ``observable in no more than $n$ steps'', 
and $\stm{\tau}_n$ means $n$ consecutive $\tau$ actions.

In the subcase of (3-1), we have $P_1^k\Dwan{n}$ for some $n$, i.e., $P_1^k \stm{\tau}_n\stm{\mu}$ for some $\mu$ other than $\tau$.
We want to show the result $P_2^k\Dwa$ so that the subcase can be closed. 
We know $(P_1^k, P_2^k)\in \R \,\cup\!\asymp$. The 
\Db case when
$(P_1^k, P_2^k)\in\; \asymp$ is immediate.
 \De
For $(P_1^k, P_2^k)\in \R$, we proceed by induction on the $n$ in $P_1^k\!\!\Dwan{n}$ to
show $P_2^k\Dwa$; the property of expansion will be needed.
We elaborate the arguments below. 
\begin{itemize}
\item $n{=}0$. This is trivial based on the definition of $\R$.
\item
 \Db
 Assuming the result holds whenever 
the  number of internal actions before $\mu$
is less than $n$, we show that it also holds for $n{}$. We know that for some $R_1$
\[
P_1^k \stm{\tau} R_1 \stm{\tau}_{n-1} \stm{\mu} \qquad\quad \mbox{ i.e., } R_1\Dwan{n-1}
\] Because $(P_1^k, P_2^k)\in \R$, in terms of Definition \ref{d:uniCAsyn}, we have for some $R_2$
\[
P_2^k \wt{\tau} R_2 
\] and for some context $C',\ve{R_1},\ve{R_2}$ such that $(\ve{R_1},
\ve{R_2})\in \R\cup\!\asymp$, it holds that
\[
R_1 \geq C'[\ve{R_1}] \quad\mbox{ and }\quad C'[\ve{R_2}]\leq R_2
\] Since $R_1\Dwan{n-1}$, we have
\[
C'[\ve{R_1}]\Dwan{\leqslant {n-1}}
\] Then by (possibly) using induction hypothesis, we derive
\[
 C'[\ve{R_2}]\Dwa
\] Thus
\[
R_2\Dwa
\] So in summary
\[
R_1(\Dwan{n-1}) \geq C'[\ve{R_1}](\Dwan{\leqslant {n-1}}) \quad\mbox{ and }\quad C'[\ve{R_2}](\Dwa) \leq R_2(\Dwa)
\] Hence we finally have
\[
P_2^k\Dwa \tag*{\qEd}
\]
\De
\end{itemize}
\def\popQED{}
\end{proof}

\section{Discussion of Section \ref{s:typ_asy}}\label{ap:property_eg_type}

We briefly introduce the asynchronous $\pi$-calculus with linear types, based on the calculus in Section \ref{sec-background}. The reader is referred to \cite{SW01a} for more details. 
After that, we explain how to adapt the conditions to a setting allowing types, and prove Theorem \ref{t:resCBNlin} from Section \ref{s:typ_asy}.

\xx{ TODO (refer to "sourcecodeEAST" and  \cite{SW01a,YoshidaHB07})
\begin{itemize}
\item[-(1)]  \xx{ Formulate the linearly typed asynchronous pi-calculus (1. plain types; 2. io (sub) types; 3. linear types.);}
\item[-(2)]  \xx{ Define the (typed) asynchronous barbed congruence;}
\item[-(3)]  \xx{ Prove Theorem \ref{t:resCBNlin};}
\item[-(4)]  \xx{ Adjust (mutate) the conditions under types (in terms of clues in the proof);}
\item[-(5)]  \xx{ Check everything \& tidy up.} \\
\rc{Structure:}
\begin{itemize}
\item \xx{the asynchronous $\pi$-calculus with linear types: syntax, semantics, barbed congruence and its properties;}
\item \xx{the conditions: reuse the conditions for BT in untyped case \& present Figure \ref{f:condtions:types} (say that types are used 'implicitly' in the conditions, e.g., in $\asymp$);}
\item \xx{the linearly typed encoding of Milner;}
\item \xx{the proof of Theorem \ref{t:resCBNlin};}
\item \xx{(if any) discussion.}
\end{itemize}
\rc{\large (in several rounds)} \xx{reorganize this appendix from the next (clear) page}
\item[-$>$(6)] \xx{Communicate \& decorate.}
\end{itemize}
}



\subsection{Linearity: types, typing (rules), barbed congruence, and bisimulation}
\xx{\scriptsize
\begin{itemize}
\item[Note:] based on LTS in \cite{SW01a}, only soundness w.r.t. barbed congruence holds, i.e. (labelled) bisimilarity implies barbed congruence; completeness is not true \cite{SW01a}; of course this may be relevant to the design of LTS
\item[Question:] \rc{Does the context lemma hold in presence of linear types? See Lemma \ref{l:context}}
\end{itemize}
}
\sepp

\subsubsection{Asynchronous $\pi$-calculus with linear types}
The linearly typed asynchronous $\pi$-calculus, notation $\pi^l$, is defined in \afig
\ref{f:linear_types_lapi} (types), \afig \ref{f:types_operations} (operation on types),
\afig \ref{f:syntax_lapi} (syntax), \afig \ref{f:typing_rules_lapi} (typing), and \afig
\ref{f:trans_rule_lapi} (semantics). They are based on the corresponding part in
\Db
 \cite[Chapter 8]{SW01a}.
\De
We start with some remark about the notations.
\begin{itemize}
\item Notation $\ve{b} : \ve{T}$ is a shortcut for $b_i : T_i$ ($i=1,...,n$ where $n$ is the size of $\ve{b}$).
\item 
$\TPEQ$ is type equality, defined as the (smallest) congruence satisfying the rule below.
\begin{mathpar}
\inferrule*[left=EQ-UNFOLD]{ }{\mu X.T \;\TPEQ\; T\hosub{\mu X.T}{X}}
\end{mathpar}
\end{itemize}
\Db 
The figures \ref{f:linear_types_lapi}-\ref{f:trans_rule_lapi}  follow the formulation in
\cite{SW01a}, to which we refer for more details.

The following are standard definitions and notations concerning type environments.
\begin{definition}[Type environment]
$ $ 
\begin{itemize}
\item An \emph{assignment} is of the form $a:T$, meaning that $a$ gets type $T$.
\item A \emph{type environment}, represented by $\Gamma, \Delta$, is a finite set of assignments.
\item Metavariable $\Theta$ ranges over type environment.
\item Given a type environment $\Gamma$, $\Gamma(a)$ stands for the type assigned to $a$ by $\Gamma$, and $\supp{\Gamma} \DEF \{a \mid \Gamma(a) \mbox{ is defined}\}$.
\item $\Gamma\backs{a}$ is the type environment excluding only the definition on $a$ in $\Gamma$.
\item {A type environment is \emph{closed} if it does not contain free type variables in its assignments, and $\Gamma(a)$ is a link type for all $a\in \supp{\Gamma}$.} By default, we consider closed type environments. 
\xx{(Maybe no harm to drop the extra requirement like that in Def. 8.1.1 of \cite{SW01a}(P.285))}
\item \emph{$\Gamma$ extends $\Delta$} if $\supp{\Delta} \subseteq \supp{\Gamma}$ and
  $\Gamma \vdash x:\Delta(x)$ for every $x\in \supp{\Delta}$. 
\end{itemize}
\end{definition}
\sepp
\De

\begin{figure}[t]
\noindent\rule{\textwidth}{.5pt}\\
\begin{minipage}{11.6cm}

\vspace{0.5em}
~~~~{Types}
\[
\begin{array}{rcll}
S,T \qquad & ::= \qquad & L \;\; & \mbox{link type} \\
    & | \qquad & \Diamond & \mbox{behavior type}\\
    & | \qquad & L\rightarrow\Diamond & \mbox{abstraction type}\\\\\\
L \qquad & ::= \qquad & \mathbf{unit}  & \mbox{basic type} \\
   & | \qquad & \sharp L & \mbox{connection} \\
   & | \qquad & i L & \mbox{input capability} \\
   & | \qquad & o L & \mbox{output capability} \\
   & | \qquad & l_\sharp L & \mbox{linear connection} \\
   & | \qquad & l_i L & \mbox{linear input capability} \\
   & | \qquad & l_o L & \mbox{linear output capability} \\
   & | \qquad & \prod_{i=1}^{n}L_i \; (n\geqslant 2)  & \mbox{product} \\
   &  \qquad & \mbox{(briefly $\seq{L}$) } & \mbox{} \\
   & | \qquad & X & \mbox{{type variable}} \\
   & | \qquad & \mu X.L & \mbox{{recursive type}} \\\\\\
\mbox{Type environments} \qquad  & & & \\\\
\Gamma \qquad & ::= \qquad & \Gamma, x : L \qquad & \\
   & | \qquad & \Gamma, \seq{x} : \seq{L} & \\
   & | \qquad & \emptyset &
\end{array}
\]
\vspace{0.5em}
\end{minipage}

\noindent\rule{\textwidth}{.5pt}\caption{Types (including linear types) }\label{f:linear_types_lapi}
\end{figure}

\begin{figure}[htbp]
\noindent\rule{\textwidth}{.5pt}
\begin{minipage}{11.6cm}

\vspace{0.5em}
{Combination of types}
\[
\begin{array}{rcll}
l_i T \uplus l_o T & \;\;\DEF\;\; & l_\sharp T \quad &  \\
T \uplus T & \;\;\DEF\;\; & T \quad & \mbox{ if $T$ is not a linear type } \\
S \uplus T & \;\;\DEF\;\; & \mathbf{error} \quad & \mbox{ otherwise } \\
\end{array}
\]

{Combination of type environments}
\[
(\Gamma_1\uplus \Gamma_2)(x) \DEF \left\{
\begin{array}{ll}
\Gamma_1(x) \uplus \Gamma_2(x) &  \mbox{ if both $\Gamma_1(x)$ and $\Gamma_2(x)$ are defined } \\
\Gamma_1(x) &  \mbox{ if $\Gamma_1(x)$ is defined, but not $\Gamma_2(x)$} \\
\Gamma_2(x) &  \mbox{ if $\Gamma_2(x)$ is defined, but not $\Gamma_1(x)$} \\
\mbox{undefined} & \mbox{ if neither $\Gamma_1(x)$ nor $\Gamma_2(x)$ is defined, } \\
  & \mbox{ or both are defined but $\Gamma_1(x) \uplus \Gamma_2(x){=}\mathbf{error}$}
\end{array}\right.
\]

{Extraction of linear names}
\[
\begin{array}{rcl}
Lin(\Gamma) & \DEF & \{x \;\mid\; \Gamma(x) \mbox{ is a linear type}\}   \\
Lin_i(\Gamma) & \DEF & \{x \;\mid\; \Gamma(x)=l_iS \mbox{ or } \Gamma(x)=l_\sharp S, \mbox{ for some } S\}
\end{array}
\]
\vspace{0.5em}
\end{minipage}
\noindent\rule{\textwidth}{.5pt}

\caption{Operations on types}\label{f:types_operations}
\end{figure}

\begin{figure}[t]
\noindent\rule{\textwidth}{.5pt}
\begin{minipage}{11.6cm}

\[
\begin{array}{ccll}
  & & & \mbox{Values} \\
v \qquad & ::= \qquad & x \qquad & \mbox{name} \\
    & | \qquad & \seq{x} & \mbox{name product}\\
    & | \qquad & \star & \mbox{basic value}\\\\\\
  & & & \mbox{Agents} \\
A \qquad & ::= \qquad & P \qquad & \mbox{process} \\
    & | \qquad & F & \mbox{abstraction}\\\\\\
  & & & \mbox{Processes} \\
P,Q,R \qquad & ::= \qquad & \nil \qquad\qquad\qquad & \mbox{null} \\
    & | \qquad & \inp a {\seq{b}} . P & \mbox{input}\\
    & | \qquad & \out a {\seq{b}}  & \mbox{output}\\
    & | \qquad & P | Q & \mbox{parallel composition}\\
    & | \qquad & (\res {a:L}) P & \mbox{restriction}\\
    & | \qquad & ! \inp a {\seq{b}} . P & \mbox{replication}\\
    & | \qquad & \app F a & \mbox{application}\\
    & | \qquad & \mathbf{wrong} & \mbox{error}\\\\\\
  & & & \mbox{Abstractions} \\
F \qquad & ::= \qquad & \abs {a} P \qquad & \mbox{} \\

\end{array}
\]
\vspace{0.5em}
\end{minipage}
\noindent\rule{\textwidth}{.5pt}

\caption{Syntax}\label{f:syntax_lapi}
\end{figure}

\begin{figure}[htbp]
\noindent\rule{\textwidth}{.5pt}
{\small
\begin{minipage}{11.6cm}

\vspace{0.5em}
{Value typing}\\
\begin{mathpar}
\inferrule[TV-BASE]{Lin(\Gamma)=\emptyset}{\Gamma  \vdash \star : \mathbf{unit}}    \and
\inferrule[TV-NAME]{Lin(\Gamma)=\emptyset}{\Gamma, x : T \vdash x : T}  \and
\inferrule[TV-PRODUCT]{\Gamma_i \vdash x_i : T_i \and i=0,1,....,n}{\biguplus_i{\Gamma_i} \vdash \seq{x} : \seq{T}}  \and
\inferrule[TV-SUBSUMPTION]{\Gamma \vdash x : S \and S \subtp T}{\Gamma \vdash x : T}   \and
\inferrule[TV-EQ]{\Gamma \vdash x : S \and S \;\TPEQ\; T}{\Gamma \vdash x : T}
\end{mathpar}

{Subtyping}
\begin{mathpar}
\inferrule*[left=SUB-REFL]{ }{T \subtp T}    \and
\inferrule*[left=SUB-TRANS]{S \subtp S'\and S'\subtp T }{S \subtp T}    \and
\inferrule*[left=SUB-$\sharp$I]{ }{\sharp T \subtp  iT}    \and
\inferrule*[left=SUB-$\sharp$O]{ }{\sharp T \subtp  oT}    \and
\inferrule*[left=SUB-II]{S \subtp T}{iS \subtp iT}    \and
\inferrule*[left=SUB-OO]{T \subtp S}{oS \subtp oT}    \and
\inferrule*[left=SUB-PRODUCT]{S_i \subtp T_i \and i=0,1,...,n}{\seq{S} \subtp \seq{T}}    \and
\inferrule*[left=SUB-BB]{S \subtp T \and T \subtp S}{\sharp S \subtp \sharp T}    \and
\end{mathpar}

{Process typing}
\begin{mathpar}
\inferrule*[left=T-NIL]{Lin(\Gamma)=\emptyset}{\Gamma \vdash \nil : \Diamond}    \and
\inferrule*[left=T-$\Delta$-HOLE]{\Theta \mbox{ extends the (fixed) } \Delta }{\Theta \vdash \holem : \Diamond}    \and
\inferrule*[left=T-INP]{\Gamma_1 \vdash a : \varsigma \seq{T} \;(\varsigma\in \{i,l_i\}) \and \Gamma_2, \seq{b} : \seq{T} \vdash P : \Diamond}{\Gamma_1\uplus\Gamma_2 \vdash \inp  a {\seq{b}}.P : \Diamond}    \and
\inferrule*[left=T-OUT]{\Gamma_1 \vdash a: \varsigma \seq{T} \;(\varsigma\in \{o,l_o\})\and  \Gamma_2 \vdash \seq{b} : \seq{T}}{\Gamma_1\uplus\Gamma_2 \vdash \out a {\seq{b}} : \Diamond}  \and
\inferrule*[left=T-PAR]{\Gamma_1 \vdash P : \Diamond \and  \Gamma_2 \vdash Q : \Diamond}{\Gamma_1\uplus\Gamma_2 \vdash P \para Q : \Diamond}    \and
\inferrule*[left=T-RES]{\Gamma, a : L \vdash P : \Diamond \and }{\Gamma \vdash (\res {a : L}) P : \Diamond}   \and 
\inferrule*[left=T-RES']{\Gamma\vdash P : \Diamond \and }{\Gamma \vdash (\res {a : L}) P : \Diamond}   \and 
\inferrule*[left=T-REP]{\Gamma_1 \vdash a : i \seq{T} \and \Gamma_2, \seq{b} : \seq{T} \vdash P : \Diamond \and Lin(\Gamma_2)=\emptyset}{\Gamma_1\uplus\Gamma_2 \vdash !\inp  a {\seq{b}}.P : \Diamond}   \and
\inferrule*[left=T-ABS]{\Gamma, a : T \vdash P : \Diamond}{\Gamma \vdash \abs {a} P : T\rightarrow \Diamond} \and
\inferrule*[left=T-APP]{\Gamma_1 \vdash  F : T\rightarrow \Diamond \and \Gamma_2 \vdash b : T}{\Gamma_1\uplus\Gamma_2 \vdash \app F b : \Diamond} \enspace.
\end{mathpar}
\vspace{0.5em}
\end{minipage}
}
\noindent\rule{\textwidth}{.5pt}

\caption{Typing rules}\label{f:typing_rules_lapi}
\end{figure}

\begin{figure}[htbp]
\noindent\rule{\textwidth}{.5pt}
\begin{minipage}{11.6cm}
\vspace{0.5em}

\begin{center}
\begin{tabular}{rlrl}
{\trans{ inp}}:& $ \inp  a {\seq{b}} . P \stm{\inp a {\seq{b}}} P$  &
\trans{ rep}:& $  !\inp a {\seq{b}} . P   \stm{ \inp a {\seq{b}}}  P | ! \inp a {\seq{b}} . P  $, if $a \not \in {\seq{b}}$ \\[\mysp]
{\trans{ out}}:& $ \out a {\seq{b}}     \stm{\out a {\seq{b}}} \nil $  &
{\trans{ par}}:& $\displaystyle{   P \stm\mu   P' \over   P | Q   \stm\mu P'| Q } $ if $\bn \mu \cap \fn Q = \emptyset $   \\[\mysp]
\multicolumn{4}{c}{
  {\trans{com}}: $ \;\;    \displaystyle{ P \stm{\inp  a{\seq{c}} }P' \hskip .4cm   Q \stm{(\res {\til{d} : \til{L}}) \out a {\seq{b}}} Q'  \over     P  |  Q \stm{ \tau} (\res{\til{d} : \til{L}})( P' \sub {\seq{b}} {\seq{c}} |  Q' )}$ if  $\til{d} \cap \fn P = \emptyset $
} \\[\mysp]
\multicolumn{4}{c}{
{\trans{ res}}: \; \; $\displaystyle{ P \stm{\mu}P' \over  (\res {a : L})     P   \stm{ \mu} (\res {a : L}) P'  } $ $ a$ does not appear in $\mu$ } \\[\mysp]
\multicolumn{4}{c}{
{\trans{ open}}:\;\; $\displaystyle{ P \stm{(\res{\til{d} : \til{L}}) \out a {\seq{b}} }P' \over  (\res {c:L})     P   \stm{ (\res{c:L,\til{d}:\til{L}}) \out a {\seq{b}}  }  P'  } $ $c\in \tilb -\til{d}, \;  a \neq  c$.
} \\[\mysp]
\multicolumn{4}{c}{
 {\trans{ app}}:  $ \; \; \displaystyle{ P\sub b a \stm{\mu}P' \over  \app F b   \stm{ \mu}  P'  } $ if  $F = \abs {a} P$} \\[\mysp]
\multicolumn{4}{c}{
{\trans{ inpErr}}: $ \inp  a {\seq{b}} . P \stm{\tau} \mathbf{wrong}$ \; (\mbox{$a$ is not a name})
} \\[\mysp]
\multicolumn{4}{c}{
\trans{ outErr}: $ \out a {\seq{b}}     \stm{\tau} \mathbf{wrong} $ \; (\mbox{$a$ is not a name})
}\\[\mysp]
\end{tabular}
\end{center}
\end{minipage}
\noindent\rule{\textwidth}{.5pt}

\caption{Transition rules}\label{f:trans_rule_lapi}
\end{figure}

\subsubsection{Asynchronous typed barbed congruence}
\

We give the definition of barbed congruence in $\pi^l$, and some of its properties.
The following notion is used to express the quantification over contexts in the definition of barbed congruence.
\Db
\begin{definition}[($\Gamma/\Delta$)-context] 
Context $C$ is a \emph{$(\Gamma/\Delta$)-context} if, assuming $\holem$ as a process, either the judgement $\Gamma \vdash C:\Diamond$ or $\Gamma \vdash C:T\rightarrow\Diamond$ is valid.
\end{definition}
Note that if $C$ is a $(\Gamma/\Delta$)-context and $\Delta \vdash P$, then $\Gamma \vdash
C[P]$.
\De

\Db
Notation $P\da_{\mu}$ (respectively $P\Da_{\mu}$) means $P\stm{\mu}$ (respectively $P\wt{}\stm{\mu}$).
\begin{definition}[Asynchronous typed barbed bisimilarity and congruence]
\

\begin{enumerate}
\item \emph{Asynchronous typed barbed bisimilarity} is the largest symmetric relation, $\BSTA$, such that whenever $P \BSTA Q$ \\
(a) [Barb-preserving] $P\da_{\mu}$ implies $Q\Da_{\mu}$, where $\mu$ is an output; \\
(b) [Reduction-closed] $P\stm{\tau} P'$ implies $Q\wt{} \BSTA P'$.
\item Suppose $\Delta\vdash P,Q$ for some $\Delta$. Processes $P$ and $Q$ are 
\emph{asynchronously typed barbed congruent (w.r.t. $\Delta$)}, written $\Delta \Vdash P \BCTA Q$, if for every $(\Gamma/\Delta$)-context $C$ (in which $\Gamma$ is closed), one has $C[P] \BSTA C[Q]$.
\end{enumerate}
\end{definition}
\De
Some technical concepts and results are given below without comments. They are
\Db discussed and proved in \cite{SW01a}. We 
\De
refer the reader to \cite{SW01a} for more details. 

\begin{definition}[$\Delta$-to-$\Gamma$ substitution \xx{(ref.\cite{SW01a}:P.324)}\!\!]
Suppose $\Delta,\Gamma$ are type environments, and $\sigma$ is a substitution on names. We
say that $\sigma$ is a 
\Db \emph{$\Delta$-to-$\Gamma$ substitution} if
\De
\begin{enumerate}
\item[] for every $x$ on which $\Delta$ is defined, $\Gamma\vdash \sigma(x) : \Delta(x)$.
\end{enumerate}
\end{definition}

\begin{lemma}[Substitution Lemma \xx{(ref.\cite{SW01a}:P.285)}\!\!] \label{l:sub-lin}
Assume $\Gamma,a:S\vdash A:T$, $\Gamma'\vdash b:S$, and $\Gamma\uplus\Gamma'$ is
defined. Then $\Gamma\uplus\Gamma' \vdash A\hosub{b}{a}:T$. 
\Db \De
\end{lemma}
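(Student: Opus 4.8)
The plan is to argue by induction on the derivation of the typing judgement $\Gamma, a:S \vdash A:T$. Because the agent-typing rules (notably T-OUT and T-APP) carry value judgements such as $\Gamma \vdash \seq{b}:\seq{T}$ among their premises, I would carry out the induction over \emph{all} judgement forms at once --- value typing, the subtyping-driven rules, and process/abstraction typing --- rather than over agent judgements alone. Throughout I assume, by $\alpha$-conversion, that every bound name of $A$ is distinct from $a$, from $b$, and from all names in $\supp{\Gamma'}$, so that the restriction, input and abstraction cases never clash with the substituted name.

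The technical core is a decomposition property for $\uplus$. Whenever a rule splits the ambient environment as $\Gamma, a:S = \Gamma_1 \uplus \Gamma_2$, I would first observe that the binding of $a$ itself decomposes: each $\Gamma_i$ can be written as $\widehat{\Gamma}_i$ together with an optional binding $a:S_i$, where $\Gamma = \widehat{\Gamma}_1 \uplus \widehat{\Gamma}_2$ and $S_1 \uplus S_2 = S$ (an absent binding contributing neutrally). I would then split the environment for $b$ correspondingly as $\Gamma' = \Gamma'_1 \uplus \Gamma'_2$ with $\Gamma'_i \vdash b:S_i$, so that each premise receives its own instance of the hypothesis and the induction hypothesis applies to it. Reassembling the two conclusions with the same rule yields $(\widehat{\Gamma}_1 \uplus \Gamma'_1) \uplus (\widehat{\Gamma}_2 \uplus \Gamma'_2) = \Gamma \uplus \Gamma'$, which is the desired conclusion; the same pattern, with $\biguplus_i$, covers the $n$-ary rule TV-PRODUCT.

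First I would dispose of the base and leaf cases. In TV-NAME typing $x:T$ with $x = a$, one has $T = S$, the side condition forces $Lin(\Gamma) = \emptyset$, and $A\hosub{b}{a} = b$; from $\Gamma' \vdash b:S$ and non-linearity of $\Gamma$ a weakening step gives $\Gamma \uplus \Gamma' \vdash b:S$. When $x \neq a$, and likewise in T-NIL, the binding $a:S$ lies in a non-linear remainder, so $S$ is non-linear, $\Gamma'$ is then forced non-linear as well, the substitution is inert on the term, and the same leaf rule reapplies in $\Gamma \uplus \Gamma'$. The rules TV-SUBSUMPTION and TV-EQ propagate through the induction hypothesis using $S \subtp T$ and $\TPEQ$ unchanged; the unary process rules T-RES, T-RES' and T-ABS follow immediately from the induction hypothesis (with the freshness convention for the bound name). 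For the binary rules T-PAR, T-INP, T-OUT, T-APP, and for T-REP, I apply the $\uplus$-decomposition of the previous paragraph and recombine.

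The main obstacle is the genuinely linear instance of the environment split, where $a$ carries a linear connection type $S = l_\sharp T'$ whose input and output halves are routed to different premises, i.e.\ $S_1 = l_i T'$ and $S_2 = l_o T'$ with $l_i T' \uplus l_o T' = l_\sharp T'$. There I must produce a matching split of $\Gamma'$ in which $b$ carries $l_i T'$ in $\Gamma'_1$ and $l_o T'$ in $\Gamma'_2$; this is exactly where the hypothesis $\Gamma' \vdash b:S$ is used in full, and it rests on the defining clause $l_i T \uplus l_o T \DEF l_\sharp T$ of the combination operator. The replication rule T-REP carries the side condition $Lin(\Gamma_2) = \emptyset$ on the body, so a linearly typed $a$ can never occur there; that case therefore collapses to the non-linear treatment. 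Once these splittings are checked against the definition of $\uplus$, the remaining reassembly is routine, and the induction closes.
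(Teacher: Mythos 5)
The paper does not prove this lemma at all: it is imported verbatim from \cite{SW01a} (the reference to p.~285 in the source), so there is no in-paper argument to compare against. Your proof --- a simultaneous induction over all judgement forms, with the $\uplus$-decomposition of $\Gamma,a:S$ at the splitting rules, the corresponding split of $\Gamma'$ exploiting $l_iT\uplus l_oT=l_\sharp T$ and $Lin(\Gamma''')=\emptyset$, and the non-linear collapse at T-REP --- is the standard substitution-lemma argument for this kind of linear type system and is correct as far as it goes.
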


\begin{lemma}[Subject Reduction Lemma \xx{(ref.\cite{SW01a}:P.286)}\!\!]\label{l:subj-red--lin}
Let $\Gamma$ be closed and $\Gamma \vdash P$. Suppose $P\stm{\alpha} P'$.
\begin{enumerate}
\item If $\alpha$ is $\tau$, then either $\Gamma \vdash P'$ or there exist $a,T$ with $\Gamma(a)=l_\sharp T$ such that $\Gamma\backs{a} \vdash P'$.
\item If $\alpha$ is $a(\seq{c})$, then there exist $\Gamma_1,\Gamma_1,\seq{S}$ such that
\begin{enumerate}
\item $\Gamma = \Gamma_1\uplus\Gamma_2$;
\item $\Gamma_1 \vdash a:i\seq{S}$ or  $\Gamma_1 \vdash a:l_i \seq{S}$;
\item if $\Gamma_3\vdash \seq{c}:\seq{S}$ and $\Gamma_2\uplus\Gamma_3$ is defined, then $\Gamma_2\uplus\Gamma_3 \vdash P'$.
\end{enumerate}
\item If $\alpha$ is $(\res{\til{d} : \til{T}})\out {a} {\seq{c}}$, then there exist $\Gamma_1,\Gamma_2,\Gamma_3,\seq{S}$ such that
\begin{enumerate}
\item $\Gamma,\til{d}:\til{T} = \Gamma_1\uplus\Gamma_2\uplus\Gamma_3$;
\item $\Gamma_1 \vdash a:o\seq{S}$ or  $\Gamma_1 \vdash a:l_o \seq{S}$;
\item $\Gamma_2\vdash \seq{c}:\seq{S}$;
\item $\Gamma_3\vdash P'$.
\end{enumerate}
\end{enumerate}
\end{lemma}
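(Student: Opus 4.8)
The plan is to argue by induction on the derivation of the transition $P \stm{\alpha} P'$, with a case analysis on the last rule applied in the transition system of \afig \ref{f:trans_rule_lapi}. The three clauses of the statement correspond to the three possible shapes of $\alpha$, so in each rule I read off which clause is being established. The auxiliary tools are: the usual inversion (generation) properties of the typing rules of \afig \ref{f:typing_rules_lapi}; the Substitution Lemma~\ref{l:sub-lin}; the elementary algebra of the combination operator $\uplus$ (commutativity, associativity, $T\uplus T = T$ for non-linear $T$, and $l_iT\uplus l_oT = l_\sharp T$); and the observation that closedness is inherited by the components of any splitting $\Gamma = \Gamma_1\uplus\Gamma_2$ of a closed $\Gamma$, since every summand is again a link type without free type variables. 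I would also note at the outset that the error rules \trans{inpErr} and \trans{outErr} never apply to a well-typed $P$: typing a prefix $\inp a {\seq b}.P_0$ or $\out a {\seq b}$ forces $a$ to be a name (rules \textsc{T-Inp}, \textsc{T-Out}), so no reachable $P'$ is $\mathbf{wrong}$.

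For the axioms I use inversion directly. Rule \trans{inp} gives $P = \inp a {\seq b}.P_0$, $P' = P_0$ and $\alpha = a(\seq b)$, so clause~2 applies: inverting \textsc{T-Inp} yields $\Gamma = \Gamma_1\uplus\Gamma_2$ with $\Gamma_1\vdash a : \varsigma\seq S$ for $\varsigma\in\{i,l_i\}$ and $\Gamma_2,\seq b:\seq S \vdash P_0$, and condition~2(c) follows from the Substitution Lemma when renaming the placeholders $\seq b$ to the received $\seq c$. Rule \trans{rep} is analogous, but I additionally re-establish the typing of the persistent copy $!\inp a {\seq b}.P_0$; this is legitimate because \textsc{T-Rep} forces the subject capability to be the non-linear $i\seq S$ and $Lin(\Gamma_2)=\emptyset$, so $\Gamma$ itself may be reused idempotently under $\uplus$. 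Rule \trans{out} gives clause~3 with empty extrusion: inverting \textsc{T-Out} gives $\Gamma_1\vdash a:\varsigma\seq S$ ($\varsigma\in\{o,l_o\}$) and $\Gamma_2\vdash\seq c:\seq S$, and the residue $\nil$ is typed under a $\Gamma_3$ collecting the remaining names, which are necessarily non-linear so that $Lin(\Gamma_3)=\emptyset$.

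The purely structural inductive cases reduce to the induction hypothesis. For \trans{par} I split $\Gamma=\Delta_1\uplus\Delta_2$ via \textsc{T-Par}, apply the hypothesis to the moving component under $\Delta_1$ (closed, by the splitting remark), and fold $\Delta_2$, which types the stationary $Q$, into the persistent part of the resulting decomposition: the $\Gamma_2$ part for an input, the $\Gamma_3$ part for an output, and directly for a $\tau$; in the $\tau$ case a consumed linear name of $\Delta_1$ is absent from $\Delta_2$ (linear types do not combine), so the alternative $\Gamma\backs{a}\vdash P'$ is preserved. Rule \trans{res} is handled by inverting \textsc{T-Res}/\textsc{T-Res$'$}, applying the hypothesis under $\Gamma,e:L$, and re-applying restriction, with the subcase that the consumed linear name equals the restricted $e$ absorbed by \textsc{T-Res$'$}. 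Rule \trans{open} moves the extruded name $c$ from the restriction into the object-typing component $\Gamma_2$, turning the decomposition for the premise into exactly the form required by clause~3 with extruded names $c,\seq d$. Rule \trans{app} is dispatched by inverting \textsc{T-App} and \textsc{T-Abs} and then invoking the Substitution Lemma to type $P_0\sub b a$ under $\Gamma$, after which the hypothesis applies verbatim.

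The main obstacle is the communication rule \trans{com}, which establishes clause~1 for $\tau$. After splitting $\Gamma=\Delta_1\uplus\Delta_2$, I apply the input hypothesis to $P_0$ under $\Delta_1$ and the output hypothesis to $Q$ under $\Delta_2$, obtaining $\Delta_1=\Gamma_1^i\uplus\Gamma_2^i$ with $\Gamma_1^i\vdash a:\varsigma\seq S$ ($\varsigma\in\{i,l_i\}$), and for the output side a subject typing $\Gamma_1^o\vdash a:\varsigma'\seq S'$ ($\varsigma'\in\{o,l_o\}$) together with $\Gamma_2^o\vdash\seq b:\seq S'$ and $\Gamma_3^o\vdash Q'$. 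The delicate points are, first, reconciling the carried types: since $\Gamma(a)=\Delta_1(a)\uplus\Delta_2(a)$ is a well-formed link type, the input side expects $\seq S$ and the output side supplies $\seq S'$ with $\seq S'\subtp\seq S$ by the covariance of input and contravariance of output under \textsc{Sub-II}/\textsc{Sub-OO}, so subsumption gives $\Gamma_2^o\vdash\seq b:\seq S$; second, typing the communicated residual by the Substitution Lemma to obtain $\Gamma_2^i\uplus\Gamma_2^o\vdash P_0'\sub{\seq b}{\seq c}$; and third, the linear bookkeeping of the subject $a$: if $a$ is non-linear then its capabilities recombine and $\Gamma\vdash P'$, whereas if $a$ is linear then $\Delta_1(a)=l_i\seq S$ and $\Delta_2(a)=l_o\seq S$ combine to the fully consumed $l_\sharp\seq S$, which is precisely the alternative $\Gamma\backs{a}\vdash P'$ of clause~1. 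Getting this capability accounting to mesh with the subtyping-plus-substitution step is where the real work lies; the remaining manipulations with $\uplus$ and the restriction of the extruded $\seq d$ are routine.
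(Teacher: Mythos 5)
Your proposal is correct, but note that the paper itself gives no proof of this lemma: it is stated as imported background, with the remark that such results ``are discussed and proved in \cite{SW01a}'', to which the reader is referred. Your argument --- induction on the derivation of $P \stm{\alpha} P'$, with inversion of the typing rules, the Substitution Lemma, the idempotence/combination algebra of $\uplus$, and the careful linear bookkeeping in the \trans{com} case (non-linear subjects recombine to give $\Gamma \vdash P'$, while $l_i\seq S \uplus l_o\seq S = l_\sharp\seq S$ yields the $\Gamma\backs{a} \vdash P'$ alternative) --- is precisely the standard proof given in the cited reference, so it matches the paper's intended justification.
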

\sep

Lemma \ref{l:context} offers a characterization of $\BCTA$, as an extension to related results in \cite{SW01a}.

\begin{definition}
\label{d:context_def}
Suppose $\Delta\vdash P,Q$. We \Db write \De
 $\Delta \Vdash P \CTTA Q$ if, for every closed $\Gamma$ that extends $\Delta$, every $\Delta$-to-$\Gamma$ substitution $\sigma$, and every process $R$ such that $\Gamma\vdash R$, we have $P\sigma \para R \BSTA Q\sigma \para R$.
\end{definition}

\DSb
The following Context Lemma is useful when reasoning about the
behaviour of processes that are barbed congruent,
so to drastically limit the quantification on contexts by appealing to
the above Definition~\ref{d:context_def}. In this way, the reasoning
can become similar to that employed when working with ordinary labeled
bisimilarity   in the untyped case (e.g., case (4) in the proof of
 Theorem \ref{t:resCBNlin} in Section~\ref{sss:resCBNlin};
see also
 \cite{SW01a} for other examples and discussion).
\DSe

\begin{lemma}[Context Lemma for linearity]
\label{l:context}
Suppose $\Delta\vdash P,Q$. \\
{It holds that ~ $\Delta \Vdash P \BCTA Q$ ~iff~ $\Delta \Vdash P \CTTA Q$.}
\end{lemma}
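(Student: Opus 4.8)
The plan is to prove the two implications separately, with the left-to-right direction being routine and the converse carrying the real weight. For $\Delta \Vdash P \BCTA Q \Rightarrow \Delta \Vdash P \CTTA Q$, I would take a closed $\Gamma$ extending $\Delta$, a $\Delta$-to-$\Gamma$ substitution $\sigma$, and a process $R$ with $\Gamma \vdash R$, and first verify that $\BCTA$ is closed under type-respecting substitutions, so that $\Gamma \Vdash P\sigma \BCTA Q\sigma$ (with $\Gamma \vdash P\sigma, Q\sigma$ guaranteed by the Substitution Lemma~\ref{l:sub-lin}). Since $[\cdot]\para R$ is a $(\Gamma/\Gamma)$-context whenever $\Gamma \vdash R$ (by the typing rule for parallel composition), the definition of $\BCTA$ instantiated at this context gives $P\sigma \para R \BSTA Q\sigma \para R$, which is exactly what $\CTTA$ requires. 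The only point needing care here is the closure of $\BCTA$ under $\sigma$: in the untyped asynchronous setting this is standard, and I would lift it to $\pi^l$ by checking, through Lemma~\ref{l:sub-lin}, that a $\Delta$-to-$\Gamma$ substitution never forces a linear capability to be duplicated (the combination $\uplus$ stays defined).

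The converse, $\Delta \Vdash P \CTTA Q \Rightarrow \Delta \Vdash P \BCTA Q$, is the heart of the lemma. I would first record the base case $\CTTA \subseteq \BSTA$, obtained by instantiating the definition with the identity substitution and $R \DEF \nil$. It then suffices to show that $\CTTA$ is preserved by every $\pi^l$-operator, i.e. that it is a typed congruence: for then, by induction on the structure of an arbitrary $(\Gamma/\Delta)$-context $C$, we obtain $C[P] \CTTA C[Q]$ and hence $C[P] \BSTA C[Q]$, which is precisely $\Delta \Vdash P \BCTA Q$. Closure under parallel composition and restriction is essentially built into the shape of the definition of $\CTTA$ (a further parallel component is absorbed into $R$, and a top-level restriction is pushed into $R$ using freshness of the substituted names), so the work concentrates on the prefix and replication cases; note that in the asynchronous calculus outputs are continuation-free, so the only prefix case is the input.

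The delicate cases are input prefix and replication, and this is exactly where the substitution quantification in the definition of $\CTTA$ earns its keep. To show $P \CTTA Q$ implies $\inp a {\seq{b}}.P \CTTA \inp a {\seq{b}}.Q$, I would exhibit the barbed bisimulation consisting of $\BSTA$ together with all pairs $\big(\, (\inp a {\seq{b}}.P)\sigma \para R,\ (\inp a {\seq{b}}.Q)\sigma \para R \,\big)$; barb-preservation is immediate since the guards are identical, and the only non-trivial reduction is the firing of the input against an output $\out {a\sigma} {\seq{c}}$ supplied by $R$. After that step the residuals are $P\sigma' \para R'$ and $Q\sigma' \para R'$ with $\sigma' \DEF \sigma \circ \hosub{\seq{c}}{\seq{b}}$, and these are related by the hypothesis $P \CTTA Q$ instantiated at $\sigma'$ and $R'$. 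That $\sigma'$ is again a legal substitution into a coherent environment and that $R'$ stays well typed follow from the Subject Reduction Lemma~\ref{l:subj-red--lin} and the Substitution Lemma~\ref{l:sub-lin}. Replication is handled by the same idea applied to a generalised relation permitting an arbitrary finite family of copies $\prod_i P\sigma_i$ against $\prod_i Q\sigma_i$ running alongside the context, each copy being discharged through the input-prefix argument.

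I expect the main obstacle to be precisely the linear-capability bookkeeping across transitions in the prefix and replication cases: tracking, via Lemmas~\ref{l:sub-lin} and~\ref{l:subj-red--lin}, how the type environment splits under $\uplus$ when an input consumes a (possibly linear) name and a fresh copy of the tested process is released, and verifying that every intermediate process remains typable so that the pairs fed to the hypothesis $P \CTTA Q$ genuinely fall within the scope of its quantifiers. This is the one place where the argument departs from the untyped template of \cite{SW01a}. Once the congruence of $\CTTA$ is secured with the types in hand, assembling the induction over contexts and combining the two implications to conclude Lemma~\ref{l:context} is straightforward.
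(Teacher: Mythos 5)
Your proposal matches the paper's proof in all essentials: the forward direction is dispatched as routine, and the converse is an induction on $(\Gamma/\Delta)$-contexts (equivalently, congruence of $\CTTA$) whose only delicate case is the input prefix, handled exactly as in the paper by exhibiting a barbed bisimulation of prefixed pairs and discharging the post-communication residuals via the $\CTTA$ hypothesis at the extended substitution $\sigma\fosub{\seq{c}}{\seq{b}}$, with typability of $R'$ and the new substitution secured by Lemmas~\ref{l:sub-lin} and~\ref{l:subj-red--lin}. No gaps; the paper likewise defers replication to the i/o-typed argument of \cite{SW01a}, which your sketch of the generalised relation correctly anticipates.
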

\begin{proof}
\xx{(ref.\cite{SW01a}:P.342)}

The implication that $\BCTA$ implies $\CTTA$ is easy. For the other direction, we prove by induction on the structure of ($\Gamma/\Delta$)-context $C$ (in which $\Gamma$ is closed) that
\[
\Delta \Vdash P \CTTA Q \quad\mbox{ implies }\quad \Gamma \Vdash C[P] \CTTA C[Q]
\]

We first give two claims whose proofs are similar to those for the untyped $\pi$-calculus. They are used in the analysis of this lemma.
\begin{itemize}
\item[] {Claim 1.} If $\Delta \Vdash P \CTTA Q$ and $\Gamma$ extends $\Delta$, then $\Gamma \Vdash P \CTTA Q$.
\item[] {Claim 2.} If $\Delta \Vdash P \CTTA Q$ and $\Delta(a){=}S$, then $\Delta\backs{a} \Vdash (\res {a:S})P \CTTA (\res {a:S})Q$.
\end{itemize}

To proceed, there are a couple of cases to analyze.
\begin{itemize}
\item $C$ is $\nil$ or $\out a {\seq{b}}$. This is trivial.
\item $C$ is $\holem$. This is by Claim 1.
\item $C$ is $R\para C'$. This is immediate by induction hypothesis and the premise.
\item $C$ is $(\res {a:S}) C'$. This is by Claim 2.
\item $C$ is $!\inp a {\seq{b}}.C'$. This case is similar to that for i/o types; see \cite{SW01a}.

\item $C$ is $\inp a {\seq{b}}.C'$.
We focus on the subcase when $a$ is of a linear type. Otherwise the argument is similar to that for i/o types in \cite{SW01a}.
Given $C$ as a ($\Gamma/\Delta$)-context, we have
\begin{equation}\label{contextlemma-eq1}
\Gamma \vdash a : l_i \seq{S} \quad\mbox{ and }\quad C' \mbox{ is a ($(\Gamma,\seq{b}:\seq{S})/\Delta$)-context}
\end{equation}
The aim is to prove that for every closed $\Gamma'$ extending $\Gamma$, every $R$ such that $\Gamma'\vdash R$, and every $\Gamma$-to-$\Gamma'$ substitution $\sigma$,  the relation $\R \cup \BSTA$ is a barbed bisimulation, where $\R$ is defined as
\begin{equation}\label{contextlemma-eq2}
\Big\{\big( (\inp a {\seq{b}}.C'[P])\sigma \para R ,  (\inp a {\seq{b}}.C'[Q])\sigma \para R \big) \;\Big|\; \sigma,R \mbox{\small ~ are as described above}\Big\}
\end{equation}
Then it holds that
\begin{equation*}
(\inp a {\seq{b}}.C'[P])\sigma \para\! R \;\BSTA\;  (\inp a {\seq{b}}.C'[Q])\sigma \para\! R
\end{equation*}
Take an element from $\R$, the barb-preserving property should be clear since no immediate
output can be made by $(\inp a {\seq{b}}.C'[P])\sigma$ or $(\inp a
{\seq{b}}.C'[Q])\sigma$. We thus, in the
\DSb
remainder
\DSe of the proof, consider the reduction-closed requirement, and show that every reduction of, say, $(\inp a {\seq{b}}.C'[P])\sigma \para\!\! R$ can be matched by $(\inp a {\seq{b}}.C'[Q])\sigma \para\!\! R$. The most interesting case is when the reduction results from the interaction between $(\inp a {\seq{b}}.C'[P])\sigma$ and $R$. That is,
\begin{equation}\label{contextlemma-eq3}
(\inp a {\seq{b}}.C'[P])\sigma \para\! R \;\;\stm{\tau}\;\; (\res{\til{d} : \til{T}})(C'[P]\sigma\fosub{\seq{c}}{\seq{b}} \para\! R')
\end{equation}
The reduction is shown below to be matched by
\begin{equation}\label{contextlemma-eq4}
(\inp a {\seq{b}}.C'[Q])\sigma \para\! R \;\;\stm{\tau}\;\; (\res{\til{d} : \til{T}})(C'[Q]\sigma\fosub{\seq{c}}{\seq{b}} \para\! R')
\end{equation}
That is,
\begin{equation}\label{contextlemma-eq5}
(\res{\til{d} : \til{T}})(C'[P]\sigma\fosub{\seq{c}}{\seq{b}} \para\! R')  \;\BSTA\; (\res{\til{d} : \til{T}})(C'[Q]\sigma\fosub{\seq{c}}{\seq{b}} \para\! R')
\end{equation}
which is derivable if we can prove the \Db  following, because $\BSTA$ is closed by
restriction: \De
\begin{equation}\label{contextlemma-eq6}
C'[P]\sigma\fosub{\seq{c}}{\seq{b}} \para\! R'  \;\BSTA\; C'[Q]\sigma\fosub{\seq{c}}{\seq{b}} \para\! R'
\end{equation}
To this end, (\ref{contextlemma-eq6}) can be inferred by induction hypothesis 
\Db
on $C'$, 
which is  
a ($(\Gamma,\seq{b}:\seq{S})/\Delta$)-context, if  one \De can exhibit that for closed $\Gamma''\DEF \Gamma',\til{d}:\til{T}$ it holds that
\begin{eqnarray}
\sigma\fosub{\seq{c}}{\seq{b}} \mbox{ is a $(\Gamma,\seq{b}:\seq{S})$-to-$\Gamma''$ substitution} \label{contextlemma-eq7} \\
\Gamma'' \vdash R' \label{contextlemma-eq8}
\end{eqnarray}
In (\ref{contextlemma-eq3}), the reduction results from an output from $R$, that is,
\[
R \stm{(\res{\til{d} : \til{T}})\out {a} {\seq{c}}} R'
\]
The most intriguing situation here is when $R$ has the linear output capability on $a$, i.e., $\Gamma'\vdash a:l_o T$.
As $\Gamma' \vdash R$, by the Subject Reduction lemma (Lemma \ref{l:subj-red--lin}), we have, for some $\Gamma_1,\Gamma_2,\Gamma_3$
\begin{eqnarray}
\Gamma',\til{d}:\til{T} = \Gamma_1\uplus\Gamma_2\uplus\Gamma_3 \label{contextlemma-eq9}\\
\Gamma_1 \vdash a:l_o \seq{S} \label{contextlemma-eq10}\\
\Gamma_2\vdash \seq{c}:\seq{S} \label{contextlemma-eq11}\\
\Gamma_3\vdash R'\label{contextlemma-eq12}
\end{eqnarray}
From (\ref{contextlemma-eq9}) and (\ref{contextlemma-eq12}), we infer (\ref{contextlemma-eq8}).
Then (\ref{contextlemma-eq7}) can be derived by (\ref{contextlemma-eq13}) below and (\ref{contextlemma-eq10}).
\begin{eqnarray}
\Gamma',\til{d}:\til{T} \vdash \sigma(y):\Gamma(y),  \mbox{ for every $y$ defined in $\Gamma$} \label{contextlemma-eq13}
\end{eqnarray}
Moreover, (\ref{contextlemma-eq13}) is valid because $\sigma$ is a $\Gamma$-to-$\Gamma'$ substitution.
This completes the case and the proof.\qedhere
\end{itemize}
\end{proof}

\ifxxremark
\noindent\emph{Remark}.~ [Some further direction? consider \underline{Moving to Section Conclusion} (temp. keep here)]
\begin{itemize}
\item \ovalbox{{\small What if allow subtyping on linearity, i.e., something like $\infer{l_iS\subtp l_iT}{S\subtp T}$ ~ ??}}
\item \ovalbox{{\small What if also allow combination of non-linear type, i.e., something like $iT\uplus oT \DEF \sharp T$, overriding operation $\uplus$:}}
{\small
\[
\begin{array}{rcll}
l_i T \uplus l_o T & \;\;\DEF\;\; & l_\sharp T \quad &  \\
T \uplus T & \;\;\DEF\;\; & T \quad & \mbox{ if $T$ is not a linear type } \\
iT \uplus oT & \;\;\DEF\;\; & \sharp T \\
\sharp T \uplus oT & \;\;\DEF\;\; & \sharp T \\
iT \uplus \sharp T & \;\;\DEF\;\; & \sharp T \\
S \uplus T & \;\;\DEF\;\; & \mathbf{error} \quad & \mbox{ otherwise } \\
\end{array}
\]}
\rc{Does it make sense; is it worth some new development on linearity? Maybe not in theory, but yes in practice.}
\end{itemize}
\fi

\subsection{Proof for Section \ref{s:typ_asy}}
\xx{First, Adjust by adding type environment to the conditions...and }
\xx{Show the following theorem}

\subsubsection{The conditions for full abstraction w.r.t. BT}
\

We reuse the conditions for BT in untyped case (Section \ref{ss:sounds_completes}).
To adapt to the case for typed $\pi$, we assume that types are used 'implicitly' in the conditions (e.g., in $\asymp$), so as to maintain succinctness.
For convenience, we reproduce the conditions in \afig \ref{f:condtions:types} for use shortly.


\begin{theorem}
Let $\qenco$ be an encoding of the $\lambda$-calculus into $\pi$-calculus with linear types, $\asymp$ \xx{(should it be parametric on a type environment $\Delta$?` seems so; see blue in this thm!)} a congruence on $\pi$-agents.
Suppose  there are a  precongruence  $\leq$ on $\pi$-agents and \xxstress{a type $T_b$
  assigned to the abstracted names of the encoding  such that $\asymp$ is constrained by a
  type environment respecting $T_b$ (i.e., typing the abstracted names of the encoding
  with $T_b$, and every \Db term $\encoding{M}{}{}$ \De 
has type $T_b\rightarrow \Diamond$)}.
If the conditions in \afig \ref{f:condtions:types} hold,  then $\qenco$ and $\asymp$ are fully abstract for BTs.
\end{theorem}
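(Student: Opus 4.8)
The plan is to recognise this statement as the typed counterpart of Theorem~\ref{t:fa}(ii), and to obtain it by re-running the abstract completeness and soundness arguments (Theorems~\ref{t:compNEW}(ii) and~\ref{t:soundNEW}(ii)) in the linearly-typed, asynchronous setting. Concretely, the conditions collected in Figure~\ref{f:condtions:types} are exactly the BT conditions of Definitions~\ref{d:faith} and~\ref{d:resp}, now read with $\asymp$ instantiated to the asynchronous typed barbed congruence $\BCTA$ (constrained by a type environment respecting $T_b$) and $\leq$ to the divergence-sensitive expansion $\expaDiv$. So the first step is to check that the two abstract theorems still hold once their relations carry types; full abstraction for BTs then follows by putting the two halves together precisely as in Theorem~\ref{t:fa}(ii).

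For completeness, the only genuinely behavioural requirement is that $\asymp$ validate the \uptoc\ technique (condition~\conref{i:sLLf} of Definition~\ref{d:faith}). The plan here is to lean on the Context Lemma (Lemma~\ref{l:context}), which lets us replace the quantification over all typed $(\Gamma/\Delta)$-contexts in $\BCTA$ by the far simpler quantification over typed parallel components $R$ and $\Delta$-to-$\Gamma$ substitutions $\sigma$ of the characterisation $\CTTA$. With this reduction in hand, the induction-on-observable-depth argument used for the may and must equivalences in Lemma~\ref{l:vali} transfers essentially verbatim, the role of $\expa$ being unchanged. The diagram chasing of Theorem~\ref{t:compNEW}(ii) then reproduces unaltered; the one extra bookkeeping point is that the candidate relation must be closed under the substitutions permitted by the typing, which is inherited --- just as in the untyped proof --- from BT equality being preserved by variable renaming, together with uniformity of the encoding (so that the free names of $\encom M$ come from the free variables of $M$).

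For soundness, the two delicate conditions are rendez-vous cancellation (Definition~\ref{d:rv_canc}) and existence of inverse contexts (condition~\conref{i:inverse} of Definition~\ref{d:resp}). Rendez-vous cancellation for $\BCTA$ is argued exactly as in the untyped case: in the processes of Definition~\ref{d:rv_canc} the initial output is the only enabled action and the input at $b$ must fire next, an argument insensitive to the typing. For inverse contexts we reuse the constructions of Lemmas~\ref{l:contexts_inverse:1}--\ref{l:contexts_inverse:3}, the additional obligation being to check that each inverse context $D_i$ is a legitimate $(\Gamma/\Delta)$-context: the abstracted parameter carries $T_b$ and the freshly introduced names $a,b,z$ receive suitable link types. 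Since the algebraic laws driving those computations (Lemma~\ref{l:comm} and the wire law) remain valid for the typed expansion, the inverse property passes from $\contrdiv$ to the typed $\geq$, and the coinductive argument of Theorem~\ref{t:soundNEW}(ii) goes through.

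The main obstacle is keeping every context manipulation inside the well-typed, asynchronous fragment: each context entering the up-to technique and each inverse context must respect linearity (in particular, the linear location names must be consumed exactly once) and must observe only outputs. This is precisely where linearity and asynchrony buy us the coarser BT equality --- the input prefix heading $\encom{\lambda x.M}$ becomes unobservable, so that unsolvables of order $\infty$ can be equated to $\encom{\Omega}$ as demanded by condition~(ii) of Theorem~\ref{t:fa} --- while the auxiliary relation $\expa$ stays the ordinary expansion and needs no types. One must therefore verify carefully that the expansion steps used to peel off contexts never appeal to linearity, which they do not, and that the reduction afforded by the Context Lemma remains applicable throughout; granting this, the remaining verifications reduce to the same routine checks as in the untyped development.
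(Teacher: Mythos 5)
Your proposal is correct and follows essentially the same route as the paper: the paper's own proof is a brief observation that types constrain only the shape of processes and play no role in reductions, so the untyped arguments for Theorems~\ref{t:compNEW}(ii) and~\ref{t:soundNEW}(ii) carry over, the only parts deserving attention being exactly the ones you single out (the \uptoc\ technique for completeness, handled via the Context Lemma~\ref{l:context}, and rendez-vous cancellation plus inverse contexts for soundness). One small remark: in the paper's application of this theorem to $\BCTA$ the auxiliary relation is the ordinary expansion $\expa$ rather than $\expaDiv$, but since the theorem is generic in $\leq$ this does not affect your argument.
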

\begin{proof}
Types stipulate the shape of a process (including the encoding of a $\lambda$ term), and
do not play a part in reductions, so the proof is conducted in a way \Db  similar \De to the case without types.

In the completeness proof, 
\Db the important part is the one about the \uptoc\ technique, whereas in the soundness
proof
the important part is the one about the context-inverse properties. 
The proofs for these  parts do not change with respect to the untyped case,
 since we use the same expansion relation $\expa$ as before. \qedhere
\De

\xx{ This is it ? ok`?`}
\end{proof}

\begin{figure}[t] 
\begin{center}
\noindent\rule{\textwidth}{.5pt}
\begin{tabular}{l}
\begin{minipage}{12cm}
\vspace{0.5em}
Let $\asymp$ and $\leq$ be relations on agents of asynchronous $\pi$-calculus with linear types. \\
\xxstress{We assume}, in addition to the untyped version, that $\asymp$ is subject to type environment that respects $T_b$, that is, the ``address" of the encoding of a $\lambda$ term is assigned this type and every encoding has type $T_b\rightarrow\Diamond$.
\end{minipage}
\\\\

\begin{minipage}{12cm}\begin{center}Completeness conditions for BT\end{center}\vspace*{-.1cm}
\begin{enumerate}
\item \label{i:sLLa:tp} 
$\asymp$ is a congruence and  $  {\asymp} \supseteq{ \geq}$; ~~~~~~~~ \xx{\rc{\large as before $\checkmark$}}

\item \label{i:expa:tp} 
$\leq$ is an expansion relation and is a
plain
 precongruence; ~~~~~~~~ \xx{\rc{\large as before $\checkmark$}}

\item \label{i:sLLf:tp} 
$\asymp$ validates the \uptoc\ technique; ~ \xx{\rc{\large jump to (Claim 4) $\checkmark$}} 

\item \label{i:sLLb:tp} 
the  variable  contexts of $\qenco$ are guarded; ~~~~~~~~ \xx{\rc{\large as before $\checkmark$}}

\item \label{i:sLLg:tp} 
 $\qenco$ and $\geq$ validate rule $ \beta$; ~~~~~~~~ \xx{\rc{\large as before $\checkmark$}}

\item \label{i:sLLh:tp} 
if  $M $ is an unsolvable of order $0$ then $\encom M \asymp \encom \Omega$; ~~~~~~~~ \xx{\rc{\large Claim 3 $\checkmark$}}

\item \label{i:sLLk:tp} 
$ \encom {M} \asymp \encom{\Omega} $ whenever $M$ is unsolvable of order $\infty$. ~~~~~~~~ \xx{\rc{\large Claim 2 $\checkmark$}}
\end{enumerate}
\end{minipage}
\\\\

\begin{minipage}{12cm}\begin{center}Soundness conditions for BT\end{center}\vspace*{-.1cm}
\begin{enumerate}
\item \label{con:lls:a:tp} 
$\asymp$  is a congruence, $\leq$  a
plain
 precongruence; \xx{\rc{\large as before $\checkmark$}}

\item \label{con:lls:b:tp} 
$ { \asymp} \supseteq {\geq}$;  \xx{\rc{\large as before $\checkmark$}}

\item \label{con:lls:c:tp} 
$ { \asymp}$ has the rendez-vous cancellation property; ~ \xx{\rc{\large jump to (Claim 5) $\checkmark$}}

\item \label{con:lls:d:tp} 
$\qenco$ and $\geq$ validate rule $\beta$;  ~~~~~~~~ \xx{\rc{\large as before $\checkmark$}}

\item \label{con:lls:e:tp} 
if $M$ is an unsolvable of order $0$, then $\encodingm{M}{}{} \asymp \encodingm{\Omega}{}{} $; ~~~~~~~~ \xx{\rc{\large Claim 3 $\checkmark$}}

\item \label{con:lls:f:tp}
the terms $\encodingm{\Omega}{}{}$, $\encodingm{x\ve M}{}{}$, $\encodingm{x\ve {M'}}{}{}$, and $\encodingm{y\ve {M''}}{}{}$ are pairwise unrelated by $\asymp$, assuming that $x \neq y$  and  that  tuples $\ve M $ and $\ve {M'} $ have different  lengths; ~~~~~~~~ \xx{\rc{\large as before $\checkmark$}}
\xx{\rc{notice variable $x$ corresponds to an observable output, which emits the linear input capability of an `address' name, so each element in $\ve{M}$ can be retrieved. }}

\item \label{i:inverse:tp}
the abstraction and  variable contexts  of $\encodingm{\,}{}{}$ have  inverse with respect to $\geq$; ~~~~~~~~ \xx{\rc{\large as before $\checkmark$}}

\item \label{con:bts:e:tp} 
$ \encom {M} \asymp \encom{\Omega} $ whenever $M$ is unsolvable of order $\infty$; ~~~~~~~~ \xx{\rc{\large Claim 2 $\checkmark$}}

\item \label{con:bts:f:tp} 
 $M$  solvable implies that the term $\encodingm{\lambda x.M}{}{}$ is unrelated by $\asymp$  to $\encodingm{\Omega}{}{}$ and to any term  of the form $\encodingm{x \ve M}{}{}$. ~~~~~~~~ \xx{\rc{\large as before $\checkmark$;}} ~~~~~~~~ \xx{\rc{notice $\encodingm{x \ve M}{}{}$ has an observable output action.}}

\end{enumerate}
\vspace{0.5em}
\end{minipage}
\end{tabular}
\noindent\rule{\textwidth}{.5pt}
\end{center}

\caption{The conditions for BT under types}\label{f:condtions:types}
\end{figure}


\finish{in Figure~\ref{f:condtions:types} added ``plain'' twice }

\subsubsection{Proof of Theorem \ref{t:resCBNlin}}
\label{sss:resCBNlin}

Now we prove Theorem \ref{t:resCBNlin}. Before beginning, we first present the encoding (\afig \ref{f:example-lazy}.a) rendered in the typed calculus $\pi^l$, as shown in \afig \ref{f:example-milner-linear}, whose design idea is explained in the course of proving the theorem.
\begin{figure}[t] 
{\small
\centering
 \begin{minipage}{0.95\textwidth}%
\centering
\[
\begin{array}{rcl}
\encom{ \lambda     x . M } & \defin &  \abs {p} \inp{ p}{x, q}.  \enco{M}{q} \\[3pt]
\encom{x} & \defin &    \abs {p} \out{x}{p}  \\[3pt]
\encom{M N } & \defin &   \abs {p} (\res{r:T_b', x:\sharp T_b}) \Big(\enco{M}{r}  \,|\,  \out r {x, p} \,|\, \\[3pt]
& &   !  \inp x q .  \enco N q \Big) \quad\mbox{(for  $x$ fresh; \xxstress{$T_b'\DEF l_\sharp (\sharp T_b, T_b), T_b\DEF \mu X. l_i (\sharp X, X)$})} 
\end{array}
\]
\end{minipage}%
}
\caption{Milner's encoding under linear typing} \label{f:example-milner-linear}
\end{figure}
\sepp

\begin{proof}[Proof of Theorem \ref{t:resCBNlin}]
\xx{(directly work on barbed congruence or (labelled) bisimulation (if any); note bisimilarity implies barbed congruence but the other direction does not hold \cite{SW01a}; perhaps try context lemma (Lemma \ref{l:context}) (ref. P.328 of \cite{SW01a}))}

We recall in \afig \ref{f:condtions:types} the soundness and completeness conditions for
BT, 
\Db adapted to typed calculi. \De As before, $\asymp$ and $\leq$ are relations on
agents. Since we are now using typed $\pi$, these relations \Db are adapted \De accordingly. We begin with some explanation and then proceed with the analysis of the conditions.
\Db
\begin{itemize}
\item To accommodate types, we designate the \emph{basic} type $T_{b}$, which is defined
  as  $T_{b} \DEF \mu X. l_i (\sharp X, X)$. When encoding a $\lambda$ term, $T_b$ is used to type the
  \emph{address} of its encoding (viz., the parameterized name of the $\lambda$ term's
  encoding). 
That is, $T_{b}$ is assumed, in the type environment, to be the type of the names used to
instantiate the address of the encoding of $\lambda$ terms. 
\De

\item 
\Db
For $\asymp$, we use $\BCTA$; this is parametric on a type environment $\Delta$ 
(assigning the type $T_{b}$ 
to the names used as
addresses in the  encoding of  $\lambda$ terms).
\De
  \xx{ (should it be parametric on a type
  environment $\Delta$?` Seems so done!)}; 
\item For $\leq$, we reuse the usual expansion $\expa$.
\item The original encoding (\afig \ref{f:example-lazy}.a) is modified using types, as shown in \afig \ref{f:example-milner-linear}. For every $\lambda$ term $M$, its encoding $\encoding{M}{}{}$ is of the type $T_b\rightarrow\Diamond$;
\item We know that if two processes are related by a (finer) untyped behavioral equivalence, then they are also related by the typed one, e.g., $\BCTA$. 
\end{itemize}


We now analyze the satisfaction of the conditions under (linear) type environment. Those not mentioned can be done as for the untyped case.
For instance, in soundness condition \conref{con:lls:f:tp}, variable $x$ corresponds to an observable output, which emits the linear input capability of an 'address' name, so each element in $\ve{M}$ can be retrieved. Then the arguments are similar to the untyped case.
Below we proceed with a number of claims.
\begin{itemize}
\item[] \textbf{Claim 1}. If $\Gamma \Vdash p:T$ in which $T\DEF T_b$ (i.e., $T\DEF \mu X. l_i (\sharp X, X)$), then $\Gamma \Vdash \encodingm{\lambda x.\Omega}{}{}\lrangle{p} \BCTA \encodingm{\Omega}{}{}\lrangle{p}$.
\item[] \textbf{Claim 2} (\xxstress{completeness condition \conref{i:sLLk:tp} and soundness condition \conref{con:bts:e:tp}}). If $\Gamma \Vdash p:T$ in which $T\DEF T_b$ (i.e., $T\DEF \mu X. l_i (\sharp X, X)$), then $\Gamma \Vdash \encodingm{M}{}{}\lrangle{p} \BCTA \encodingm{\Omega}{}{}\lrangle{p}$ for every unsolvable $M$ of order $\infty$.
\end{itemize}

{Claim 1} is valid because in Milner's encoding (\afig \ref{f:example-lazy}.a), the
abstracted name (i.e., the address) of an encoded $\lambda$ term has essentially the
\emph{linear} type in its very first place; that is, each such address is used only in one
communication should there be an application of $\lambda$. Thus, if we set up a type
environment that stipulates precisely the linearity of the address name, say $p$, an
observer from outside will at most be able to obtain the linear output capability of $p$.
Therefore, even if the input prefix in $\encodingm{\lambda x.\Omega}{}{}$ would be
observed by providing an output like $\out p {\seq{d}}$, the resulting process on the
other side, i.e., $\encodingm{\Omega}{}{}\lrangle{p} \para \out p {\seq{d}}$, would not be
observable at all, because $p$ has been exhausted in its capabilities. Since a $\lambda$
term may spawn local addresses, the linear type of $p$ has to be a recursive 
\Db
type.
\De 
{Claim 2} can be similarly analyzed.

\Xb
\Xe
\Db\finish{i do not understand the sentence above: what are exactly the changes made because
  of that claim?}  
\De

\begin{itemize}
\item[] \textbf{Claim 3} (\xxstress{completeness condition \conref{i:sLLh:tp} and soundness condition \conref{con:lls:e:tp}}). If $M $ is an unsolvable of order $0$ and $\Gamma \Vdash p:T_b$, then $\Gamma \Vdash \encom M \lrangle{p} \BCTA \encom \Omega \lrangle{p}$.
\end{itemize}
{Claim 3} is true because in that case, $\encom M \lrangle{p} \approx \encom \Omega \lrangle{p}$, where we recall $\approx$ is the bisimilarity.

\begin{itemize}
\item[] \textbf{Claim 4} (\xxstress{completeness condition \conref{i:sLLf:tp}}). Assume a type environment $\Gamma$. Then (under $\Gamma$) $\BCTA$ validates the \uptoE\ technique.
\end{itemize}
{Claim 4} states somewhat that the up-to technique can be transplanted to the typed case.
 \xx{[\bc{analyzing Claim 4}]} This result is like that for bisimulation from \cite{SW01a}. The part the expansion plays is similar. We thus sketch the argument.
We first recall Definition \ref{d:uniCAsyn} below.

\begin{flushleft}
{\small
\begin{tabular}{l}
\begin{minipage}{0.97\textwidth}
Definition \ref{d:uniCAsyn}. ~ Relation  $\asymp$ {\em validates the \uptoc\ technique} if
for any symmetric relation  $\R$ on $\pi$-processes
we have $\R \subseteq {\asymp}$ whenever
for any pair $(P,Q)\in \R $, if  $P \stm \mu P'$ then  $Q \Arcap \mu
Q'$ and there  are processes $ \til P,  \til Q$ and
a  context  $\qct$  such  that
  $P' \geq  \ct {\til P} $, $Q' \geq  \ct {\til Q}$,
and, if $n\geqNAT 0 $ is the length of the tuples $\til P$ and $\til Q$,
 at least one of the following two statements is true, for each $i \leqNAT n$:
(1) $P_i \asymp Q_i$;
(2) $P_i \RR Q_i$ and,
 if $[\cdot]_i$ occurs under an input in $C$,
also $P_i\sigma \RR  Q_i\sigma$  for all substitutions  $\sigma$.
\end{minipage}
\end{tabular}
}
\end{flushleft}
\sepp

Let $\R$ be  a relation as in Definition~\ref{d:uniCAsyn}, where $\asymp $ is $\BCTA$ and $\leq$ is $\expa$. 
Define relation $\mathcal{S}$ as below.
\[
{\mathcal{S}} \;\defin\; \asymp \,\cup\, \{(P_1,P_2) \, |\, P_i \contr \ct{\til{P_i}}
(i=1,2) \mbox{ and } (\til {P_1}, \til{P_2}) \in \R \,\cup\!\asymp
 \}
\]
Note $(\til {P_1}, \til{P_2}) \in \mathcal{R}$ stands for $(P_1^k, P_2^k) \in \mathcal{R}$ for all $P_i^k\in \til {P_i}$($i=1,2$), $k\leqslant m$ and $m$ is the number of holes in $C$.
Obviously $\R \subseteq \mathcal{S}$, so
\DSb
we have to show that $\mathcal{S} \subseteq \asymp$ (that is,
$\mathcal{S} \subseteq \BCTA$); one exploits the
characterization of
$\BCTA$ as $ \CTTA$ given in  the Context Lemma \ref{l:context}.
\DSe
The argument is routine analysis, 
except that the context $C$ and relevant processes in $\mathcal{S}$ should be well-typed, 
\Db according to \De
 the type environment $\Gamma$ designated upon $\BCTA$.
Yet since type information does not have any effect on reductions, the
analysis is similar to the untyped
\DSb
case (e.g. for $\approx$);
\DSe
see also \cite{SW01a}. 
\xx{(TO add more details to the analysis of Claim 4 ? ...Seems NO...)}


\begin{itemize}
\item[] \textbf{Claim 5} (\xxstress{soundness condition \conref{con:lls:c:tp}}). If
\[
\Gamma,a:S_1,(\til{c}-\til{b}):\til{T} \Vdash (\res {\til b:\til{S}}) ( \out a {\til c} | b(r). P ) \BCTA  (\res {\til b:\til{S}}) ( \out a {\til c} | b(r).  Q)
\] where $b \in \til b \subseteq \til c $, and $a,b$ are fresh and neither of them is of linear type $l_\sharp T_2$ for some $T_2$, then (for some $T_1$)
\[
\Gamma,r:T_1 \Vdash P \BCTA Q
\]
\end{itemize}
\xx{[\bc{analyzing Claim 5}]} The cases when $a$ or $b$ is unobservable is not
possible, which is why we assume they do not have the linear type $l_\sharp
T_2$. Then the argument is similar to the case of $\approx$. That is, feed the
two processes a concurrent (well-typed) process $a(\ve{x}).\out{b}q$ (in which
$b\in \ve{x}$), and argue as for $\approx$ to obtain the equivalence between $P$
and $Q$. \qedhere

\end{proof}


\end{document}